\def\titlerunning#1{\gdef\titrun{#1}}
\def\author#1{\gdef\autrun{\def\and{\unskip, }#1}\gdef\@author{#1}}
\def\address#1{{\def\and{\\\hspace*{18pt}}\renewcommand{\thefootnote}{}%
\footnote {#1}}%
\markboth{\autrun}{\titrun}}
\def\email#1{E-mail: #1}
\def\subjclass#1{\par\medskip
\noindent\textbf{Mathematics Subject Classification (2010).} #1}
\def\keywords#1{\par\medskip
\noindent\textbf{Keywords.} #1}
\newenvironment{acknowledgement}{\textbf{Acknowledgement.}}
\newtheorem{theorem}{Theorem}[section]
\newtheorem{proposition}[theorem]{Proposition}
\newtheorem{lemma}[theorem]{Lemma}
\theoremstyle{definition}
\newtheorem{definition}[theorem]{Definition}
\newtheorem{remark}[theorem]{Remark}
\numberwithin{equation}{section}
\DeclareMathOperator{\supp}{supp}
\DeclareMathOperator{\tr}{tr}
\DeclareMathOperator{\dist}{dist}
\DeclareMathOperator{\diam}{diam}
\DeclareRobustCommand\widecheck[1]{{\mathpalette\@widecheck{#1}}}
\def\@widecheck#1#2{%
    \setbox\z@\hbox{\m@th$#1#2$}%
    \setbox\tw@\hbox{\m@th$#1%
       \widehat{%
          \vrule\@width\z@\@height\ht\z@
          \vrule\@height\z@\@width\wd\z@}$}%
    \dp\tw@-\ht\z@
    \@tempdima\ht\z@ \advance\@tempdima2\ht\tw@ \divide\@tempdima\thr@@
    \setbox\tw@\hbox{%
       \raise\@tempdima\hbox{\scalebox{1}[-1]{\lower\@tempdima\box
\tw@}}}%
    {\ooalign{\box\tw@ \cr \box\z@}}}
\newcommand{\pr}{\prime}
\newcommand\R{\mathbb R}
\newcommand\N{\mathbb N}
\newcommand\C{\mathbb C}
\newcommand\Z{\mathbb Z}
\renewcommand\P{\mathbb P}
\newcommand\e{\mathrm{e}}
\newcommand\eps{\varepsilon}
\newcommand\vphi{\varphi}
\newcommand\La{\Lambda}
\newcommand{\abs}[1]{\left\lvert #1 \right\rvert}
\newcommand{\norm}[1]{\left\lVert #1 \right\rVert}
\newcommand{\set}[1]{\left\{ #1 \right\}}
\newcommand\beq{\begin{equation}}
\newcommand\eeq{\end{equation}}
\newcommand{\qtx}[1]{\quad\text{#1}\quad}
\newcommand{\sqtx}[1]{\;\text{#1}\;}
\begin{document}
\titlerunning{Eigensystem Bootstrap Multiscale Analysis}
\title{Eigensystem Bootstrap Multiscale Analysis for the Anderson Model}

\author{Abel Klein\thanks{A.K. was partially  supported  by the NSF under grant DMS-1301641.} \and   C.S. Sidney Tsang
\thanks{C.S.S.T.  was supported  by the NSF under grant DMS-1301641.}}

\date{}
\maketitle

\address{{University of California, Irvine;
Department of Mathematics;
Irvine, CA 92697-3875,  USA.}
\email{aklein@uci.edu, tsangcs@uci.edu.}}

\begin{abstract}
We use a bootstrap argument to enhance the eigensystem multiscale analysis, introduced by Elgart and Klein for proving localization for the Anderson model at high disorder.  The eigensystem multiscale analysis studies finite volume
eigensystems, not finite volume  Green's functions. It yields pure point spectrum with exponentially decaying eigenfunctions and dynamical localization. The starting hypothesis for the eigensystem bootstrap multiscale analysis only requires the verification of
polynomial decay of the finite volume eigenfunctions, at some sufficiently large scale, with some minimal probability independent of the scale. It yields exponential
localization of finite volume eigenfunctions in boxes of side $L$, with the eigenvalues and eigenfunctions labeled by the sites of the box, with probability higher
than $1-\mathrm{e}^{-L^\xi}$, for any desired $0<\xi<1$.
\end{abstract}

\subjclass{Primary 82B44; Secondary 47B80, 60H25, 81Q10.}
\keywords{Anderson localization, Anderson model, eigensystem multiscale analysis}
\tableofcontents

\addcontentsline{toc}{section}{Introduction}
\section*{Introduction}

The  eigensystem multiscale analysis  is a new approach for proving  localization  for the Anderson model  introduced by Elgart and Klein \cite{EK}.
The usual proofs of localization for random Schr\"odinger operators are based on the study of finite volume Green's functions
\cite{FS,FMSS,Dr,DK,Sp,CH,FK,GKboot,Kle,BK,GKber,AM,A,ASFH,AENSS}. In contrast to the usual strategy, the eigensystem multiscale analysis is based on finite volume
 eigensystems, not finite volume Green's functions. It treats all energies of the finite volume operator at the same time, establishing level spacing and  localization of  eigenfunctions in a fixed box with high probability.
  A new feature is the  labeling of the eigenvalues and eigenfunctions  by the sites of the box.

In this paper we use a bootstrap argument as in Germinet and Klein \cite{GKboot} to enhance the eigensystem multiscale analysis. It yields exponential localization of finite volume eigenfunctions
in boxes of side $L$, with the eigenvalues and eigenfunctions labeled by the sites of the box, with probability higher than $1-\mathrm{e}^{-L^\xi}$, for any $0<\xi<1$. The starting
hypothesis for the eigensystem bootstrap multiscale analysis only requires the verification of polynomial decay of the finite volume eigenfunctions, at some sufficiently large scale,
with some minimal probability independent of the scale. The advantage of the bootstrap multiscale analysis is that from the same starting hypothesis we get conclusions that are valid for any $0<\xi<1$.

We consider the Anderson model  $H_{\eps,\omega}=-\varepsilon\Delta+V_\omega$ on  
 $\ell^2(\Z^d)$ (see Definition~\ref{defanderson}; $\eps >0$ is the inverse of the disorder parameter).  Multiscale analyses study finite volume operators $H_{\eps,\omega,\La}$, the restrictions  of $H_{\eps,\omega}$ to  finite  boxes $\La$.  The objects of interest for the  eigensystem multiscale analysis are finite volume
 eigensystems. An eigensystem $\set{(\vphi_j,\lambda_j)}_{j\in J}$ for $H_{\eps,\omega,\La}$ consists of eigenpairs 
$(\vphi_j,\lambda_j)$, where  $\lambda_j$ is an eigenvalue for  $H_{\eps,\omega,\La}$  and $\vphi_j$ is a corresponding normalized eigenfunction,  such that $\set{\vphi_j}_{j\in J}$ is an orthonormal basis for the finite dimensional Hilbert space $\ell^2(\La)$.  
Elgart and Klein \cite{EK} called a box $\La$   localizing for $H_{\eps,\omega}$  if the eigenvalues  of $H_{\eps,\omega,\La}$ satisfy a level spacing condition, and there exists an eigensystem for $H_{\eps,\omega,\La}$ indexed by the sites in the box,  
 $\set{(\vphi_x,\lambda_x)}_{x\in\La}$, with the eigenfunctions
 $\set{\vphi_x}_{x\in\La}$  exhibiting exponential localization around the label, i.e.,
$\abs{\vphi_x(y)} \le \e^{-m\norm{x-y}}$ for $y\in \La$ distant from $x$. They  showed \cite[Theorem~1.6]{EK} that, fixing $\xi \in (0,1)$,   at high disorder ($\eps \ll1$) boxes of (sufficiently large) side $L$ are localizing with probability $\ge 1 - \e^{-L^\xi}$, yielding all the usual forms of localization \cite[Theorem~1.7 and Corollary~1.8]{EK}.  More precisely, it is shown in \cite{EK} that for $\xi \in (0,1)$ there exists $\eps_\xi >0$, decreasing as $\xi$ increases, 
and for $\eps>0$ a scale $L_\eps$, increasing as $\eps$ decreases, such that for  $0<\eps \le \eps_\xi$ and  $L\ge L_{\eps_\xi}$ boxes of side $L$ are localizing for $H_{\eps,\omega}$ with probability $\ge 1 - \e^{-L^\xi}$.

 We use the ideas of  Germinet and Klein \cite{GKboot}   to perform a bootstrap multiscale analysis for finite volume  eigensystems (Theorem~\ref{mainthm}).  To start the multiscale analysis, we only have to verify a statement
 of polynomial localization  of the eigenfunctions with some minimal probability independent of the scale.  We conclude  that at high disorder boxes of side $L$ are localizing with probability $\ge 1 - \e^{-L^\xi}$ for all  $\xi \in (0,1)$.   It follows  (Theorem~\ref{maincor}) that there exists $\eps_0>0$, and for each  $\xi \in (0,1)$ there exists a scale $L_{\eps_0,\xi}$, such that for all  $0<\eps \le \eps_0$ and  $L\ge L_{\eps_0,\xi}$ boxes of side $L$ are localizing for $H_{\eps,\omega}$ with probability $\ge 1 - \e^{-L^\xi}$.  How large $L$ needs to be  depends on $\xi$, 
  but the required amount of disorder is independent of $\xi$.  In addition, if we have the conclusions of \cite[Theorem~1.6]{EK} for a fixed $\xi \in (0,1)$,  it follows from Theorem~\ref{mainthm} that for all $\xi^\pr \in (0,1)$ there exists a scale $L_{\xi^\pr}$, such that for all $0<\eps \le \eps_\xi$  and  $L\ge L_{\xi^\pr}$ boxes of side $L$ are localizing for $H_{\eps,\omega}$ with probability $\ge 1 - \e^{-L^{\xi^\pr}}$. (Note that $\eps_\xi$ depends on the fixed $\xi$ but does not depend on $\xi^\pr$.)

  Recently,  Elgart and Klein \cite{EK2} extended  the eigensystem multiscale analysis   to  establish  localization   for the Anderson model in an energy interval. This extension yields  localization at fixed disorder  on an interval at the edge of the spectrum (or in  the vicinity of a spectral gap), and  at  a fixed interval at the bottom of the spectrum  for sufficiently  high disorder.  We expect that our bootstrap eigensystem multiscale analysis can also be extended to energy intervals.

Our main definitions and resuts are stated in Section~\ref{secmaindr}.  Theorem~\ref{mainthm} is the bootstrap eigensystem multiscale analysis. Theorem~\ref{maincor} gives the high disorder result for the Anderson model, and yields Theorem~\ref{thmloc}, which encapsulates localization for the Anderson model at high disorder.  Theorem~\ref{mainthm} is proven  in Section~\ref{sectbmsa}, and Theorem~\ref{maincor} is proven in Section~\ref{sectinit}.  In Section~\ref{secprel} we provide notation,  definitions and lemmas  for the proof of the  bootstrap eigensystem multiscale analysis.  In Section~\ref {secprob} we state the probability estimates for level spacing used in the proof of the  bootstrap eigensystem multiscale analysis.

\section{Main definitions and  results}\label{secmaindr}
We consider the Anderson model in the following form.

\begin{definition}\label{defanderson}
The Anderson model is the random Schr\"odinger operator
\beq
H_{\varepsilon,\omega}:=-\varepsilon\Delta+V_\omega\qtx{on}\ell^2(\Z^d),
\eeq
where $\varepsilon>0$;
$\Delta$ is the (centered) discrete Laplacian:
 \beq
(\Delta\varphi)(x):=\sum_{y\in\Z^d,|y-x|=1}\varphi(y)\qtx{for}\varphi\in\ell^2(\Z^d);
\eeq
$V_\omega(x)=\omega_x$ for $x\in\Z^d$, where
$\omega=\{\omega_x\}_{x\in\Z^d}$ is a family of independent identically distributed random
variables, with a non-degenerate probability distribution $\mu$ with bounded support and H\"older continuous of order $\alpha\in(\frac{1}{2},1]$:
\beq
S_\mu(t)\leq Kt^\alpha\qtx{for all}t\in[0,1],
\eeq
 with $S_\mu(t):=\sup_{a\in\R}\mu\{[a, a+t]\}$  the concentration function of the measure $\mu$ and $K$  a constant.
\end{definition}

Given $\Theta\subset\Z^d$, we let $T_\Theta=\chi_\Theta T\chi_\Theta$ be the restriction of the bounded operator $T$ on  $\ell^2(\Z^d)$ to $\ell^2(\Theta)$.
If  $\Phi\subset\Theta\subset\Z^d$,  we identify $\ell^2(\Phi)$ with a subset of $\ell^2(\Theta)$ by extending functions on $\Phi$ to functions on $\Theta$ that are identically $0$ on $\Theta\setminus\Phi$.
We write $\varphi_{\Phi}=\chi_{\Phi}\varphi$ if $\varphi$ is a function on $\Theta$. We let $\|\varphi\|=\|\varphi\|_2$ and  $\|\varphi\|_\infty=\max_{y\in\Theta}|\varphi(y)|$
for $\varphi\in\ell^2(\Theta)$.

For  $x=(x_1,x_2,\ldots,x_d)\in\R^d$ we set $\|x\|=|x|_\infty= \max_{j=1,2,\ldots,d}|x_j|$,
$|x|=|x|_2=\left(\sum_{j=1}^dx_j^2\right)^{\frac{1}{2}}$, and $|x|_1=\sum_{j=1}^d|x_j|$.  Given $\Xi\subset\R^d$, we let $\diam\Xi=\sup_{x,y\in\Xi}\|y-x\|$ denote its diameter, and  set $\dist(x,\Xi)=\inf_{y\in\Xi}\|y-x\|$ for $x\in\R^d$.

We use boxes in $\Z^d$ centered at points in $\R^d$. The box in $\Z^d$ of side $L>0$ centered at $x\in\R^d$ is given by
\beq
\Lambda_L(x)=\Lambda^\R_L(x)\cap\Z^d, \qtx{where} \Lambda^\R_L(x)=\left\{y\in\R^d;\|y-x\|\leq\tfrac{L}{2}\right\}.
\eeq
We write  $\Lambda_L$ to denote a box $\Lambda_L(x)$ for some $x\in\R^d$. We have
$(L-2)^d<|\Lambda_L|\leq(L+1)^d$ for $L\geq2$, where for  a set $\Theta \subset \Z^d$ we let $\abs{\Theta}$ denote its cardinality.

The following definitions are for a fixed discrete Schr\"odinger operator $H_\varepsilon$. We omit $\varepsilon$ from the notation (i.e., we write $H$ for $H_\varepsilon$, $H_\Theta$
for $H_{\varepsilon,\Theta}$) when it does not lead to confusion. We always consider scales $L\geq200$, and, for $\tau\in(0,1)$, set
\beq
L'=\left\lfloor\tfrac{L}{20}\right\rfloor\qtx{and}L_\tau=\lfloor L^\tau\rfloor.
\eeq

For fixed $q>0$, $\beta,\tau\in(0,1)$, we have the following definitions:
\begin{definition}\label{defallloc}
Let $\Lambda_L$ be a box, $x\in\Lambda_L$, and $\varphi\in\ell^2(\Lambda_L)$ with $\|\varphi\|=1$. Then:
\begin{enumerate}
\item[(i)]Given $\widetilde{\theta}>0$, $\varphi$ is said to be $(x,\widetilde{\theta})$-polynomially localized if
\beq\label{defploc}
|\varphi(y)|\leq L^{-\widetilde{\theta}}\quad\text{for all}\quad y\in\Lambda_L\qtx{with}\|y-x\|\geq L'.
\eeq
\item[(ii)]Given $\widetilde{s}\in(0,1)$, $\varphi$ is said to be $(x,\widetilde{s})$-subexponentially localized if
\beq\label{defsloc}
|\varphi(y)|\leq\e^{-L^{\widetilde{s}}}\quad\text{for all}\quad y\in\Lambda_L\qtx{with}\|y-x\|\geq L'.
\eeq
\item[(iii)]Given $m>0$, $\varphi$ is said to be $(x,m)$-localized if
\beq\label{defloc}
|\varphi(y)|\leq\e^{-m\|y-x\|}\quad\text{for all}\quad y\in\Lambda_L\qtx{with}\|y-x\|\geq L_\tau.
\eeq
\end{enumerate}
\end{definition}

\begin{definition}
Let $R>0$, and $\Theta\subset\Z^d$ be a finite set such that all eigenvalues of $H_\Theta$ are simple (i.e., $|\sigma(H_\Theta)|=|\Theta|$). Then:
\begin{enumerate}
\item[(i)]$\Theta$ is called $R$-polynomially level spacing for $H_\Theta$ if $|\lambda-\lambda'|\geq R^{-q}$ for all $\lambda,\lambda'\in\sigma(H_\Theta),\lambda\neq\lambda'$.
\item[(ii)]$\Theta$ is called $R$-level spacing for $H_\Theta$ if $|\lambda-\lambda'|\geq\e^{-R^\beta}$ for all $\lambda,\lambda'\in\sigma(H_\Theta),\lambda\neq\lambda'$.
\end{enumerate}
When $\Theta=\Lambda_L$, a box,  and $R=L$, we will just say that $\Lambda_L$ is polynomially level spacing for $H_{\Lambda_L}$, or  $\Lambda_L$ is level spacing for $H_{\Lambda_L}$.\end{definition}

Note that $R$-polynomially level spacing implies $R$-level spacing for sufficiently large $R$.

Given $\Theta\subset\Z^d$, $(\varphi,\lambda)$ is called an eigenpair for $H_\Theta$ if $\varphi\in\ell^2(\Theta)$, $\lambda\in\R$ with $\|\varphi\|=1$, and $H_\Theta\varphi=\lambda\varphi$
(i.e., $\lambda$ is an eigenvalue for $H_\Theta$ with a corresponding normalized eigenfunction $\varphi$). A collection $\{(\varphi_j,\lambda_j)\}_{j\in J}$ of eigenpairs for $H_\Theta$
is called an eigensystem for $H_\Theta$ if $\{\varphi_j\}_{j\in J}$ is an orthonormal basis for $\ell^2(\Theta)$. We may rewrite the eigensystem as $\{(\psi_\lambda,\lambda)\}_{\lambda\in\sigma(H_\Theta)}$
if all eigenvalues of $H_\Theta$ are simple.

\begin{definition}\label{deflocbox}
Let $\Lambda_L$ be a box. Then:

\begin{enumerate}
\item
Given $\widetilde{\theta}>0$, $\Lambda_L$ will be called $\widetilde{\theta}$-polynomially localizing (PL) for $H$ if the following holds:
\begin{enumerate}
\item $\Lambda_L$ is polynomially level spacing for $H_{\Lambda_L}$.
\item There exists a $\widetilde{\theta}$-polynomially localized eigensystem for $H_{\Lambda_L}$, that is, an eigensystem $\{(\varphi_x,\lambda_x)\}_{x\in\Lambda_L}$ for $H_{\Lambda_L}$ such that $\varphi_x$
is $(x,\widetilde{\theta})$-polynomially localized for all $x\in\Lambda_L$.
\end{enumerate}

\item
Given $m^\ast>0$, $\Lambda_L$ will be called $m^\ast$-mix localizing (ML) for $H$ if the following holds:
\begin{enumerate}
\item$\Lambda_L$ is polynomially level spacing for $H_{\Lambda_L}$.
\item There exists an $m^\ast$-localized eigensystem for $H_{\Lambda_L}$, that is, an eigensystem $\{(\varphi_x,\lambda_x)\}_{x\in\Lambda_L}$ for $H_{\Lambda_L}$ such that $\varphi_x$
is $(x,m^\ast)$-localized for all $x\in\Lambda_L$.
\end{enumerate}

\item
Given $\widetilde{s}\in(0,1)$, $\Lambda_L$ will be called $\widetilde{s}$-subexponentially localizing (SEL) for $H$ if the following holds:
\begin{enumerate}
\item $\Lambda_L$ is level spacing for $H_{\Lambda_L}$.
\item There exists an $\widetilde{s}$-subexponentially localized eigensystem for $H_{\Lambda_L}$, that is, an eigensystem $\{(\varphi_x,\lambda_x)\}_{x\in\Lambda_L}$ for $H_{\Lambda_L}$ such that $\varphi_x$
is $(x,\widetilde{s})$-subexponentially localized for all $x\in\Lambda_L$.
\end{enumerate}

\item
Given $m>0$, $\Lambda_L$ will be called $m$-localizing (LOC) for $H$ if the following holds:
\begin{enumerate}
\item $\Lambda_L$ is level spacing for $H_{\Lambda_L}$.
\item  There exists an $m$-localized eigensystem for $H_{\Lambda_L}$.
\end{enumerate}

\end{enumerate}
\end{definition}

\begin{remark}\label{rmk1}
It follows immediately from the definition that given $\widetilde{s}\in(0,1)$,
\beq
\Lambda_L\sqtx{is}m^\ast\text{-mix localizing}\;\Longrightarrow\;\Lambda_L\sqtx{is}\left(1-\frac{\log\frac{40}{m^\ast}}{\log L}\right)\text{-SEL}
\; \Longrightarrow\; \Lambda_L\sqtx{is}\widetilde{s}\text{-SEL},
\eeq
for sufficiently large $L$. (We consider $m^\ast<40$.)
\end{remark}

We now state the bootstrap multiscale analysis. We will use $C_{a,b,\ldots}$, $C'_{a,b,\ldots}$, $C(a,b,\ldots)$, etc., to denote a finite constant depending on the
parameters $a,b,\ldots$. Note that $C_{a,b,\ldots}$ may denote different constants in different equations, and even in the same equation. We will omit the dependence on $d$ and $\mu$ from the notation.

Given $\theta>\left(\tfrac{6}{2\alpha-1}+\tfrac{9}{2}\right)d$ and $0<\xi<1$, we introduce the following parameters:
\begin{itemize}
\item  We fix $q,p,\gamma_1$ such that
\begin{gather}\label{thetaqp}
\tfrac{3d}{2\alpha-1}<q<\tfrac{1}{2}\left(\theta-\tfrac{9}{2}d\right),\quad0<p<(2\alpha-1)q-3d,
\\ \text{and}\quad1<\gamma_1<\min\left\{1+\tfrac{p}{p+2d},\tfrac{2\theta-4d}{5d+4q}\right\},\notag
\end{gather}
and note that
\beq
\theta>2d+\gamma_1\left(\tfrac{5d}{2}+2q\right)>\tfrac{9d}{2}+2q
\eeq
\item We fix $\zeta,\beta,\gamma,\tau$ such that
\begin{gather}\label{zetabetagammatau}
0<\xi<\zeta<\beta< \tfrac 1 \gamma<1<\gamma<\sqrt{\tfrac\zeta\xi},\\    \text{and}\quad
\max\left\{\tfrac{1+\gamma_1}{2\gamma_1},\tfrac{1+\gamma\beta}{2},\tfrac{(\gamma-1)\beta+1}{\gamma}\right\}<\tau<1,\notag
\end{gather}
and note that
\begin{gather}
\tfrac{1}{\gamma_1}<1-\tau+\tfrac{1}{\gamma_1}<\tau,\qtx{and}
\\\nonumber0<\xi<\xi\gamma^2<\zeta<\beta<\tfrac{\tau}{\gamma}<\tfrac{1}{\gamma}<\tau<1<\tfrac{1-\beta}{\tau-\beta}<\gamma<\tfrac{\tau}{\beta}.
\end{gather}
\item We fix $s$ such that
\beq\label{sfix}
\max\left\{\gamma\beta,1-2\gamma\left(\tau-\tfrac{1+\gamma\beta}{2}\right)\right\}<s<1,
\eeq
and note that
\beq
0<\zeta<\beta<\gamma\beta<s<1\qtx{and}1-\tau+\tfrac{1-s}{\gamma}<\tau-\gamma\beta.
\eeq
\item We also let
\beq
\widetilde{\zeta}=\tfrac{\zeta+\beta}{2}\in(\zeta,\beta),\quad\widetilde{\tau}=\tfrac{1+\tau}{2}\in(\tau,1)\qtx{and}L_{\widetilde{\tau}}=\lfloor L^{\widetilde{\tau}}\rfloor.
\eeq
\end{itemize}
In what follows, given   $\theta>\left(\tfrac{6}{2\alpha-1}+\tfrac{9}{2}\right)d$,  we   fix
 $q,p,\gamma_1$ as in \eqref{thetaqp}, and then, given $0<\xi<1$, we fix
 $\zeta,\beta,\gamma,\tau$  as in \eqref{zetabetagammatau}.   We use   Definitions~\ref{defallloc}--\ref{deflocbox} with these fixed   $q,\beta, \tau$, which we omit from the dependence of the constants.

\begin{theorem}\label{mainthm}
Let $\theta>\left(\tfrac{6}{2\alpha-1}+\tfrac{9}{2}\right)d$ and $\varepsilon_0>0$.  There exists a finite scale $\mathcal{L}(\varepsilon_0,\theta)$ with the following property:
Suppose for some  $\eps \in (0,\varepsilon_0]$,  $L_0\geq\mathcal{L}(\varepsilon_0,\theta)$, and $0\leq P_0<\tfrac{1}{2(800)^{2d}}$,  we have
\beq\label{hypmain}
\inf_{x\in\R^d}\P\{\Lambda_{L_0}(x)\sqtx{is}\theta\text{-polynomially localizing for}\;H_{\varepsilon,\omega}\}\geq1-P_0.
\eeq
Then, given $0<\xi<1$, we can find a finite scale $\widetilde{L}=\widetilde{L}(\varepsilon_0,\theta,\xi,L_0)$ and $m_\xi=m(\xi,\widetilde{L})>0$ such that
\beq\label{resmain}
\inf_{x\in\R^d}\P\{\Lambda_L(x)\sqtx{is}m_\xi\text{-localizing for}\;H_{\varepsilon,\omega}\}\geq1-\e^{-L^\xi}\sqtx{for all}L\geq\widetilde{L}.
\eeq
\end{theorem}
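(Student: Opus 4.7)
The plan is to bootstrap in the style of Germinet--Klein \cite{GKboot}, but adapted to eigensystems rather than Green's functions as in Elgart--Klein \cite{EK}. I run a sequence of multiscale inductions, each taking the output of the previous stage as its input hypothesis, strengthening simultaneously the probability estimate and the decay rate of the eigenfunctions. The parameter hierarchy \eqref{thetaqp}--\eqref{sfix} is arranged precisely so that each of these inductions closes.

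\textbf{Stage 1 (weak to polynomial probability).} Starting from \eqref{hypmain}, I run a multiscale induction along the geometric sequence $L_{k+1}=L_k^{\gamma_1}$ that upgrades the failure probability from $P_0$ to $L^{-p}$ while keeping the conclusion at the level of $\theta$-PL. At each step the big box $\Lambda_{L_{k+1}}$ is covered by boxes of side $L_k$; the probability estimates of Section~\ref{secprob} are combined with an eigenfunction propagation identity (to be set up in Section~\ref{secprel}) to propagate polynomial decay from sub-boxes to the big box, and a union bound controls the bad event. The smallness $P_0<\tfrac{1}{2(800)^{2d}}$ is used only at the base step to ensure that two disjoint sub-boxes within a fixed covering of a box of side $L_1$ are simultaneously $\theta$-PL with probability $\geq\tfrac12$, which is exactly what is needed to prime the first doubling argument.

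\textbf{Stages 2--3 (polynomial to subexponential probability; PL $\to$ ML $\to$ SEL $\to$ LOC).} Given $\theta$-PL with probability $\geq 1-L^{-p}$, a second multiscale induction with faster scale growth $L_{k+1}\sim L_k^{\gamma}$ drives the failure probability down to $\e^{-L^{\widetilde{\zeta}}}$; in the course of this induction the polynomial decay is upgraded on a suitable annulus to genuine exponential decay, because with very high probability no sub-box eigenvalue resonates with the target eigenvalue, so the propagation lemma can be applied iteratively to harvest a true exponential factor per concentric annulus. This produces the $m^*$-ML conclusion (still with polynomial level spacing), and Remark~\ref{rmk1} turns $m^*$-ML into $s$-SEL at the scales of interest. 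A further use of the probability estimates of Section~\ref{secprob} at the now subexponential probability upgrades polynomial level spacing to exponential level spacing. A final multiscale induction, with scale growth governed by $\gamma$ and decay bookkeeping by $\tau$ and $s$, converts $s$-SEL into the desired $m_\xi$-LOC with probability $\geq 1-\e^{-L^{\xi}}$ at all scales $L\geq\widetilde{L}$; the Lyapunov constant $m_\xi$ is only fixed at this last stage, consistently with its dependence on $\widetilde{L}$ in the statement.

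\textbf{Main obstacle.} The crux of every inductive step is the eigensystem propagation lemma, which plays the role of the geometric resolvent identity: if no eigenvalue of $H_{\Lambda_L}$ is in resonance with the spectrum of a localizing sub-box sitting on the relevant annulus, then a big-box eigenfunction inherits the sub-box decay rate. The genuine difficulty, absent from resolvent MSA, is that the big-box eigenvalue is itself random rather than a free spectral parameter, so one must rule out resonances \emph{simultaneously} for all $\abs{\Lambda_L}$ eigenpairs. This is why the probabilistic inputs of Section~\ref{secprob} must be of level-spacing (Minami) flavour rather than pure Wegner type, and why the margins $1<\gamma_1<1+\tfrac{p}{p+2d}$, $\gamma\beta<s<1$, $\max\{(1+\gamma\beta)/2,\ldots\}<\tau<1$, and their relatives in \eqref{thetaqp}--\eqref{sfix} are all needed: each such gap is exactly the slack that makes the number of sub-boxes in the covering, the loss from one application of the propagation lemma, and the gain from the probability estimate balance out to close the induction.
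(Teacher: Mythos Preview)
Your Stage~1 uses the wrong scale growth and therefore does not close. To upgrade a \emph{scale-independent} failure probability $P_0$ to a polynomial one $L^{-p}$ you must take $L_{k+1}=YL_k$ with a fixed $Y$ (the paper uses $Y=400$), so that the $\ell$-cover of $\Lambda_{L_{k+1}}$ has a \emph{constant} number $\le(2Y)^d$ of sub-boxes. Then the ``at most one bad sub-box'' event gives the recursion $P_{k+1}\le(2Y)^{2d}P_k^2+\tfrac12L_{k+1}^{-p}$, and the hypothesis $P_0<\tfrac12(2Y)^{-2d}=\tfrac{1}{2(800)^{2d}}$ is precisely what makes $2(2Y)^{2d}P_0<1$, so the recursion contracts (this is Proposition~\ref{propmsa1}). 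With your choice $L_{k+1}=L_k^{\gamma_1}$ the cover has $\sim L_k^{(\gamma_1-1)d}$ sub-boxes and the bound becomes $P_{k+1}\lesssim L_k^{2(\gamma_1-1)d}P_k^2$, which \emph{explodes} at the first step since $P_0$ is fixed while $L_0$ is large. The same issue recurs at the step where you pass from polynomial to subexponential probability: this too is done in the paper with linear growth $L_{k+1}=YL_k$ (Proposition~\ref{propmsa3}, now with $Y=400^{1/(1-s)}$ and $N=\lfloor Y^s\rfloor$ bad boxes to force $P_{k+1}\lesssim P_k^{N+1}$), not with $L_{k+1}\sim L_k^{\gamma}$.

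There is a second structural omission. The four localization notions in Definition~\ref{deflocbox} use two different intermediate scales, $L'=\lfloor L/20\rfloor$ for PL/SEL and $L_\tau=\lfloor L^\tau\rfloor$ for ML/LOC, so one cannot pass directly from $\theta$-PL to $m^\ast$-ML, or from $s$-SEL to $m$-LOC, at the same scale. The paper inserts two single-step ``intermediate'' propositions (Propositions~\ref{indumsa1one} and~\ref{indumsa3two}) using one power-law jump $L=\ell^{\gamma_1}$ (resp.\ $L=\ell^{\gamma}$) to harvest exponential decay from many iterations of the sub-box estimate, producing the ML (resp.\ LOC) conclusion at the new scale before the next MSA is run. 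Your Stages~2--3 absorb this into the inductions themselves, but without separating the step where the decay notion is changed from the steps where the probability is improved you will not be able to verify the decay hypothesis at the base of the next induction. The correct architecture is: linear-growth MSA (PL, probability $P_0\to L^{-p}$) $\to$ one power step (PL$\to$ML) $\to$ power-growth MSA (ML, keep $L^{-p}$) $\to$ Remark~\ref{rmk1} (ML$\Rightarrow$SEL) $\to$ linear-growth MSA (SEL, probability $\to\e^{-L^\zeta}$) $\to$ one power step (SEL$\to$LOC) $\to$ power-growth MSA (LOC, probability $\to\e^{-L^\xi}$).
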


The eigensystem bootstrap multiscale analysis, stated in Theorem~\ref{mainthm}, is proven in Section~\ref{sectbmsa}.  It  follows from a repeated use of a bootstrap argument, as in \cite[Section~6]{GKboot}, making successive use of
Propositions~\ref{propmsa1}, \ref{indumsa1one}, \ref{propmsa2}, \ref{propmsa3}, \ref{indumsa3two}, and \ref{propmsa4}. Propositions~\ref{propmsa1}, \ref{propmsa2},
\ref{propmsa3}, and \ref{propmsa4} are eigensystem multiscale analyses. But there is a difference in the procedure  comparing with the Green's function bootstrap multiscale analysis of \cite{GKboot}.
Unlike the definitions of good boxes for the Green's function multiscale analyses, the definitions of good (i.e., localizing) boxes for  the eigensystem multiscale analyses, given in  Definition~\ref{deflocbox},
require intermediate scales, namely  $\tfrac{L}{20}$ and $L^\tau$  in Definition~\ref{defallloc}. For this reason we only have the direct implications given in  Remark~\ref{rmk1}. Thus the
bootstrap between the eigensystem multiscale analyses requires some extra intermediate steps, given in
Propositions~\ref{indumsa1one} and \ref{indumsa3two}.

In Section~\ref{sectinit} we will prove that we can fulfill the hypotheses of Theorem~\ref{mainthm}, obtaining
the following theorem.

\begin{theorem}\label{maincor}
There exists $\varepsilon_0>0$ such that, given $0<\xi<1$, we can find a finite scale $\widetilde{L}=\widetilde{L}(\varepsilon_0,\xi)$ and $m_\xi=m(\xi,\widetilde{L})>0$
such that for all $0<\varepsilon\leq\varepsilon_0$ we have
\beq
\inf_{x\in\R^d}\P\{\Lambda_L(x)\sqtx{is}m_\xi\text{-localizing for}\;H_{\varepsilon,\omega}\}\geq1-\e^{-L^\xi}\sqtx{for all}L\geq\widetilde{L}.
\eeq
\end{theorem}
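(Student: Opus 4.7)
The plan is to reduce Theorem~\ref{maincor} to Theorem~\ref{mainthm} by verifying its starting hypothesis~\eqref{hypmain} uniformly in $\eps \in (0,\eps_0]$ at a single, well-chosen initial scale. That is, I would first fix some $\theta > \bigl(\tfrac{6}{2\alpha-1}+\tfrac{9}{2}\bigr)d$, obtain the associated scale $\mathcal L(\eps_0,\theta)$ from Theorem~\ref{mainthm}, and then exhibit $\eps_0$, $L_0 \geq \mathcal L(\eps_0,\theta)$, and $P_0 < \tfrac{1}{2(800)^{2d}}$ such that $\Lambda_{L_0}(x)$ is $\theta$-polynomially localizing for $H_{\eps,\omega}$ with probability $\geq 1-P_0$ for every $\eps \in (0,\eps_0]$ and $x\in\R^d$. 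Since $\mathcal L(\eps_0,\theta)$ does not depend on $\xi$, the $\xi$-independence of $\eps_0$ in the conclusion then follows.

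For the initial-scale estimate itself, the natural strategy is a large-disorder perturbation argument around the diagonal operator $V_{\omega,\Lambda_{L_0}}$, whose eigenpairs are precisely $\{(\delta_x,\omega_x)\}_{x\in\Lambda_{L_0}}$. Two events must occur with high probability: polynomial level spacing of the diagonal eigenvalues, and smallness of $\eps$ relative to that spacing. The Hölder continuity of $\mu$ yields, via a union-bound Wegner-type estimate,
\beq
\P\Bpa{\min_{x\ne y\in\Lambda_{L_0}}\abs{\omega_x-\omega_y}\ge L_0^{-q}}\ge 1-C_{K,\alpha}\,L_0^{2d-\alpha q}.
\eeq
Because $q>\tfrac{3d}{2\alpha-1}>\tfrac{2d}{\alpha}$, the right-hand side exceeds $1-P_0$ once $L_0$ is large enough; I would fix such an $L_0\ge\mathcal L(\eps_0,\theta)$ (allowing $\eps_0$ to be adjusted later). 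On the complementary event, Rayleigh--Schrödinger perturbation theory (equivalently, a Combes--Thomas-type resolvent estimate for $(H_{\eps,\omega,\Lambda_{L_0}}-z)^{-1}$ on circles of radius $\tfrac{1}{2}L_0^{-q}$ around each $\omega_x$) produces a simple-eigenvalue labeling $\{(\varphi_x,\lambda_x)\}_{x\in\Lambda_{L_0}}$ with $\abs{\lambda_x-\omega_x}=O(\eps L_0^{q})$, and exponential decay of $\varphi_x$ away from $x$ with a rate controlled by $\log\bpa{L_0^{-q}/\eps}$. Choosing $\eps_0$ so small that this decay gives $\abs{\varphi_x(y)}\le L_0^{-\theta}$ whenever $\norm{y-x}\ge L_0'=\fl{L_0/20}$ delivers $\theta$-polynomial localization on the same event, uniformly in $\eps \le \eps_0$.

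The main obstacle is juggling the interlocking constraints: $L_0$ must exceed $\mathcal L(\eps_0,\theta)$, yet $\eps_0$ must be small relative to $L_0$ (so that $\eps_0 L_0^{q}$ is tiny and the perturbation decay dominates the desired $L_0^{-\theta}$). I would break the circularity by choosing in order: $\theta$ (hence $q$), then $L_0$ large enough that the Wegner bound delivers $P_0<\tfrac{1}{2(800)^{2d}}$, and finally $\eps_0$ small enough both to satisfy the perturbation estimate at that $L_0$ and to make $L_0\ge\mathcal L(\eps_0,\theta)$ (the latter only forcing $\eps_0$ smaller, which only helps the perturbation bound). With $\eps_0$, $L_0$, $P_0$ so chosen, Theorem~\ref{mainthm} applies to every $\eps\in(0,\eps_0]$, yielding the scale $\widetilde L(\eps_0,\xi)$ and rate $m_\xi>0$ with $\inf_{x}\P\{\Lambda_L(x)\text{ is }m_\xi\text{-localizing}\}\ge 1-\e^{-L^{\xi}}$ for all $L\ge\widetilde L$, which is precisely the conclusion of Theorem~\ref{maincor}.
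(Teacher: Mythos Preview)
Your approach is essentially the same as the paper's. The paper states that Theorem~\ref{maincor} is an immediate consequence of Theorem~\ref{mainthm} and a separate initial-scale proposition (Proposition~\ref{propinit}), and that proposition is proved exactly along your lines: on the event $\min_{x\ne y\in\Lambda_L}|\omega_x-\omega_y|\ge 4d\eps+L^{-q}$, a perturbation lemma (quoted from \cite[Lemma~4.4]{EK}) produces an eigensystem $\{(\psi_x,\lambda_x)\}$ with $|\lambda_x-\lambda_y|\ge L^{-q}$ and $|\psi_x(y)|\le\bigl(\tfrac{2d\eps}{2d\eps+L^{-q}}\bigr)^{|x-y|_1}$, which for $\eps\le\tfrac{1}{4d}L^{-q}$ and $L$ large yields $\theta$-polynomial localization; the H\"older bound on $\mu$ controls the probability of the complementary event by $C L^{2d-\alpha q}$, exactly your Wegner-type estimate. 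One small imprecision: your eigenvalue shift $|\lambda_x-\omega_x|=O(\eps L_0^{q})$ is looser than what the perturbation lemma actually gives (namely $O(d\eps)$), though the weaker bound still suffices. Your handling of the circularity between $L_0$ and $\mathcal L(\eps_0,\theta)$ is also correct, since the constants $C_{d,\eps_0}$ throughout Section~\ref{secprel} are nondecreasing in $\eps_0$, so $\mathcal L(\eps_0,\theta)$ is as well.
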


Theorem~\ref{maincor} yields all the usual forms of localization. To see this, we introduce some notation and definitions. We fix $\nu>\frac{d}{2}$, and set $\langle x\rangle=\sqrt{1+\|x\|^2}$.

A function $\psi:\Z^d\rightarrow\C$ is called a $\nu$-generalized eigenfunction for $H_\varepsilon$ if $\psi$ is a generalized eigenfunction (see \eqref{pteig})
and $0<\|\langle x\rangle^{-\nu}\psi\|<\infty$. We let
$\mathcal{V}_\varepsilon(\lambda)$ denote the collection of $\nu$-generalized eigenfunctions for $H_\varepsilon$ with generalized eigenvalue $\lambda\in\R$.

Given $\lambda\in\R$ and $a,b\in\Z^d$, we set
\beq
W_{\varepsilon,\lambda}^{(a)}(b):=\begin{cases}
\sup_{\psi\in\mathcal{V}_\varepsilon(\lambda)}\frac{|\psi(b)|}{\|\langle x-a\rangle^{-\nu}\psi\|}&
\text{if}\quad\mathcal{V}_\varepsilon(\lambda)\neq\emptyset\\0&\text{otherwise}\end{cases}.
\eeq

Theorem~\ref{maincor} yields
the following theorem, from which one can derive  Anderson localization (pure point spectrum with exponentially decaying eigenfunctions) dynamical localization, and more, as in \cite[Corollary~1.8]{EK}.

\begin{theorem}\label{thmloc}
Let $H_{\varepsilon,\omega}$ be an Anderson model. There exists $\varepsilon_0>0$ such that, given $\xi\in(0,1)$, we can find a scale $\widehat{L}=\widehat{L}(\varepsilon_0,\xi)$ and $m_\xi=m(\xi,\widehat{L})>0$,
such that for all $0<\varepsilon\leq\varepsilon_0$, $L\geq\widehat{L}$ with $L\in2\N$, and $a\in\Z^d$ there exists an event $\mathcal{Y}_{\varepsilon,L,a}$ with the following properties:

\begin{enumerate}
\item $\mathcal{Y}_{\varepsilon,L,a}$ depends only on the random variables $\{\omega_x\}_{x\in\Lambda_{5L}(a)}$, and
\beq
\P\{\mathcal{Y}_{\varepsilon,L,a}\}\geq1-C_{\varepsilon_0}\e^{-L^\xi}.
\eeq
\item For all $\omega\in\mathcal{Y}_{\varepsilon,L,a}$ and $\lambda\in\R$ we have, with
\beq
\max_{b\in\Lambda_{\frac{\ell}{3}}(a)}W^{(a)}_{\varepsilon,\omega,\lambda}(b)>\e^{-\frac{1}{4}{m_\xi}L}\; \Longrightarrow\; \max_{y\in A_L(a)}W^{(a)}_{\varepsilon,\omega,\lambda}(y)\leq\e^{-\frac{7}{132}m_\xi\|y-a\|},
\eeq
where
\beq
A_L(a):=\left\{y\in\Z^d;\tfrac{8}{7}L\leq\|y-a\|\leq\tfrac{33}{14}L\right\}.
\eeq
In particular,
\beq
W^{(a)}_{\varepsilon,\omega,\lambda}(a)W^{(a)}_{\varepsilon,\omega,\lambda}(y)\leq
 \e^{-\frac{7}{132}m_\xi\|y-a\|}\qtx{for all}y\in A_L(a).
\eeq
\end{enumerate}
\end{theorem}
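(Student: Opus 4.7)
The plan is to deduce Theorem~\ref{thmloc} from Theorem~\ref{maincor} by a union bound over localizing boxes and a deterministic eigenfunction-expansion argument, in the spirit of \cite[Corollary~1.8]{EK}. First I would define the good event
$$
\mathcal{Y}_{\varepsilon,L,a} = \bigcap_{c \in \mathcal{C}_a}\bset{\Lambda_L(c) \sqtx{is} m_\xi\text{-localizing for } H_{\varepsilon,\omega}},
$$
where $\mathcal{C}_a\subset \Lambda_{5L}(a) \cap \Z^d$ is a finite set of box centers spaced densely enough (of order $L^\tau$) that adjacent boxes overlap deeply and every $y \in \Lambda_{3L}(a)$ lies in $\Lambda_{L/3}(c)$ for some $c \in \mathcal{C}_a$. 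Then $\mathcal{Y}_{\varepsilon,L,a}$ depends only on $\{\omega_x\}_{x \in \Lambda_{5L}(a)}$ since each $\Lambda_L(c)\subset\Lambda_{5L}(a)$, and a union bound from Theorem~\ref{maincor} gives $\P\{\mathcal{Y}_{\varepsilon,L,a}^c\}\le|\mathcal{C}_a|\,\e^{-L^\xi}\le C_{\varepsilon_0}\e^{-L^\xi}$ after absorbing the $L^d$ prefactor (possibly at the cost of slightly enlarging the application exponent in Theorem~\ref{maincor}).

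The deterministic core is the inside-the-box expansion. For $\Lambda=\Lambda_L(c)$ $m_\xi$-localizing with eigensystem $\{(\varphi_x,\lambda_x)\}_{x\in\Lambda}$ and $\psi\in\mathcal{V}_\varepsilon(\lambda)$, writing $(H_{\varepsilon,\Lambda}-\lambda)\psi_\Lambda=\Gamma$ with $\Gamma$ supported on the inner boundary and expanding $\psi_\Lambda=\sum_x c_x\varphi_x$, one has $c_x=\langle\varphi_x,\Gamma\rangle/(\lambda_x-\lambda)$ for the nonresonant indices and $|c_x|\le\|\psi_\Lambda\|$ for the (unique, by level spacing) resonant $x_\ast$ if $|\lambda_{x_\ast}-\lambda|<\e^{-L^\beta}$. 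Combining eigenfunction decay $|\varphi_x(y)|\le\e^{-m_\xi\|y-x\|}$ with $\|y-x\|+\|x-z\|\ge\|y-z\|\ge L/5$ for $y\in\Lambda_{L/3}(c)$ and $z$ on the boundary yields
$$
|\psi(y)|\le L^{O(1)}\e^{L^\beta-m_\xi L/5}\max_{z\in\partial\Lambda}|\psi(z)|+L^{O(1)}|\varphi_{x_\ast}(y)|\,\|\psi_\Lambda\|.
$$
Bounding $|\psi(z)|$ and $\|\psi_\Lambda\|$ by $L^{O(1)}\|\langle\cdot-a\rangle^{-\nu}\psi\|$ (valid for $\Lambda\subset\Lambda_{5L}(a)$) converts this into a pointwise bound on $W^{(a)}_{\varepsilon,\omega,\lambda}(y)$. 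Applied in $\Lambda_L(a)$ at the point $b_0\in\Lambda_{L/3}(a)$ realizing the max in the hypothesis, with a witness $\psi_0\in\mathcal{V}_\varepsilon(\lambda)$, the nonresonant term is $\ll\e^{-m_\xi L/4}$, so a resonant $x_\ast\in\Lambda_L(a)$ must exist with $|\varphi_{x_\ast}(b_0)|\gtrsim\e^{-m_\xi L/4}/L^{O(1)}$; the $m_\xi$-localization of $\varphi_{x_\ast}$ then forces $\|x_\ast-b_0\|\le L/4+O(\log L)$, so $\|x_\ast-a\|\le\tfrac{5L}{12}$.

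For $y\in A_L(a)$, pick $c_y\in\mathcal{C}_a$ with $y\in\Lambda_{L/3}(c_y)$ and apply the expansion in $\Lambda_L(c_y)$ at $y$. If $\Lambda_L(c_y)$ has no resonant eigenvalue, the $\e^{-cL}$ nonresonant term already gives the claim; otherwise, letting $x_\ast^{(c_y)}$ be the resonant site of $\Lambda_L(c_y)$, one must show $\|x_\ast^{(c_y)}-y\|\gtrsim\|y-a\|$, which is the main obstacle. I would handle this by a chain of overlapping localizing boxes in $\mathcal{C}_a$ joining $\Lambda_L(a)$ to $\Lambda_L(c_y)$: adjacent boxes overlap on a set containing a deep-interior point of both, so their respective resonant eigenvectors, each being the unique normalized eigenvector within $\e^{-L^\beta}$ of $\lambda$, must agree up to phase on the overlap; this forces the resonant label to shift by at most $O(L^\tau)$ between adjacent boxes (the alternative, that one box has no resonance and the other does, is likewise ruled out by testing with the known resonant eigenvector on the overlap). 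Iterating the chain keeps $x_\ast^{(c_y)}$ within $O(L^\tau)$ of the original $x_\ast$ near $a$, hence at distance $\ge\|y-a\|-\tfrac{5L}{12}-O(L^\tau)\gtrsim\|y-a\|$ from $y$; the bound $|\varphi_{x_\ast^{(c_y)}}(y)|\le\e^{-m_\xi\|x_\ast^{(c_y)}-y\|}$ then produces exponential decay, with the specific constant $\tfrac{7}{132}m_\xi$ falling out after bookkeeping the interior margin $L/3$, the annulus bounds $\tfrac{8}{7}L$ and $\tfrac{33}{14}L$, and the overlap losses along the chain. Part~(2)'s second display is an immediate consequence of the first and $W^{(a)}_{\varepsilon,\omega,\lambda}(a)\le 1$ (from $\|\langle\cdot-a\rangle^{-\nu}\psi\|\ge|\psi(a)|$) when the hypothesis holds; when it fails, $W^{(a)}_{\varepsilon,\omega,\lambda}(a)\le\e^{-m_\xi L/4}$ directly while $W^{(a)}_{\varepsilon,\omega,\lambda}(y)\le\langle y-a\rangle^\nu$, which together suffice on the annulus.
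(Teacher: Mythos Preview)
The paper's own ``proof'' of Theorem~\ref{thmloc} is the single sentence ``Theorem~\ref{thmloc} is proved in the same way as \cite[Theorem~1.7]{EK}.'' So there is nothing to compare against except the argument in \cite{EK}, and your proposal is a reasonable reconstruction of that argument: define $\mathcal{Y}_{\varepsilon,L,a}$ as the intersection of localizing events for finitely many boxes inside $\Lambda_{5L}(a)$, get the probability bound by a union bound from Theorem~\ref{maincor}, and then run a deterministic eigenfunction expansion inside each localizing box to extract decay of $W^{(a)}_{\varepsilon,\omega,\lambda}$. This is exactly the skeleton of \cite[Theorem~1.7]{EK}. (Minor point: the relevant reference is \cite[Theorem~1.7]{EK}, not Corollary~1.8, which is the downstream localization statement.)

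Two places where your sketch needs tightening. First, the phrase ``spaced densely enough (of order $L^\tau$)'' is off: to have every $y\in\Lambda_{3L}(a)$ lie in $\Lambda_{L/3}(c)$ for some $c\in\mathcal{C}_a$ you need centers spaced at distance $O(L)$, giving $|\mathcal{C}_a|=O(1)$ independent of $L$; the $L^\tau$ scale enters only through the localization radius in Definition~\ref{defallloc}(iii), not through the spacing of box centers. Second, and more substantively, your chain step ``their respective resonant eigenvectors \ldots\ must agree up to phase on the overlap'' is not literally true: eigenvectors of $H_{\Lambda_L(c)}$ and $H_{\Lambda_L(c')}$ are different objects and need not agree anywhere. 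What makes the chain work is that the \emph{generalized eigenfunction} $\psi$ itself is the common object: on each box the expansion gives $\psi\approx c_j\varphi_{x_\ast^{(j)}}+(\text{small})$ deep inside, so on a deep overlap one compares $c_j\varphi_{x_\ast^{(j)}}$ with $c_{j+1}\varphi_{x_\ast^{(j+1)}}$ \emph{via} $\psi$, and the $(x,m_\xi)$-localization of both then forces $\|x_\ast^{(j)}-x_\ast^{(j+1)}\|$ to be small. For this to propagate you also need to know $\psi$ is not negligible on each box along the chain, which is exactly what the resonant term provides once you know the previous box was resonant; this inductive bookkeeping is present in \cite{EK} and should be made explicit. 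With these corrections your argument goes through and yields the stated constants.
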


Theorem~\ref{thmloc} is proved in the same way as \cite[Theorem~1.7]{EK}.

\section{Preliminaries to the multiscale analysis}\label{secprel}

We consider a fixed discrete Schr\"odinger operator $H=-\varepsilon\Delta+V$ on $\ell^2(\Z^d)$, where $0<\varepsilon\leq\varepsilon_0$ for a fixed $\varepsilon_0$ and $V$ is a bounded potential.

\subsection{Some basic facts and  definitions}

Let $\Phi\subset\Theta\subset\Z^d$. We define the boundary, exterior boundary, and interior boundary of $\Phi$ relative to $\Theta$, respectively, by
\begin{align}
\boldsymbol{\partial}^\Theta\Phi&=\{(u,v)\in\Phi\times(\Theta\setminus\Phi);|u-v|=1\},
\\\nonumber\partial_{\mathrm{ex}}^\Theta\Phi&=\{v\in(\Theta\setminus\Phi);(u,v)\in \boldsymbol{\partial}^{\Theta}\Phi\qtx{for some}u\in\Phi\},
\\\nonumber\partial^\Theta_{\mathrm{in}}\Phi&=\{u \in {\Phi};(u,v)\in\boldsymbol{\partial}^{\Theta}\Phi\qtx{for some}v\in\Theta\setminus\Phi\}.
\end{align}
We have
\beq
H_\Theta=H_{\Phi}\oplus H_{\Theta\setminus\Phi}+\varepsilon\Gamma_{\boldsymbol{\partial}^\Theta\Phi}\qtx{on}\ell^2(\Theta)=\ell^2(\Phi)\oplus\ell^2( \Theta\setminus\Phi),
\eeq
\beq
\text{where}\quad\Gamma_{\boldsymbol{\partial}^\Theta\Phi}(u,v)=
\begin{cases}
-1&\qtx{if either}(u,v)\sqtx{or}(v,u)\in\boldsymbol{\partial}^\Theta\Phi\\
\ \ 0&\qtx{otherwise}
\end{cases}.
\eeq
For $t\geq1$ we set
\begin{align}
\Phi^{\Theta,t}&=\{y\in\Phi;\Lambda_{2t}(y)\cap\Theta\subset\Phi\}=\{y\in \Phi;\dist(y,\Theta\setminus\Phi)>\lfloor t\rfloor\},
\\\nonumber\partial_{\mathrm{in}}^{\Theta,t}\Phi&=\Phi\setminus\Phi^{\Theta,t}=\{y\in\Phi;\dist(y,\Theta\setminus\Phi)\leq\lfloor t\rfloor\},
\\\nonumber\partial^{\Theta,t}\Phi& =\partial_{\mathrm{in}}^{\Theta,t}\Phi\cup\partial_{\mathrm{ex}}^{\Theta}\Phi.
\end{align}
Given a box $\Lambda_L(x)\subset\Theta\subset\Z^d$ we write $\Lambda_L^{\Theta,t}(x)$ for $(\Lambda_L(x))^{\Theta,t}$.

For a box $\Lambda_L\subset\Theta\subset\Z^d$, there exists a unique $\hat{v}\in\partial_{\mathrm{in}}^{\Lambda_L}\Theta$ for each $v\in\partial_{\mathrm{ex}}^{\Lambda_L}\Theta$ such
that $(\hat{v},v)\in\partial_{\Lambda_L}\Theta$. Given $v\in\Theta$, we define $\hat{v}$ as above if $v\in\partial_{\mathrm{ex}}^{\Lambda_L}\Theta$, and set $\hat{v}=v$ otherwise. Note
that $|\partial_{\mathrm{ex}}^{\Lambda_L}\Theta|=|\boldsymbol{\partial}_{\Lambda_L}\Theta|$. If $L\geq2$, we have
\beq
|\partial_{\mathrm{in}}^\Theta\Lambda_L|\leq|\partial_{\mathrm{ex}}^\Theta\Lambda_L|=|\boldsymbol{\partial}^\Theta\Lambda_L|\leq s_dL^{d-1}, \qtx{where}s_d=2^dd.
\eeq

To cover a box of side $L$ by boxes of side $\ell<L$, we will use suitable covers as in \cite[Definition~3.10]{EK} (also see \cite[Definition~3.12]{GKber}).
\begin{definition}
Let $\Lambda_L=\Lambda_L(x_0)$, $x_0\in\R^d$ be a box in $\Z^d$, and let $\ell<L$. A suitable $\ell$-cover of $\Lambda_L$ is
the collection of boxes
\begin{align}
\mathcal C_{L,\ell}(x_0)=\{\Lambda_\ell(a)\}_{a\in\Xi_{L,\ell}},
\end{align}
where
\beq
\Xi_{L,\ell}:=\{x_0+\rho\ell\Z^{d}\}\cap\Lambda_L^\R\qtx{with}\rho\in[\tfrac{3}{5},\tfrac{4}{5}]\cap\left\{\tfrac{L-\ell}{2\ell k};k\in\N\right\}.
\eeq
We call $\mathcal C_{L,\ell}(x_0)$ the suitable $\ell$-cover of $\Lambda_L$ if
$\rho=\rho_{L,\ell}:=\max\left\{[\tfrac{3}{5},\tfrac{4}{5}]\cap\left\{\tfrac{L-\ell}{2\ell k};k\in\N\right\}\right\}$.
\end{definition}

Note that $[\tfrac{3}{5},\tfrac{4}{5}]\cap\left\{\tfrac{L-\ell}{2\ell k};k\in\N\right\}\neq\emptyset$ if $\ell\leq\tfrac{L}{6}$. For a suitable $\ell$-cover $\mathcal{C}_{L,\ell}(x_0)$,
we have (see \cite[Lemma 3.11]{EK})
\begin{align}
\label{coverprop}\Lambda_L&=\bigcup_{a\in\Xi_{L,\ell}}\Lambda_\ell^{\Lambda_L,\frac{\ell}{10}}(a);
\\\label{covernum}\left(\tfrac{L}{\ell}\right)^{d}&\leq\#\Xi_{L,\ell}=\left(\tfrac{L-\ell}{\rho\ell}+1\right)^d\leq \left(\tfrac{2L}{\ell}\right)^d.
\end{align}

\subsection{Lemmas about eigenpairs}\label{sectlem}

Given $\Theta\subset\Z^d$ and an eigensystem $\{(\varphi_j,\lambda_j)\}_{j\in J}$ for $H_\Theta$. We have
\begin{align}\label{eigsys}
\delta_y&=\sum_{j\in J}\overline{\varphi_j(y)}\varphi_j\qtx{for all}y\in\Theta,
\\\nonumber\psi(y)&=\langle\delta_y,\psi\rangle=\sum_{j\in J}\varphi_j(y)\langle\varphi_j,\psi\rangle\qtx{for all}\psi\in\ell^2(\Theta)\qtx{and}y\in \Theta.
\end{align}

Given $\Theta\subset\Z^d$, a function $\psi:\Theta\rightarrow\C$ is called a generalized eigenfunction for $H_\Theta$ with generalized eigenvalue $\lambda\in\R$
if $\psi$ is not identically zero and
\beq
-\varepsilon\hspace{-10pt}\sum_{\  y\in\Theta,|y-x|=1}\psi(y)+(V(x)-\lambda)\psi(x)=0\qtx{for all}x\in{\Theta},
\eeq
or, equivalently,
\beq\label{pteig}
\langle(H_\Theta-\lambda)\varphi,\psi\rangle=0\qtx{for all}\varphi\in\ell^2(\Theta)\quad\text{with finite support}.
\eeq
If $\psi\in\ell^2(\Theta)$, $\psi$ is an eigenfunction for $H_{\Theta}$ with eigenvalue $\lambda$. We do not require generalized eigenfunctions to be in $\ell^2(\Theta)$,
we only require the pointwise equality in \eqref{pteig}. If $\Theta$ is finite there is no difference between generalized eigenfunctions and eigenfunctions.

\begin{lemma}\label{distdecayall}
Consider a box $\Lambda_L\subset\Theta\subset\Z^d$, and suppose $(\varphi,\lambda)$ is an eigenpair for
$H_{\Lambda_L}$. Then:

\begin{enumerate}
\item Given $\widetilde{\theta}>0$, if $\varphi$ is $(x,\widetilde{\theta})$-polynomially localized for some $x\in\Lambda_L^{\Theta,L'}$, we have
\beq
\dist(\lambda,\sigma(H_\Theta))\leq\|(H_\Theta-\lambda)\varphi\|\leq C_{d,\varepsilon_0}L^{-\left(\widetilde{\theta}-\frac{d-1}{2}\right)}.
\eeq

\item Given $\widetilde{s}\in(0,1)$, if $\varphi$ is $(x,\widetilde{s})$-subexponentially localized for some $x\in\Lambda_L^{\Theta,L'}$, we have
\begin{gather}
\dist(\lambda,\sigma(H_\Theta))\leq\|(H_\Theta-\lambda)\varphi\|\leq\e^{-c_1L^{\widetilde{s}}},
\\\label{sc1}\text{where}\quad
c_1=c_1(L)\geq1-C_{d,\varepsilon_0}\tfrac{\log L}{L^{\widetilde{s}}}.
\end{gather}

\item Given $m>0$ and $\tau\in(0,1)$, if $\varphi$ is $(x,m)$ localized for some $x\in\Lambda_L^{\Theta,L_\tau}$, we have
\begin{gather}
\dist(\lambda,\sigma(H_\Theta))\leq\|(H_\Theta-\lambda)\varphi\|\leq\e^{-m_1L_\tau},
\\\label{mst1}\text{where}\quad
m_1=m_1(L)\geq m-C_{d,\varepsilon_0}\tfrac{\log L}{L_\tau}.
\end{gather}
\end{enumerate}

\end{lemma}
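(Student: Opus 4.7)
The plan is to reduce all three parts to a single geometric resolvent-type computation: extend $\varphi$ by zero from $\ell^2(\Lambda_L)$ into $\ell^2(\Theta)$, exploit the block decomposition $H_\Theta=H_{\Lambda_L}\oplus H_{\Theta\setminus\Lambda_L}+\varepsilon\Gamma_{\boldsymbol{\partial}^\Theta\Lambda_L}$, and use the eigenequation $H_{\Lambda_L}\varphi=\lambda\varphi$ together with $\varphi\equiv 0$ on $\Theta\setminus\Lambda_L$ to obtain the key identity
\beq
(H_\Theta-\lambda)\varphi=\varepsilon\,\Gamma_{\boldsymbol{\partial}^\Theta\Lambda_L}\varphi.
\eeq
Since $H_\Theta$ is bounded and self-adjoint on $\ell^2(\Theta)$, the spectral theorem gives $\dist(\lambda,\sigma(H_\Theta))\leq\|(H_\Theta-\lambda)\varphi\|/\|\varphi\|$, and $\|\varphi\|=1$ because zero-extension preserves the norm.

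Next I would estimate $\|\Gamma_{\boldsymbol{\partial}^\Theta\Lambda_L}\varphi\|$. The operator $\Gamma_{\boldsymbol{\partial}^\Theta\Lambda_L}\varphi$ is supported on $\partial_{\mathrm{ex}}^\Theta\Lambda_L$, and for $y\in\partial_{\mathrm{ex}}^\Theta\Lambda_L$ its value is a sum of at most $2d$ terms $-\varphi(z)$ with $z\in\partial_{\mathrm{in}}^\Theta\Lambda_L$ adjacent to $y$. A Cauchy--Schwarz application then gives
\beq
\|\Gamma_{\boldsymbol{\partial}^\Theta\Lambda_L}\varphi\|^2\leq (2d)^2 s_d L^{d-1}\Bpa{\max_{z\in\partial_{\mathrm{in}}^\Theta\Lambda_L}|\varphi(z)|}^2,
\eeq
using $|\partial_{\mathrm{in}}^\Theta\Lambda_L|\leq s_dL^{d-1}$. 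Combining this with the identity above gives $\dist(\lambda,\sigma(H_\Theta))\leq\varepsilon C_d L^{(d-1)/2}\max_{z\in\partial_{\mathrm{in}}^\Theta\Lambda_L}|\varphi(z)|$.

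All that remains is to convert the localization hypothesis into a bound on $\max_{z\in\partial_{\mathrm{in}}^\Theta\Lambda_L}|\varphi(z)|$. The key geometric observation is that if $x\in\Lambda_L^{\Theta,t}$ with $t\in\{L',L_\tau\}$ and $z\in\partial_{\mathrm{in}}^\Theta\Lambda_L$, then there is some $v\in\Theta\setminus\Lambda_L$ with $\|v-z\|=1$, so $\|z-x\|\geq\|v-x\|-1>t-1$, giving $\|z-x\|\geq t$. Thus in case (i) the polynomial bound $|\varphi(z)|\leq L^{-\widetilde\theta}$ applies, yielding the stated $L^{-(\widetilde\theta-(d-1)/2)}$ bound after absorbing $\varepsilon\leq\varepsilon_0$ into $C_{d,\varepsilon_0}$. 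In case (ii) we get $\max|\varphi(z)|\leq\e^{-L^{\widetilde s}}$ and choose $c_1=1-\frac{\log(\varepsilon_0 C_d)+\frac{d-1}{2}\log L}{L^{\widetilde s}}$, which satisfies \eqref{sc1}. In case (iii) we get $\max|\varphi(z)|\leq\e^{-mL_\tau}$ (using $\|z-x\|\geq L_\tau$ for the exponential localization) and choose $m_1=m-\frac{\log(\varepsilon_0 C_d)+\frac{d-1}{2}\log L}{L_\tau}$, giving \eqref{mst1}.

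There is no real obstacle here; the proof is a standard geometric resolvent identity followed by three routine bookkeeping calculations. The only point requiring a little care is the elementary distance inequality $\|z-x\|\geq t$ for $z$ on the interior boundary when $x\in\Lambda_L^{\Theta,t}$, which ensures that the decay hypothesis of Definition~\ref{defallloc} applies uniformly on $\partial_{\mathrm{in}}^\Theta\Lambda_L$; everything else is Cauchy--Schwarz and taking logarithms.
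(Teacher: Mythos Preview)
Your proof is correct and follows essentially the same approach as the paper: the paper invokes \cite[Lemma~3.2]{EK} for the bound $\|(H_\Theta-\lambda)\varphi\|\leq\varepsilon\sqrt{s_d}\,L^{(d-1)/2}\|\varphi_{\partial_{\mathrm{in}}^\Theta\Lambda_L}\|_\infty$, which is exactly what you derive explicitly via the block decomposition and the boundary operator. One tiny slip: in your Cauchy--Schwarz step the cardinality you should cite is $|\partial_{\mathrm{ex}}^\Theta\Lambda_L|\leq s_dL^{d-1}$ (since $\Gamma_{\boldsymbol{\partial}^\Theta\Lambda_L}\varphi$ is supported there), not $|\partial_{\mathrm{in}}^\Theta\Lambda_L|$, but the bound is the same.
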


\begin{proof}
We prove part (i), the proofs of (ii) and (iii) are similar. If $x\in\Lambda_L^{\Theta,L'}$, we have $\dist(x,\partial_{\mathrm{in}}^\Theta\Lambda_L)\geq L'$,
thus it follows from \cite[Lemma~3.2]{EK} that
\begin{align}
\|(H_\Theta-\lambda)\varphi\|&\leq\varepsilon\sqrt{s_d}L^{\frac{d-1}{2}}\|\varphi_{\partial_{\mathrm{in}}^\Theta\Lambda_L}\|_\infty\leq\varepsilon\sqrt{s_d}L^{\frac{d-1}{2}}L^{-\widetilde{\theta}}
\\\nonumber&\leq\varepsilon_0\sqrt{s_d}L^{-\left(\widetilde{\theta}-\frac{d-1}{2}\right)}.
\end{align}
\end{proof}

For the following lemmas in this and next subsections, we fix $\theta>\left(\tfrac{6}{2\alpha-1}+\tfrac{9}{2}\right)d$ and $0<\xi<1$
(so $q,p,\gamma_1,\zeta,\beta,\gamma,\tau,s$ are fixed). Also, when we consider $\Lambda_\ell$ to be a $\sharp$ box, where $\sharp$ stands for $\theta$-PL, $m^\ast$-ML,
$s$-SEL or $m$-LOC, with $m^\ast\geq m^\ast_-(\ell)>0$ and $m\geq m_-(\ell)>0$, we let:
\beq\label{lsharp}
L=L_\sharp=\begin{cases}
Y\ell\sqtx{or}\ell^{\gamma_1}&\text{if}\;\sharp\sqtx{is}\theta\text{-PL}\\
\ell^{\gamma_1}&\text{if}\;\sharp\sqtx{is}m^\ast\text{-ML}\\
Y\ell\sqtx{or}\ell^\gamma&\text{if}\;\sharp\sqtx{is}s\text{-SEL}\\
\ell^\gamma&\text{if}\;\sharp\sqtx{is}m\text{-LOC}
\end{cases}
\qtx{and}
\ell_\sharp=\begin{cases}
\ell'&\text{if}\;\sharp\sqtx{is}\theta\text{-PL}\sqtx{or}s\text{-SEL}\\
\ell_\tau&\text{if}\;\sharp\sqtx{is}m^\ast\text{-ML}\sqtx{or}m\text{-LOC}
\end{cases},
\eeq
where $Y\geq1$. We will omit the dependence on $\theta$, $\xi$ and $Y$ from the notation.

We prove most of the lemmas only for $\sharp$ being $\theta$-PL. The proofs of other cases are similar.

\begin{lemma}\label{psidecayall}
Given $\Theta\subset\Z^d$, let $\psi:\Theta\rightarrow\C$ be a generalized eigenfunction for $H_\Theta$ with generalized eigenvalue $\lambda\in\R$.
Consider a $\sharp$ box $\Lambda_\ell\subset\Theta$ with a corresponding eigensystem $\{(\varphi_u,\nu_u)\}_{u\in\Lambda_\ell}$, and suppose for all
$u\in\Lambda_\ell^{\Theta,\ell_\sharp}$ we have
\beq\label{disteig}
|\lambda-\nu_u|\geq\begin{cases}
\frac{1}{2}L^{-q}&\text{if}\;\sharp\sqtx{is}\theta\text{-PL}\sqtx{or}m^\ast\text{-ML}\\
\frac{1}{2}\e^{-L^\beta}&\text{if}\;\sharp\sqtx{is}s\text{-SEL}\sqtx{or}m\text{-LOC}
\end{cases}.
\eeq
Then the following holds for sufficiently large $\ell$:
\begin{enumerate}
\item Let $y\in\Lambda_\ell^{\Theta,2\ell_\sharp}$. Then:
\begin{enumerate}
\item If $\sharp$ is $\theta$-PL, we have
\beq\label{decayes1}
|\psi(y)|\leq C_{d,\varepsilon_0}L^q\ell^{-(\theta-2d)}|\psi(y_1)|\qtx{for some}y_1\in\partial^{\Theta,2\ell'}\Lambda_\ell.
\eeq
\item If $\sharp$ is $s$-SEL, we have
\begin{gather}\label{decayes1sub}
|\psi(y)|\leq\e^{-c_2\ell^s}|\psi(y_1)|\qtx{for some}y_1\in\partial^{\Theta,2\ell'}\Lambda_\ell,
\\\label{sc2}\text{where}\quad
c_2=c_2(\ell)\geq1-C_{d,\varepsilon_0}L^\beta\ell^{-s}.
\end{gather}
\item If $\sharp$ is $m^\ast$-ML, we have
\begin{gather}\label{decayes1str}
|\psi(y)|\leq \e^{-m^\ast_2\ell_\tau}|\psi(y_1)|\qtx{for some}y_1\in\partial^{\Theta,2\ell_\tau}\Lambda_\ell,
\\\label{mst2}\text{where}\quad
m^\ast_2=m^\ast_2(\ell)\geq m^\ast-C_{d,\varepsilon_0}\gamma_1q\tfrac{\log\ell}{\ell_\tau}.
\end{gather}
\item If $\sharp$ is $m$-LOC, we have
\begin{gather}\label{decayes1loc}
|\psi(y)|\leq \e^{-m_2\ell_\tau}|\psi(y_1)|\qtx{for some}y_1\in\partial^{\Theta,2\ell_\tau}\Lambda_\ell,
\\\label{m2}\text{where}\quad
m_2=m_2(\ell)\geq m-C_{d,\varepsilon_0}\ell^{\gamma\beta-\tau}.
\end{gather}
\end{enumerate}
\item Let $y\in\Lambda_\ell^{\Theta,2\ell_{\widetilde{\tau}}}$. Then:
\begin{enumerate}
\item If $\sharp$ is $m^\ast$-ML, we have
\begin{gather}\label{decayes2str}
|\psi(y)|\leq\e^{-m^\ast_3\|y_2-y\|}|\psi(y_2)|\qtx{for some}y_2\in\partial^{\Theta,\ell_{\widetilde{\tau}}}\Lambda_\ell,
\\\label{mst3}\text{where}\quad
m^\ast_3=m^\ast_3(\ell)\geq m^\ast\left(1-4\ell^{\frac{\tau-1}{2}}\right)-C_{d,\varepsilon_0}\gamma_1q\tfrac{\log\ell}{\ell_{\widetilde{\tau}}}.
\end{gather}
\item If $\sharp$ is $m$-LOC, we have
\begin{gather}\label{decayes2loc}
|\psi(y)|\leq \e^{-m_3\|y_2-y\|}|\psi(y_2)|\qtx{for some}y_2\in\partial^{\Theta,\ell_{\widetilde{\tau}}}\Lambda_\ell,
\\\label{m3}\text{where}\quad
m_3=m_3(\ell)\geq m\left(1-4\ell^{\frac{\tau-1}{2}}\right)-C_{d,\varepsilon_0}\ell^{\gamma\beta-\widetilde{\tau}}.
\end{gather}
\end{enumerate}
\end{enumerate}
\end{lemma}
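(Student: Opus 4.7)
The plan is to combine a geometric resolvent identity for $\psi_{\Lambda_\ell}$ with the eigensystem expansion from the hypothesis, and close the estimate by a maximum-principle bootstrap. I treat case (1)(a) ($\sharp$ is $\theta$-PL) in detail; the others follow the same blueprint with the appropriate decay and separation rates.

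From $H_\Theta \psi = \lambda \psi$ restricted to $\Lambda_\ell$ one gets $(H_{\Lambda_\ell} - \lambda)\psi_{\Lambda_\ell} = W$, where $W$ is supported on $\partial_{\mathrm{in}}^\Theta \Lambda_\ell$ and $\|W\|_\infty \leq 2d\varepsilon\max_{v \in \partial_{\mathrm{ex}}^\Theta\Lambda_\ell}|\psi(v)|$. Expanding $\psi_{\Lambda_\ell} = \sum_u c_u\varphi_u$ and using $(\nu_u - \lambda)c_u = \langle \varphi_u, W\rangle$, I split the sum at $\Lambda^\ast := \Lambda_\ell^{\Theta,\ell_\sharp}$. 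On $\Lambda^\ast$ the hypothesis \eqref{disteig} yields $|c_u| \leq 2L^q|\langle \varphi_u, W\rangle|$ (or $2e^{L^\beta}|\langle\varphi_u,W\rangle|$ in the level-spacing cases); on the complement I use only the trivial bound $|c_u| \leq \|\psi_{\Lambda_\ell}\|_2 \leq \ell^{d/2}\max_{\Lambda_\ell}|\psi|$.

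For $y \in \Lambda_\ell^{\Theta,2\ell_\sharp}$ write $\psi(y) = A(y) + B(y)$. In Part $A$, every $u \in \Lambda^\ast$ is at distance $\geq \ell_\sharp$ from every $z \in \partial_{\mathrm{in}}^\Theta\Lambda_\ell$, so $|\varphi_u(z)| \leq \ell^{-\theta}$, and $\sum_{u \in \Lambda^\ast}|\varphi_u(y)| \leq \ell^{d/2}$ by Cauchy--Schwarz, giving $|A(y)| \leq C_{d,\varepsilon_0}L^q\ell^{3d/2-1-\theta}\max_{v \in \partial_{\mathrm{ex}}^\Theta\Lambda_\ell}|\psi(v)|$. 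In Part $B$, every $u \notin \Lambda^\ast$ lies within $\ell_\sharp$ of $\Theta \setminus \Lambda_\ell$, so $\|u-y\| > \ell_\sharp$ and $|\varphi_u(y)| \leq \ell^{-\theta}$; a second Cauchy--Schwarz gives $|B(y)| \leq C_d\ell^{d-\theta}\max_{\Lambda_\ell}|\psi|$. Both exponents are $\leq 2d-\theta$, matching the claimed prefactor.

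The main obstacle is that Part $B$ involves $\max_{\Lambda_\ell}|\psi|$ rather than a boundary value, since no eigenvalue separation is available for $u \notin \Lambda^\ast$. I dispose of this by a maximum argument: set $M = \max_{\Lambda_\ell \cup \partial_{\mathrm{ex}}^\Theta\Lambda_\ell}|\psi|$ and $M_{\mathrm{bd}} = \max_{\partial^{\Theta,2\ell'}\Lambda_\ell}|\psi|$, and let $v_{\max}$ attain $M$. Either $v_{\max} \in \partial^{\Theta,2\ell'}\Lambda_\ell$ and $M = M_{\mathrm{bd}}$, or $v_{\max} \in \Lambda_\ell^{\Theta,2\ell'}$ and the split estimate applied at $v_{\max}$ gives $M \leq CL^q\ell^{-(\theta-2d)}M_{\mathrm{bd}} + C\ell^{d-\theta}M$, whence $M \leq 2CL^q\ell^{-(\theta-2d)}M_{\mathrm{bd}}$ for $\ell$ large. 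Picking $y_1$ to attain $M_{\mathrm{bd}}$ closes (1)(a). Cases (1)(b)--(d) go identically, with $\ell^{-\theta}$ replaced by $e^{-\ell^s}$, $e^{-m^\ast\ell_\tau}$, or $e^{-m\ell_\tau}$ and $L^q$ replaced by $e^{L^\beta}$ where applicable. For Part (2) one uses the triangle inequality $\|u-y\|+\|u-z\| \geq \|y-z\|$, splitting each factor by whether $\|u-\cdot\|$ exceeds $\ell_\tau$, to obtain $\sum_{u \in \Lambda^\ast}|\varphi_u(y)\varphi_u(z)| \leq |\Lambda^\ast|e^{m^\ast\ell_\tau}e^{-m^\ast\|y-z\|}$; the $e^{m^\ast\ell_\tau}$ factor, after absorption using the room $\|y-y_2\| \geq \ell_{\widetilde\tau}$, produces exactly the $m^\ast(1-4\ell^{(\tau-1)/2})$ correction in \eqref{mst3}. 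The parameter constraints \eqref{thetaqp}--\eqref{sfix} are precisely those needed to make all polynomial and subexponential prefactors absorbable.
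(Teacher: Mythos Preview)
Your approach to Part~(i) is correct and essentially a repackaging of the paper's argument: where the paper obtains $|\psi(y)|\le C|\psi(y_1)|$ for some $y_1\in\Lambda_\ell\cup\partial_{\mathrm{ex}}^\Theta\Lambda_\ell$ and then \emph{iterates} until $y_1$ lands in $\partial^{\Theta,2\ell'}\Lambda_\ell$ (the factor $C<1$ forcing termination), you instead take the maximum over the box and solve $(1-C_B)M\le C_A M_{\mathrm{bd}}$ in one stroke. Both closures are standard and equivalent here.

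For Part~(ii), however, your maximum-principle closure does not deliver the statement. The conclusion $|\psi(y)|\le \e^{-m_3^\ast\|y-y_2\|}|\psi(y_2)|$ requires decay in the \emph{actual distance} $\|y-y_2\|$, which can be as large as $\ell$, whereas a global maximum argument only yields $M\le \e^{-c\,\ell_{\widetilde\tau}}M_{\mathrm{bd}}$ (the maximizer may sit just inside the $2\ell_{\widetilde\tau}$-collar). Your triangle-inequality estimate on $\sum_u|\varphi_u(y)\varphi_u(z)|$ is correct and is exactly what the paper uses to get the one-step bound $|\psi(y)|\le \e^{-m_3^\ast\max\{\|y_1-y\|,\frac12\ell_{\widetilde\tau}\}}|\psi(y_1)|$ for some $y_1$ in the box; but the point is that $y_1$ need not lie in the boundary, and one must \emph{iterate}, using $\|y-y_1\|+\|y_1-y_2\|+\cdots\ge\|y-y_n\|$ to accumulate the full distance. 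The $\tfrac12\ell_{\widetilde\tau}$ floor guarantees termination. You also need a distance-tracking bound for the $B$-sum (over $u\notin\Lambda^\ast$) in Part~(ii): the trivial estimate $|c_u|\le\|\psi_{\Lambda_\ell}\|$ is too coarse there, and the paper instead uses $|\varphi_u(v)|\le \e^{m^\ast\ell_\tau}\e^{-m^\ast\|v-u\|}$ to extract $\e^{-m'\max\{\|v_3-y\|,\frac12\ell_{\widetilde\tau}\}}|\psi(v_3)|$. Replace your maximum-principle closure by this iteration and Part~(ii) goes through.
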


\begin{proof}
Let $y\in\Lambda_\ell$, we have (see \eqref{eigsys})
\beq\label{sum1}
\psi(y)=\sum_{u\in\Lambda_\ell}\varphi_u(y)\langle\varphi_u,\psi\rangle=\sum_{u\in\Lambda_\ell^{\Theta,\ell'}}\varphi_u(y)\langle\varphi_u,\psi\rangle
+\sum_{u\in\partial_{\mathrm{in}}^{\Theta,\ell'}\Lambda_\ell}\varphi_u(y)\langle\varphi_u,\psi\rangle.
\eeq

If $u\in\Lambda_\ell^{\Theta,\ell'}$, we have $|\lambda-\nu_u|\geq\frac{1}{2}L^{-q}$ by \eqref{disteig}. Using \eqref{pteig}, we get
\beq
\langle\varphi_u,\psi\rangle=(\lambda-\nu_u)^{-1}\langle\varphi_u,(H_\Theta-\nu_u)\psi\rangle=(\lambda-\nu_u)^{-1}\langle(H_\Theta-\nu_u)\varphi_u,\psi\rangle.
\eeq
It follows from \cite[Lemma~3.2]{EK} that
\beq\label{vphipsi1}
|\varphi_u(y)\langle\varphi_u,\psi\rangle|\leq2L^q\varepsilon\sum_{v\in\partial_{\mathrm{ex}}^\Theta\Lambda_\ell}|\varphi_u(y)\varphi_u(\hat{v})||\psi(v)|.
\eeq
If $v'\in\partial_{\mathrm{in}}^\Theta\Lambda_\ell$, we have $\|v'-u\|\geq\ell'$, so \eqref{defploc} gives $|\varphi_u(v')|\leq\ell^{-\theta}$.
It follows from \eqref{vphipsi1} and $\|\varphi_u\|=1$ that
\beq
|\varphi_u(y)\langle\varphi_u,\psi\rangle|\leq2\varepsilon L^q\ell^{-\theta}\sum_{v\in\partial_{\mathrm{ex}}^\Theta\Lambda_\ell}|\psi(v)|\leq2\varepsilon s_dL^q\ell^{-(\theta-d+1)}|\psi(v_1)|
\eeq
for some $v_1\in\partial_{\mathrm{ex}}^\Theta\Lambda_\ell$. Therefore
\beq\label{sum2}
\left|\sum_{u\in\Lambda_\ell^{\Theta,\ell'}}\varphi_u(y)\langle\varphi_u,\psi\rangle\right|\leq2\varepsilon s_d L^q\ell^{-(\theta-2d+1)}|\psi(v_2)|
\eeq
for some $v_2\in\partial_{\mathrm{ex}}^\Theta\Lambda_\ell$.

Let $y\in\Lambda_\ell^{\Theta,2\ell'}$. If $u\in\partial_{\mathrm{in}}^{\Theta,\ell'}\Lambda_\ell$, we have $\|u-y\|\geq2\ell'-\ell'=\ell'$, thus \eqref{defploc}
gives $|\varphi_u(y)|\leq\ell^{-\theta}$, and hence
\beq\label{sum3}
\left|\sum_{u\in\partial_{\mathrm{in}}^{\Theta,\ell'}\Lambda_\ell}\varphi_u(y)\langle\varphi_u,\psi\rangle\right|
\leq\ell^{-(\theta-d)}\|\psi\chi_{\Lambda_\ell}\|\leq\ell^{-\left(\theta-\frac{3d}{2}\right)}|\psi(v_3)|
\eeq
for some $v_3\in\Lambda_\ell$. Combining \eqref{sum1}, \eqref{sum2} and \eqref{sum3}, we conclude that
\beq\label{ca1}
|\psi(y)|\leq(1+2\varepsilon_0s_d)L^q\ell^{-(\theta-2d)}|\psi(y_1)|
\eeq
for some $y_1\in\Lambda_\ell\cup\partial_{\mathrm{ex}}^\Theta\Lambda_\ell$. If $y_1\not\in\partial^{\Theta,2\ell'}\Lambda_\ell$ we repeat the procedure to estimate $|\psi(y_1)|$.
Since we can suppose $\psi(y)\neq0$ without loss of generality, the procedure must stop after finitely many times, and at that time we must have \eqref{decayes1}.

We prove part (ii) only for $\sharp$ being $m^\ast$-ML. The proof for $\sharp$ being $m$-LOC is similar. Let $y\in\Lambda_\ell^{\Theta,\ell_{\widetilde{\tau}}}$,
then $\|y-v'\|\geq\ell_{\widetilde{\tau}}$ for $v'\in\partial_{\mathrm{in}}^\Theta\Lambda_\ell$. Thus for $u\in\Lambda_\ell^{\Theta,\ell_\tau}$
and $v'\in\partial_{\mathrm{in}}^\Theta\Lambda_\ell$ we have
\beq\label{vphipsi3str}
|\varphi_u(y)\varphi_u(v')|\leq
\begin{cases}
\e^{-m^\ast(\|y-u\|+\|v'-u\|)}\leq\e^{-m^\ast\|v'-y\|}&\qtx{if}\|y-u\|\geq\ell_\tau
\\\e^{-m^\ast\|v'-u\|}\leq\e^{-m'_1\|v'-y\|}&\qtx{if}\|y-u\|<\ell_\tau
\end{cases},
\eeq
where
\beq
m'_1\geq m^\ast\left(1-2\ell^{\tau-\widetilde{\tau}}\right)=m^\ast\left(1-2\ell^{\frac{\tau-1}{2}}\right),
\eeq
since for $\|y-u\|<{\ell_\tau}$, we have
\beq
\|v'-u\|\geq\|v'-y\|-\|y-u\|\geq\|v'-y\|-\ell_\tau\geq\|v'-y\|\left(1-\tfrac{\ell_\tau}{\ell_{\widetilde{\tau}}}\right).
\eeq
Combining \eqref{vphipsi1} and \eqref{vphipsi3str}, we conclude that
\begin{align}\label{vphipsi4str}
&|\varphi_u(y)\langle\varphi_u,\psi\rangle|\leq2\varepsilon L^q\sum_{v\in\partial_{\mathrm{ex}}^\Theta\Lambda_\ell}\e^{-m'_1(\|v -y\|-1)}|\psi(v)|
\\&\nonumber\qquad\leq2\varepsilon s_d\ell^{\gamma_1q+d-1}\e^{-m'_1(\|v_1-y\|-1)}|\psi(v_1)|\leq\e^{-m'_2\|v_1-y\|}|\psi(v_1)|
\end{align}
for some $v_1\in\partial_{\mathrm{ex}}^\Theta\Lambda_\ell$, where we used $\|v_1-y\|\geq\ell_{\widetilde{\tau}}$ and took
\beq
m'_2\geq m'_1\left(1-2\ell^{\widetilde{\tau}}\right)-C_{d,\varepsilon_0}\gamma_1q\tfrac{\log\ell}{\ell_{\widetilde{\tau}}}
\geq m^\ast\left(1-4\ell^{\frac{\tau-1}{2}}\right)-C_{d,\varepsilon_0}\gamma_1q\tfrac{\log\ell}{\ell_{\widetilde{\tau}}}.
\eeq
Therefore
\beq\label{sum4str}
\left|\sum_{u\in\Lambda_\ell^{\Theta,\ell_\tau}}\varphi_u(y)\langle\varphi_u,\psi\rangle\right|\leq\ell^d\e^{-m'_2\|v_2-y\|}|\psi(v_2)|\leq \e^{-m'_3\|v_2-y\|}|\psi(v_2)|
\eeq
for some $v_2\in\partial_{\mathrm{ex}}^\Theta\Lambda_\ell$, where
\beq
m'_3\geq m'_2-C_d\tfrac{\log\ell}{\ell_{\widetilde{\tau}}}\geq m^\ast\left(1-4\ell^{\frac{\tau-1}{2}}\right)-C_{d,\varepsilon_0}\gamma_1q\tfrac{\log\ell}{\ell_{\widetilde{\tau}}}.
\eeq

If $u\in\partial_{\mathrm{in}}^{\Theta,\ell_\tau}\Lambda_\ell$ we have
$\|u-y\|\geq\ell_{\widetilde{\tau}}-\ell_\tau>\tfrac{1}{2}\ell_{\widetilde{\tau}}$, thus \eqref{defloc} gives $|\varphi_u(y)|\leq\e^{-m^\ast\|u-y\|}$. Also, \eqref{defloc} implies
\beq
|\varphi_u(v)|\leq\e^{m^\ast\ell_\tau}\e^{-m^\ast\|v-u\|}\qtx{for all}v\in\Lambda_\ell.
\eeq
Therefore
\beq
|\langle\varphi_u,\psi\rangle|=\left|\sum_{v\in\Lambda_\ell}{\varphi_u(v)}\psi(v)\right|\leq\sum_{v\in\Lambda_\ell}\e^{-m^\ast(\|v-u\|-\ell_\tau)}|\psi(v)|,
\eeq
so we get
\begin{align}
&|\varphi_u(y)\langle\varphi_u,\psi\rangle|\leq\sum_{v\in\Lambda_\ell}\e^{-m^\ast(\|u-y\|-\ell_\tau+\|v-u\|)}|\psi(v)|
\\\nonumber&\quad\leq(\ell+1)^d\e^{-m^\ast(\|u-y\|-\ell_\tau)-m^\ast\|v_3-u\|}|\psi(v_3)|
\\\nonumber&\quad\leq\e^{-m'_4\|u-y\|-m^\ast\|v_3-u\|}|\psi(v_3)|
\\\nonumber&\quad\leq\e^{-m'_4\max\{\|v_3-y\|,\|u-y\|\}}|\psi(v_3)|\leq\e^{-m'_4\max\{\|v_3-y\|,\frac{1}{2}\ell_{\widetilde{\tau}}\}}|\psi(v_3)|
\end{align}
for some $ v_3\in\Lambda_\ell$, where we used $\|u-y\|\geq\tfrac{1}{2}\ell_{\widetilde{\tau}}$ and took
\beq
m'_4\geq m^\ast\left(1-4\ell^{\frac{\tau-1}{2}}\right)-C_d\tfrac{\log\ell}{\ell_{\widetilde{\tau}}}.
\eeq
Therefore
\begin{align}\label{sum5str}
\left|\sum_{u\in\partial_{\mathrm{in}}^{\Theta,\ell_\tau}\Lambda_\ell}\varphi_u(y)\langle\varphi_u,\psi\rangle\right|
&\leq\ell^d\e^{-m'_4\max\{\|v_3-y\|,\frac{1}{2}\ell_{\widetilde{\tau}}\}}|\psi(v_3)|
\\\nonumber&\leq\e^{-m'_5\max\{\|v_3-y\|,\frac{1}{2}\ell_{\widetilde{\tau}}\}}|\psi(v_3)|
\end{align}
for some $v_3\in\Lambda_\ell$, where
\beq
m'_5\geq{m^\ast_4}'-C_d\tfrac{\log\ell}{\ell_{\widetilde{\tau}}}\geq m^\ast\left(1-4\ell^{\frac{\tau-1}{2}}\right)-C_d\tfrac{\log\ell}{\ell_{\widetilde{\tau}}}.
\eeq
Combining \eqref{sum1}, \eqref{sum4str}, and \eqref{sum5str}, we conclude that
\beq\label{decayest1str}
|\psi(y)|\leq\e^{-m^\ast_3\max\{\|y_1-y\|,\frac{1}{2}\ell_{\widetilde{\tau}}\}}|\psi(y_1)|\qtx{for some}y_1\in\Lambda_\ell\cup\partial_{\mathrm{ex}}^\Theta\Lambda_\ell,
\eeq
where $m^\ast_3$ is given in \eqref{mst3}. If $y_1\not\in\partial^{\Theta,\ell_{\widetilde{\tau}}}\Lambda_\ell$ we repeat the procedure to estimate $|\psi(y_1)|$.
Since we can suppose $\psi(y)\neq0$ without loss of generality, the procedure must stop after finitely many times, and at that time we must have
\beq\label{decayest2str}
|\psi(y)|\leq\e^{-m^\ast_3\max\{\|\widetilde{y}-y\|,\frac{1}{2}\ell_{\widetilde{\tau}}\}}|\psi(\widetilde{y})|
\qtx{for some}\widetilde{y}\in\partial^{\Theta,\ell_{\widetilde{\tau}}}\Lambda_\ell.
\eeq

If $y\in\Lambda_\ell^{\Theta,2\ell_{\widetilde{\tau}}}$, \eqref{decayes2str} follows immediately from \eqref{decayest2str}.
\end{proof}

\begin{lemma}\label{ideigsysall}
Given a finite set $\Theta\subset\Z^d$, let $\{(\psi_\lambda,\lambda)\}_{\lambda\in\sigma(H_\Theta)}$ be an eigensystem for $H_\Theta$.

Then the following holds for sufficiently large $\ell$:
\begin{enumerate}
\item Let $\Lambda_\ell(a)\subset\Theta$, where $a\in\R^d$, be a $\sharp$-localizing box with a corresponding eigensystem
$\left\{(\varphi_x^{(a)},\lambda_x^{(a)})\right\}_{x\in\Lambda_\ell(a)}$, and let $\Theta$ be $L$-polynomially level spacing
for $H$ if $\sharp$ is $\theta$-PL or $m^\ast$-ML, $L$-level spacing for $H$ if $\sharp$ is $s$-SEL or $m$-LOC.
    \begin{enumerate}
    \item There exists an injection
    \beq
    x\in\Lambda_\ell^{\Theta,\ell_\sharp}(a)\mapsto\widetilde{\lambda}_x^{(a)}\in\sigma(H_\Theta),
    \eeq
    such that for all $x\in\Lambda_\ell^{\Theta,\ell_\sharp}(a)$:
    \begin{enumerate}
    \item If $\sharp$ is $\theta$-PL, we have
    \beq\label{tildedist}
    \left|\widetilde{\lambda}_x^{(a)}-\lambda_x^{(a)}\right|\leq C_{d,\varepsilon_0}\ell^{-\left(\theta-\frac{d-1}{2}\right)},
    \eeq
    and, multiplying each $\varphi_x^{(a)}$ by a suitable phase factor,
    \beq\label{diffun}
    \left\|\psi_{\widetilde{\lambda}_x^{(a)}}-\varphi_x^{(a)}\right\|\leq2C_{d,\varepsilon_0}L^q\ell^{-\left(\theta-\frac{d-1}{2}\right)}.
    \eeq
    \item If $\sharp$ is $s$-SEL, we have
    \beq\label{tildedistsub}
    \left|\widetilde{\lambda}_x^{(a)}-\lambda_x^{(a)}\right|\leq\e^{-c_1\ell^s},\sqtx{with}c_1=c_1(\ell)\sqtx{as in \eqref{sc1},}
    \eeq
    and, multiplying each $\varphi_x^{(a)}$ by a suitable phase factor,
    \beq\label{diffunsub}
    \left\|\psi_{\widetilde{\lambda}_x^{(a)}}-\varphi_x^{(a)}\right\|\leq2\e^{-c_1\ell^s}\e^{L^\beta}.
    \eeq
    \item If $\sharp$ is $m^\ast$-ML, we have
    \beq\label{tildediststr}
    \left|\widetilde{\lambda}_x^{(a)}-\lambda_x^{(a)}\right|\leq\e^{-m^\ast_1\ell_\tau},\sqtx{with}m^\ast_1=m^\ast_1(\ell)\sqtx{as in \eqref{mst1},}
    \eeq
    and, multiplying each $\varphi_x^{(a)}$ by a suitable phase factor,
    \beq\label{diffunstr}
    \left\|\psi_{\widetilde{\lambda}_x^{(a)}}-\varphi_x^{(a)}\right\|\leq2\e^{-m^\ast_1\ell_\tau}L^q.
    \eeq
    \item If $\sharp$ is $m$-LOC, we have
    \beq
    \left|\widetilde{\lambda}_x^{(a)}-\lambda_x^{(a)}\right|\leq\e^{-m_1\ell_\tau},\sqtx{with}m_1=m_1(\ell)\sqtx{as in \eqref{mst1},}
    \eeq
    and, multiplying each $\varphi_x^{(a)}$ by a suitable phase factor,
    \beq\label{diffunloc}
    \left\|\psi_{\widetilde{\lambda}_x^{(a)}}-\varphi_x^{(a)}\right\|\leq2\e^{-m_1\ell_\tau}\e^{L^\beta}.
    \eeq
    \end{enumerate}
    \item Set
    \beq
    \sigma_{\{a\}}(H_\Theta):=\left\{\widetilde{\lambda}_x^{(a)};x\in\Lambda_\ell^{\Theta,\ell_\sharp}(a)\right\}.
    \eeq
    Then if $\lambda\in\sigma_{\{a\}}(H_\Theta)$, for all $y\in\Theta\setminus\Lambda_\ell(a)$ we have
    \beq\label{psidec0}
    |\psi_\lambda(y)|\leq\begin{cases}
    2C_{d,\varepsilon_0}L^q\ell^{-\left(\theta-\frac{d-1}{2}\right)}&\text{if}\;\sharp\sqtx{is}\theta\text{-PL}\\
    2\e^{-c_1\ell^s}\e^{L^\beta}&\text{if}\;\sharp\sqtx{is}s\text{-SEL}\\
    2\e^{-m^\ast_1\ell_\tau}L^q&\text{if}\;\sharp\sqtx{is}m^\ast\text{-ML}\\
    2\e^{-m_1\ell_\tau}\e^{L^\beta}&\text{if}\;\sharp\sqtx{is}m\text{-LOC}
    \end{cases}.
    \eeq
    \item If $\lambda\in\sigma(H_\Theta)\setminus\sigma_{\{a\}}(H_\Theta)$, for all $x\in\Lambda_\ell^{\Theta,\ell_\sharp}(a)$ we have
    \beq\label{distlambda}
    \left|\lambda-\lambda_x^{(a)}\right|\geq\begin{cases}
    \tfrac{1}{2}L^{-q}&\text{if}\;\sharp\sqtx{is}\theta\text{-PL}\sqtx{or}m^\ast\text{-ML}\\
    \tfrac{1}{2}\e^{-L^\beta}&\text{if}\;\sharp\sqtx{is}s\text{-SEL}\sqtx{or}m\text{-LOC}
    \end{cases},
    \eeq
    and for all $y\in\Lambda_\ell^{\Theta,2\ell_\sharp}(a)$,
   \beq\label{psidec1}
    |\psi_\lambda(y)|\leq\begin{cases}
    C_{d,\varepsilon_0}L^q\ell^{-(\theta-2d)}|\psi_\lambda(y_1)|&\text{if}\;\sharp\sqtx{is}\theta\text{-PL}\\
    \e^{-c_2\ell^s}|\psi_\lambda(y_1)|&\text{if}\;\sharp\sqtx{is}s\text{-SEL}\\
    \e^{-m^\ast_2\ell_\tau}|\psi_\lambda(y_1)|&\text{if}\;\sharp\sqtx{is}m^\ast\text{-ML}\\
    \e^{-m_2\ell_\tau}|\psi_\lambda(y_1)|&\text{if}\;\sharp\sqtx{is}m\text{-LOC}
    \end{cases}
    \eeq
    for some $y_1\in\partial^{\Theta,2\ell_\sharp}\Lambda_\ell(a)$, where $c_2=c_2(\ell)$ as in \eqref{sc2}, $m^\ast_2=m^\ast_2(\ell)$ as in \eqref{mst2}, $m_2=m_2(\ell)$ as in \eqref{m2}. Moreover, for all $y\in\Lambda_\ell^{\Theta,2\ell_{\widetilde{\tau}}}(a)$,
    \beq\label{pdec01}
    |\psi_\lambda(y)|\leq\begin{cases}
    \e^{-m^\ast_3\|y_2-y\|}|\psi_\lambda(y_2)|&\text{if}\;\sharp\sqtx{is}m^\ast\text{-ML}\\
    \e^{-m_3\|y_2-y\|}|\psi_\lambda(y_2)|&\text{if}\;\sharp\sqtx{is}m\text{-LOC}
    \end{cases}
    \eeq
    for some $y_2\in\partial^{\Theta,\ell_{\widetilde{\tau}}}\Lambda_\ell(a)$, where $m^\ast_3=m^\ast_3(\ell)$ as in \eqref{mst3}, $m_3=m_3(\ell)$ as in \eqref{m3}.
    \end{enumerate}
\item Let $\{\Lambda_\ell(a)\}_{a\in\mathcal{G}}$, where $\mathcal{G}\subset\R^d$ such that $\Lambda_\ell(a)\subset\Theta$ for all $a\in\mathcal{G}$, be a collection of $\sharp$
boxes with corresponding eigensystems $\left\{(\varphi_x^{(a)},\lambda_x^{(a)})\right\}_{x\in\Lambda_\ell(a)}$ and let $\Theta$ be $L$-polynomially level spacing for $H$
if $\sharp$ is $\theta$-PL or $m^\ast$-ML,
$L$-level spacing for $H$ if $\sharp$ is $s$-SEL or $m$-LOC. Set
    \begin{align}\label{defE}
    \mathcal{E}_{\mathcal{G}}^\Theta(\lambda)&=\left\{\lambda_x^{(a)};a\in\mathcal{G},x\in\Lambda_\ell^{\Theta,\ell_\sharp}(a),
    \widetilde{\lambda}_x^{(a)}=\lambda\right\}\sqtx{for}\lambda\in\sigma(H_\Theta),
    \\\nonumber\sigma_{\mathcal{G}}(H_\Theta)&=\left\{\lambda\in\sigma(H_\Theta);\mathcal{E}_{\mathcal{G}}^\Theta(\lambda)\neq\emptyset\right\}=\bigcup_{;a\in\mathcal{G}}\sigma_{\{a\}}(H_\Theta).
    \end{align}
    \begin{enumerate}
    \item For $a,b\in\mathcal{G}$, $a\neq b$, if $x\in\Lambda_\ell^{\Theta,\ell_\sharp}(a)$ and $y\in\Lambda_\ell^{\Theta,\ell_\sharp}(b)$,
    \beq\label{difxy}
    \lambda_x^{(a)},\lambda_x^{(b)}\in\mathcal{E}_{\mathcal{G}}^\Theta(\lambda)\Longrightarrow\|x-y\|<2\ell_\sharp.
    \eeq
    As a consequence,
    \beq\label{sigmaab}
    \Lambda_\ell(a)\cap\Lambda_\ell(b)=\emptyset\Longrightarrow\sigma_{\{a\}}(H_\Theta)\cap\sigma_{\{b\}}(H_\Theta)=\emptyset.
    \eeq
    \item If $\lambda\in\sigma_{\mathcal{G}}(H_\Theta)$, we have for all $y\in\Theta\setminus\Theta_{\mathcal{G}}$, where $\Theta_{\mathcal{G}}:=\bigcup_{a\in\mathcal{G}}\Lambda_\ell(a)$,
    \beq\label{psidec3}
    |\psi_\lambda(y)|\leq\begin{cases}
    2C_{d,\varepsilon_0}L^q\ell^{-\left(\theta-\frac{d-1}{2}\right)}&\text{if}\;\sharp\sqtx{is}\theta\text{-PL}\\
    2\e^{-c_1\ell^s}\e^{L^\beta}&\text{if}\;\sharp\sqtx{is}s\text{-SEL}\\
    2\e^{-m^\ast_1\ell_\tau}L^q&\text{if}\;\sharp\sqtx{is}m^\ast\text{-ML}\\
    2\e^{-m_1\ell_\tau}\e^{L^\beta}&\text{if}\;\sharp\sqtx{is}m\text{-LOC}
    \end{cases}.
    \eeq
    \item If $\lambda\in\sigma(H_\Theta)\setminus\sigma_{\mathcal{G}}(H_\Theta)$, we have for all $y\in\Theta'_{\mathcal{G}}:=\bigcup_{a\in\mathcal{G}}\Lambda_\ell^{\Theta,2\ell_\sharp}(a)$,
    \beq\label{psidec4}
    |\psi_\lambda(y)|\leq\begin{cases}
    C_{d,\varepsilon_0}L^q\ell^{-(\theta-2d)}&\text{if}\;\sharp\sqtx{is}\theta\text{-PL}\\
    \e^{-c_2\ell^s}&\text{if}\;\sharp\sqtx{is}s\text{-SEL}\\
    \e^{-m^\ast_2\ell_\tau}&\text{if}\;\sharp\sqtx{is}m^\ast\text{-ML}\\
    \e^{-m_2\ell_\tau}&\text{if}\;\sharp\sqtx{is}m\text{-LOC}
    \end{cases}.
    \eeq
    \item If $|\Theta|\leq(L+1)^d$, we have
    \beq
    |\Theta'_{\mathcal{G}}|\leq|\sigma_{\mathcal{G}}(H_\Theta)|\leq|\Theta_{\mathcal{G}}|.
    \eeq
    \end{enumerate}
\end{enumerate}
\end{lemma}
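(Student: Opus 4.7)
The plan is a matching principle: Lemma~\ref{distdecayall} shows that every interior eigenvalue $\lambda_x^{(a)}$ of $H_{\Lambda_\ell(a)}$ lies within a small error of $\sigma(H_\Theta)$, and the $L$-(polynomial) level spacing of $H_\Theta$ then promotes this approximate matching to an exact one: the nearest eigenvalue $\widetilde{\lambda}_x^{(a)}\in\sigma(H_\Theta)$ is uniquely defined, and the assignment $x\mapsto\widetilde{\lambda}_x^{(a)}$ is injective once the parameter inequalities in \eqref{thetaqp}--\eqref{sfix} are in force.

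For part~(1a), I would define $\widetilde{\lambda}_x^{(a)}$ as the eigenvalue of $H_\Theta$ closest to $\lambda_x^{(a)}$, so that \eqref{tildedist}--\eqref{tildediststr} and their analogues are immediate from Lemma~\ref{distdecayall}. Injectivity is a two-scale comparison: for distinct $x,y\in\Lambda_\ell^{\Theta,\ell_\sharp}(a)$ the polynomial level spacing of $H_{\Lambda_\ell(a)}$ gives $|\lambda_x^{(a)}-\lambda_y^{(a)}|\geq\ell^{-q}$, while each is within the Lemma~\ref{distdecayall} error of its image; the conditions on $\theta,\gamma_1$ make that error $\ll\ell^{-q}$ in all four cases, so $\widetilde{\lambda}_x^{(a)}\neq\widetilde{\lambda}_y^{(a)}$. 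The bounds \eqref{diffun}--\eqref{diffunloc} come from spectral projection: letting $P_x$ be the rank-one spectral projection of $H_\Theta$ onto $\widetilde{\lambda}_x^{(a)}$, the $L$-level spacing of $H_\Theta$ yields $\|(H_\Theta-\widetilde{\lambda}_x^{(a)})^{-1}(I-P_x)\|\leq 2L^q$ in the polynomial cases and $\leq 2\e^{L^\beta}$ in the exponential cases, while $\|(H_\Theta-\widetilde{\lambda}_x^{(a)})\varphi_x^{(a)}\|\leq|\widetilde{\lambda}_x^{(a)}-\lambda_x^{(a)}|+\|(H_\Theta-\lambda_x^{(a)})\varphi_x^{(a)}\|$ is controlled by Lemma~\ref{distdecayall}. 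Choosing the phase of $\varphi_x^{(a)}$ so that $\langle\psi_{\widetilde{\lambda}_x^{(a)}},\varphi_x^{(a)}\rangle\geq0$ converts the resulting bound on $\|(I-P_x)\varphi_x^{(a)}\|$ into the stated bound on $\|\psi_{\widetilde{\lambda}_x^{(a)}}-\varphi_x^{(a)}\|$.

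Parts~(1b) and~(1c) are direct consequences. For~(1b), since $\varphi_x^{(a)}$ vanishes on $\Theta\setminus\Lambda_\ell(a)$ after trivial extension, $|\psi_{\widetilde{\lambda}_x^{(a)}}(y)|\leq\|\psi_{\widetilde{\lambda}_x^{(a)}}-\varphi_x^{(a)}\|$ for such $y$, giving \eqref{psidec0}. For~(1c), if $\lambda\in\sigma(H_\Theta)\setminus\sigma_{\{a\}}(H_\Theta)$ and $|\lambda-\lambda_x^{(a)}|<\tfrac12 L^{-q}$, the triangle inequality combined with the smallness of $|\widetilde{\lambda}_x^{(a)}-\lambda_x^{(a)}|$ gives $|\lambda-\widetilde{\lambda}_x^{(a)}|<L^{-q}$, whereupon level spacing of $H_\Theta$ forces $\lambda=\widetilde{\lambda}_x^{(a)}\in\sigma_{\{a\}}$, a contradiction; thus \eqref{distlambda} holds, and Lemma~\ref{psidecayall} applied to $\psi_\lambda$ on the box $\Lambda_\ell(a)$ supplies \eqref{psidec1} and \eqref{pdec01}. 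Part~(2a) uses the same closeness: if $\widetilde{\lambda}_x^{(a)}=\widetilde{\lambda}_y^{(b)}$, then after phase choice $\varphi_x^{(a)}$ and $\varphi_y^{(b)}$ are both close to $\psi_\lambda$, hence close to each other, but $\sharp$-localization around $x$ and around $y$ forces them to be nearly orthogonal when $\|x-y\|\geq 2\ell_\sharp$, proving \eqref{difxy}; a short volume computation then gives \eqref{sigmaab} since disjoint boxes force any pair of interior points to satisfy $\|x-y\|>2\ell_\sharp$.

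For the collection statements in~(2), I would assemble the single-box results. For~(2b), pick $a\in\mathcal{G}$ with $\lambda\in\sigma_{\{a\}}$ and apply~(1b), noting $\Theta\setminus\Theta_\mathcal{G}\subset\Theta\setminus\Lambda_\ell(a)$. For~(2c), apply~(1c) at every $a\in\mathcal{G}$ and use $\|\psi_\lambda\|_\infty\leq 1$ on the right-hand side of \eqref{psidec1}. The cardinality claim~(2d) is a dimension argument: by~(2b) the family $\{\chi_{\Theta_\mathcal{G}}\psi_\lambda\}_{\lambda\in\sigma_\mathcal{G}}$ is approximately orthonormal in $\ell^2(\Theta_\mathcal{G})$, giving $|\sigma_\mathcal{G}|\leq|\Theta_\mathcal{G}|$; dually, by~(2c) the functions $\chi_{\Theta'_\mathcal{G}}\psi_\lambda$ for $\lambda\notin\sigma_\mathcal{G}$ are negligible, so the spectral projector onto $\sigma_\mathcal{G}$ is approximately the identity on $\ell^2(\Theta'_\mathcal{G})$, giving $|\sigma_\mathcal{G}|\geq|\Theta'_\mathcal{G}|$. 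The main obstacle is the uniform bookkeeping across the four regimes: the approximation error must beat both the internal level spacing $\ell^{-q}$ and the ambient level spacing of $H_\Theta$ (which is the inverse of $L^q$ or of $\e^{-L^\beta}$), and this is exactly why the constants $c_1,m_1,m^\ast_1$ in Lemma~\ref{distdecayall} and the parameter constraints~\eqref{thetaqp}--\eqref{sfix} are tuned the way they are.
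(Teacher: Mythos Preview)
Your proposal is correct and follows essentially the same approach as the paper: define $\widetilde{\lambda}_x^{(a)}$ as the nearest $H_\Theta$-eigenvalue via Lemma~\ref{distdecayall}, use the two level-spacing hypotheses (on $\Lambda_\ell(a)$ and on $\Theta$) for uniqueness and injectivity, invoke a spectral-perturbation estimate (the paper cites \cite[Lemma~3.3]{EK}, which is exactly your resolvent argument) for \eqref{diffun}--\eqref{diffunloc}, and deduce (i)(b), (i)(c), (ii)(a)--(c) from these together with Lemma~\ref{psidecayall}; the cardinality bound (ii)(d) is done in the paper via the projection estimate $\|(1-P_{\mathcal{G}})\chi_{\Theta'_{\mathcal{G}}}\|<1$ and \cite[Lemma~A.1]{EK}, which is precisely the ``approximate identity on $\ell^2(\Theta'_{\mathcal{G}})$'' picture you describe. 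One cosmetic remark: your phrase ``a short volume computation'' for \eqref{sigmaab} is misleading---no volume count is needed, only the observation (which the paper states directly) that $x\in\Lambda_\ell^{\Theta,\ell_\sharp}(a)$, $y\in\Lambda_\ell^{\Theta,\ell_\sharp}(b)$ with disjoint boxes forces $\|x-y\|\geq 2\ell_\sharp$, contradicting \eqref{difxy}.
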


\begin{proof}
Let $\Lambda_\ell(a)\subset\Theta$, where $a\in\R^d$, be a $\theta$-polynomially localizing box with a corresponding eigensystem $\left\{(\varphi_x^{(a)},\lambda_x^{(a)})\right\}_{x\in\Lambda_\ell(a)}$.
It follows from Lemma~\ref{distdecayall} that there exists $\widetilde{\lambda}_x^{(a)}\in\sigma(H_\Theta)$ satisfying \eqref{tildedist} for $x\in\Lambda_\ell^{\Theta,\ell'}(a)$. $\widetilde{\lambda}_x^{(a)}$
is unique since $\Theta$ is $L$-polynomially level spacing for $H_\Theta$ and $q<\gamma_1q<\theta-\tfrac{d-1}{2}$. Moreover, we have $\widetilde{\lambda}_x^{(a)}\neq\widetilde{\lambda}_y^{(a)}$
if $x,y\in\Lambda_\ell^{\Theta,\ell'}(a)$, $x\neq y$, since
\begin{align}
\left|\widetilde{\lambda}_x^{(a)}-\widetilde{\lambda}_y^{(a)}\right|&\geq\left|\lambda_x^{(a)}-\lambda_y^{(a)}\right|-\left|\widetilde{\lambda}_x^{(a)}-\lambda_x^{(a)}\right|
-\left|\widetilde{\lambda}_y^{(a)}-\lambda_y^{(a)}\right|
\\\nonumber&\geq\ell^{-q}-2C_{d,\varepsilon_0}\ell^{-\left(\theta-\frac{d-1}{2}\right)}\geq\tfrac{1}{2}\ell^{-q},
\end{align}
$\Lambda_\ell(a)$ is polynomially level spacing for $H_{\Lambda_\ell(a)}$, and $q<\theta-\tfrac{d-1}{2}$. \eqref{diffun} follows from \cite[Lemma~3.3]{EK}.

If $\lambda\in\sigma_{\{a\}}(H_\Theta)$, we have $\lambda=\widetilde{\lambda}_x^{(a)}$ for some $x\in\Lambda_\ell^{\Theta,\ell'}(a)$, thus \eqref{psidec0} follows from \eqref{diffun} as $\varphi_x^{(a)}(y)=0$
for all $y\in\Theta\setminus\Lambda_\ell(a)$.

If $\lambda\in\sigma(H_\Theta)\setminus\sigma_{\{a\}}(H_\Theta)$, for all $x\in\Lambda_\ell^{\Theta,\ell'}(a)$ we have
\beq\label{distlambda2}
\left|\lambda-\lambda_x^{(a)}\right|\geq\left|\lambda-\widetilde{\lambda}_x^{(a)}\right|-\left|\widetilde{\lambda}_x^{(a)}-\lambda_x^{(a)}\right|\geq L^{-q}
-C_{d,\varepsilon_0}\ell^{-\left(\theta-\frac{d-1}{2}\right)}\geq\tfrac{1}{2}L^{-q},
\eeq
since $\Theta$ is $L$-polynomially level spacing for $H_\Theta$, we have \eqref{tildedist}, and $q<\gamma_1q<\theta-\tfrac{d-1}{2}$. Therefore \eqref{psidec1} follows from Lemma~\ref{psidecayall}(i).
(Note that \eqref{pdec01} follows from Lemma~\ref{psidecayall}(ii).)

Now let $\{\Lambda_\ell(a)\}_{a\in\mathcal{G}}$, where $\mathcal{G}\subset\R^d$ such that $\Lambda_\ell(a)\subset\Theta$ for all $a\in\mathcal{G}$, be a collection of $\theta$-polynomially localizing
boxes with corresponding eigensystems $\left\{(\varphi_x^{(a)},\lambda_x^{(a)})\right\}_{x\in\Lambda_\ell(a)}$. Let $\lambda\in\sigma(H_\Theta)$, $a,b\in\mathcal{G}$, $a\neq b$, $x\in\Lambda_\ell^{\Theta,\ell'}(a)$ and
$y\in\Lambda_\ell^{\Theta,\ell'}(b)$. Assume $\lambda_x^{(a)},\lambda_x^{(b)}\in\mathcal{E}_{\mathcal{G}}^\Theta(\lambda)$, then it follows from \eqref{diffun} that
\beq
\left\|\varphi_x^{(a)}-\varphi_y^{(b)}\right\|\leq4C_{d,\varepsilon_0}L^q\ell^{-\left(\theta-\frac{d-1}{2}\right)},
\eeq
thus
\beq\label{invarphi1}
\left|\left\langle\varphi_x^{(a)},\varphi_y^{(b)}\right\rangle\right|\geq\Re\left\langle\varphi_x^{(a)},\varphi_y^{(b)}\right\rangle\geq1-8C_{d,\varepsilon_0}^2L^{2q}\ell^{-2\left(\theta-\frac{d-1}{2}\right)}.
\eeq
On the other hand, \eqref{defploc} gives
\beq\label{invarphi2}
\|x-y\|\geq2\ell'\Longrightarrow\left|\left\langle\varphi_x^{(a)},\varphi_y^{(b)}\right\rangle\right|\leq(\ell+1)^d\ell^{-\theta}.
\eeq
Combining \eqref{invarphi1} and \eqref{invarphi2}, we conclude that
\beq
\lambda_x^{(a)},\lambda_x^{(b)}\in\mathcal{E}_{\mathcal{G}}^\Theta(\lambda)\Longrightarrow\|x-y\|<2\ell'.
\eeq

To prove \eqref{sigmaab}, let $a,b\in\mathcal{G}$, $a\neq b$. Assume $\Lambda_\ell(a)\cap\Lambda_\ell(b)=\emptyset$, then
\beq
x\in\Lambda_\ell^{\Theta,\ell'}(a)\qtx{and}y\in\Lambda_\ell^{\Theta,\ell'}(b)\Longrightarrow\|x-y\|\geq2\ell',
\eeq
thus it follows from \eqref{difxy} that $\sigma_{\{a\}}(H_\Theta)\cap\sigma_{\{b\}}(H_\Theta)=\emptyset$.

Parts (ii)(b) and (ii)(c) follow immediately from parts (i)(b) and (i)(c) respectively. To prove part (ii)(d), we let $P_{\mathcal{G}}$ be the orthogonal projection onto the span of
$\{\psi_\lambda;\lambda\in\sigma_{\mathcal{G}}(H_\Theta)\}$. \eqref{psidec4} gives
\beq
\|(1-P_{\mathcal{G}})\delta_y\|\leq C_{d,\varepsilon_0}L^q\ell^{-(\theta-2d)}|\Theta|^{\frac{1}{2}}\qtx{for all}y\in\Theta'_{\mathcal{G}},
\eeq
thus
\beq
\|(1-P_{\mathcal{G}})\chi_{\Theta'_{\mathcal{G}}}\|\leq|\Theta'_{\mathcal{G}}|^{\frac{1}{2}}|\Theta|^{\frac{1}{2}}C_{d,\varepsilon_0}L^q\ell^{-(\theta-2d)}
\leq|\Theta|C_{d,\varepsilon_0}L^q\ell^{-(\theta-2d)}.
\eeq
If $|\Theta|\leq(L+1)^d$, we have
\beq
\|(1-P_{\mathcal{G}})\chi_{\Theta'_{\mathcal{G}}}\|\leq(L+1)^dC_{d,\varepsilon_0}L^q\ell^{-(\theta-2d)}<1
\eeq
since $d+q<\gamma_1(d+q)<\theta-2d$, so it follows from \cite[Lemma~A.1]{EK} that
\beq
|\Theta'_{\mathcal{G}}|=\tr\chi_{\Theta'_{\mathcal{G}}}\leq\tr P_{\mathcal{G}}=|\sigma_{\mathcal{G}}(H_\Theta)|.
\eeq
Using a similar argument and \eqref{psidec3}, we can prove $|\sigma_{\mathcal{G}}(H_\Theta)|\leq|\Theta_{\mathcal{G}}|$.
\end{proof}

\subsection{Buffered subsets}\label{sectbuff}
For boxes $\Lambda_\ell\subset\Lambda_L$ that are not $\sharp$ for $H$, we will surround them with a buffer of $\sharp$ boxes and study eigensystems for the augmented subset.

\begin{definition}\label{defbufall}
Let $\Lambda_L=\Lambda_L(x_0)$ and $x_0\in\R^d$. $\Upsilon\subset\Lambda_L$ is called a $\sharp$-buffered subset of $\Lambda_L$, where $\sharp$ stands for $\theta$-PL,
$s$-SEL, $m^\ast$-ML or $m$-LOC, if the following holds:
\begin{enumerate}
\item $\Upsilon$ is a connected set in $\Z^d$ of the form
\beq
\Upsilon=\bigcup_{j=1}^J\Lambda_{R_j}(a_j)\cap\Lambda_L,
\eeq
where $J\in\N$, $a_1,a_2,\ldots,a_J\in\Lambda^\R_L$, and $\ell\leq R_j\leq L$ for $j=1,2,\ldots,J$.
\item $\Upsilon$ is $L$-polynomially level spacing for $H$ if $\sharp$ is $\theta$-PL or $m^\ast$-ML, $L$-level spacing for $H$ if $\sharp$ is $s$-SEL or $m$-LOC.
\item[(iii)]  There exists $\mathcal{G}_\Upsilon\subset\Lambda^\R_L$ such that:
\begin{enumerate}
\item For all $a\in\mathcal{G}_\Upsilon$ we have $\Lambda_\ell(a)\subset\Upsilon$, $\Lambda_\ell(a)$ is a $\sharp$ box for $H$.
\item For all $y\in\partial_\mathrm{in}^{\Lambda_L}\Upsilon$ there exists $a_y\in\mathcal{G}_\Upsilon$ such that $y\in\Lambda_\ell^{\Upsilon,2\ell_\sharp}(a_y)$.
\end{enumerate}
\end{enumerate}
In this case we set
\beq\label{defUpsck}
\widecheck{\Upsilon}=\bigcup_{a\in\mathcal{G}_\Upsilon}\Lambda_\ell(a),\quad\widecheck{\Upsilon'}=\bigcup_{a\in\mathcal{G}_\Upsilon}\Lambda_\ell^{\Upsilon, 2\ell_\sharp}(a),
\quad\widehat{\Upsilon}=\Upsilon\setminus\widecheck{\Upsilon},\qtx{and}\widehat{\Upsilon'}=\Upsilon\setminus\widecheck{\Upsilon'}.
\eeq
($\widecheck{\Upsilon}=\Upsilon_{\mathcal{G}_\Upsilon}$ and $\widecheck{\Upsilon'}=\Upsilon'_{\mathcal{G}_\Upsilon}$ in the notation of Lemma~\ref{ideigsysall}.)
\end{definition}

\begin{lemma}\label{lembuffall}
Given a $\sharp$-buffered subset $\Upsilon$ of $\Lambda_L$, let $\{(\psi_\nu,\nu)\}_{\nu\in\sigma(H_\Upsilon)}$ be an eigensystem for $H_\Upsilon$. Let
$\mathcal{G}=\mathcal{G}_\Upsilon$ and set
\beq\label{sigmabad}
\sigma_{\mathcal{B}}(H_{{\Upsilon}})=\sigma(H_\Upsilon)\setminus\sigma_{\mathcal{G}}(H_\Upsilon),
\eeq
where $\sigma_{\mathcal{G}}(H_\Upsilon)$ is as in \eqref{defE}. Then the following holds for sufficiently large $\ell$:
\begin{enumerate}
\item If $\nu\in\sigma_{\mathcal{B}}(H_\Upsilon)$ we have for all $y\in\widecheck{\Upsilon'}$:
\beq\label{psidec5}
    |\psi_\lambda(y)|\leq\begin{cases}
    C_{d,\varepsilon_0}L^q\ell^{-(\theta-2d)}&\text{if}\;\sharp\sqtx{is}\theta\text{-PL}\\
    \e^{-c_2\ell^s},\sqtx{with}c_2=c_2(\ell)\sqtx{as in \eqref{sc2}}&\text{if}\;\sharp\sqtx{is}s\text{-SEL}\\
    \e^{-m^\ast_2\ell_\tau},\sqtx{with}m^\ast_2=m^\ast_2(\ell)\sqtx{as in \eqref{mst2}}&\text{if}\;\sharp\sqtx{is}m^\ast\text{-ML}\\
    \e^{-m_2\ell_\tau},\sqtx{with}m_2=m_2(\ell)\sqtx{as in \eqref{m2}}&\text{if}\;\sharp\sqtx{is}m\text{-LOC}
    \end{cases},
    \eeq
and
\beq
\left|\widehat{\Upsilon}\right|\leq|\sigma_{\mathcal{B}}(H_\Upsilon)|\leq\left|\widehat{\Upsilon'}\right|.
\eeq
\item Let $\Lambda_L$ be polynomially level spacing for $H$ if $\sharp$ is $\theta$-PL or $m^\ast$-ML, level spacing for $H$ if $\sharp$ is $s$-SEL or $m$-LOC, and let
$\{(\phi_\lambda,\lambda)\}_{\lambda\in\sigma(H_{\Lambda_L})}$ be an eigensystem for $H_{\Lambda_L}$. There exists an injection
\beq\label{injectbad}
\nu\in\sigma_{\mathcal{B}}(H_\Upsilon)\mapsto\widetilde{\nu}\in\sigma(H_{\Lambda_L})\setminus\sigma_{\mathcal{G}}(H_{\Lambda_L}),
\eeq
such that for all $\nu\in\sigma_{\mathcal{B}}(H_\Upsilon)$:
 \begin{enumerate}
    \item If $\sharp$ is $\theta$-PL, we have
    \beq\label{distnu}
    |\widetilde{\nu}-\nu|\leq C_{d,\varepsilon_0}L^{\frac{d}{2}+q}\ell^{-(\theta-2d)},
    \eeq
    and, multiplying each $\psi_\nu$ by a suitable phase factor,
    \beq\label{diffun2}
    \|\phi_{\widetilde{\nu}}-\psi_\nu\|\leq2C_{d,\varepsilon_0}L^{\frac{d}{2}+2q}\ell^{-(\theta-2d)}.
    \eeq
    \item If $\sharp$ is $s$-SEL, we have
    \beq\label{distnusub}
    |\widetilde{\nu}-\nu|\leq\e^{-c_3\ell^s},\sqtx{where}c_3=c_3(\ell)\geq1-C_{d,\varepsilon_0}L^\beta\ell^{-s},
    \eeq
    and, multiplying each $\psi_\nu$ by a suitable phase factor,
    \beq\label{diffun2sub}
    \|\phi_{\widetilde{\nu}}-\psi_\nu\|\leq2\e^{-c_3\ell^s}\e^{L^\beta}.
    \eeq
    \item If $\sharp$ is $m^\ast$-ML, we have
    \beq\label{distnustr}
    |\widetilde{\nu}-\nu|\leq\e^{-m^\ast_4\ell_\tau},\sqtx{where}m^\ast_4=m^\ast_4(\ell)\geq m^\ast-C_{d,\varepsilon_0}\gamma_1q\tfrac{\log\ell}{\ell_\tau},
    \eeq
    and, multiplying each $\psi_\nu$ by a suitable phase factor,
    \beq\label{diffun2str}
    \|\phi_{\widetilde{\nu}}-\psi_\nu\|\leq2\e^{-m^\ast_4\ell_\tau}L^q.
    \eeq
    \item If $\sharp$ is $m$-LOC, we have
    \beq
    |\widetilde{\nu}-\nu|\leq\e^{-m_4\ell_\tau},\sqtx{where}m_4=m_4(\ell)\geq m-C_{d,\varepsilon_0}\ell^{\gamma\beta-\tau},
    \eeq
    and, multiplying each $\psi_\nu$ by a suitable phase factor,
    \beq\label{diffun2loc}
    \|\phi_{\widetilde{\nu}}-\psi_\nu\|\leq2\e^{-m_4\ell_\tau}\e^{L^\beta}.
    \eeq
    \end{enumerate}
\end{enumerate}
\end{lemma}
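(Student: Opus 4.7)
The plan is to derive part~(i) as a direct consequence of Lemma~\ref{ideigsysall}(ii) applied to $H_\Upsilon$ with the collection $\{\Lambda_\ell(a)\}_{a\in\mathcal{G}}$, where $\mathcal{G}=\mathcal{G}_\Upsilon$. The level spacing hypothesis holds by Definition~\ref{defbufall}(ii), each $\Lambda_\ell(a)$ is a $\sharp$ box by Definition~\ref{defbufall}(iii)(a), and $|\Upsilon|\le|\Lambda_L|\le(L+1)^d$. The decay bound \eqref{psidec5} is then identical to \eqref{psidec4} (with $\widecheck{\Upsilon'}$ playing the role of $\Theta'_{\mathcal{G}}$), and the cardinality inequalities follow from Lemma~\ref{ideigsysall}(ii)(d): since $|\sigma(H_\Upsilon)|=|\Upsilon|$ by simplicity of the spectrum, complementing $|\widecheck{\Upsilon'}|\le|\sigma_{\mathcal{G}}(H_\Upsilon)|\le|\widecheck{\Upsilon}|$ inside $\Upsilon$ yields $|\widehat{\Upsilon}|\le|\sigma_{\mathcal{B}}(H_\Upsilon)|\le|\widehat{\Upsilon'}|$.

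For part~(ii), the strategy is to extend each $\psi_\nu$ by zero to a unit vector $\widetilde{\psi}_\nu\in\ell^2(\Lambda_L)$ and to show it is an approximate eigenvector of $H_{\Lambda_L}$ with eigenvalue $\nu$. From the decomposition $H_{\Lambda_L}=H_\Upsilon\oplus H_{\Lambda_L\setminus\Upsilon}+\varepsilon\Gamma_{\boldsymbol{\partial}^{\Lambda_L}\Upsilon}$ one obtains
\[
\norm{(H_{\Lambda_L}-\nu)\widetilde{\psi}_\nu}\le\varepsilon\sqrt{s_d\,L^{d-1}}\,\max_{y\in\partial_{\mathrm{in}}^{\Lambda_L}\Upsilon}|\psi_\nu(y)|,
\]
and by Definition~\ref{defbufall}(iii)(b) we have $\partial_{\mathrm{in}}^{\Lambda_L}\Upsilon\subseteq\widecheck{\Upsilon'}$, so part~(i) controls the maximum on the right. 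Taking $\widetilde{\nu}$ to be the eigenvalue of $H_{\Lambda_L}$ closest to $\nu$ then produces \eqref{distnu}, \eqref{distnusub}, \eqref{distnustr}, and its $m$-LOC analogue. The polynomial (resp.\ exponential) level spacing of $\Lambda_L$ makes this closest eigenvalue unique, and the same level spacing applied to $\Upsilon$ combined with the distance estimate guarantees that $\nu\mapsto\widetilde{\nu}$ is injective on $\sigma_{\mathcal{B}}(H_\Upsilon)$.

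To verify that $\widetilde{\nu}\notin\sigma_{\mathcal{G}}(H_{\Lambda_L})$, I would argue by contradiction: if $\widetilde{\nu}=\widetilde{\lambda}_x^{(a)}$ for some $a\in\mathcal{G}$ and $x\in\Lambda_\ell^{\Lambda_L,\ell_\sharp}(a)$, then Lemma~\ref{ideigsysall}(i)(a) with $\Theta=\Lambda_L$ gives $|\widetilde{\nu}-\lambda_x^{(a)}|$ small. Since $\Lambda_\ell(a)\subseteq\Upsilon$, the same lemma applied with $\Theta=\Upsilon$ produces some $\widetilde{\lambda'}_x^{(a)}\in\sigma_{\{a\}}(H_\Upsilon)\subseteq\sigma_{\mathcal{G}}(H_\Upsilon)$ at the same small distance from $\lambda_x^{(a)}$; the triangle inequality together with the level spacing of $\Upsilon$ then forces $\nu=\widetilde{\lambda'}_x^{(a)}\in\sigma_{\mathcal{G}}(H_\Upsilon)$, contradicting $\nu\in\sigma_{\mathcal{B}}(H_\Upsilon)$. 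Finally, the eigenfunction estimates \eqref{diffun2}--\eqref{diffun2loc} follow from \cite[Lemma~3.3]{EK}, which converts the residual $\norm{(H_{\Lambda_L}-\nu)\widetilde{\psi}_\nu}$ into an $\ell^2$ bound on $\phi_{\widetilde{\nu}}-\psi_\nu$ using the spectral gap $\gtrsim L^{-q}$ (resp.\ $\gtrsim\e^{-L^\beta}$) between $\widetilde{\nu}$ and the remainder of $\sigma(H_{\Lambda_L})$.

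The main obstacle I anticipate is the quantitative bookkeeping: verifying that the $\sharp$-dependent combination of level spacing bound, box decay rate, and boundary prefactor $\varepsilon\sqrt{s_d\,L^{d-1}}$ yields exactly the decay constants $m^\ast_4$, $c_3$, etc.\ stated in \eqref{distnu}--\eqref{diffun2loc}, and checking that the parameter constraints \eqref{thetaqp}--\eqref{sfix} suffice—for large $\ell$—to ensure the residuals genuinely dominate the level spacings so both the closest-eigenvalue map and the inductive contradiction step are actually valid. The absorption of the polynomial prefactor $L^{(d-1)/2}$ into the exponential decay in the $s$-SEL, $m^\ast$-ML and $m$-LOC cases must be tracked especially carefully to recover the stated form of the constants.
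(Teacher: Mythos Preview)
Your proposal is correct and follows essentially the same route as the paper: part~(i) is indeed just Lemma~\ref{ideigsysall}(ii)(c),(d), and part~(ii) proceeds via the approximate-eigenvector bound on $(H_{\Lambda_L}-\nu)\psi_\nu$ through $\partial_{\mathrm{in}}^{\Lambda_L}\Upsilon\subset\widecheck{\Upsilon'}$, uniqueness and injectivity from the level spacing of $\Lambda_L$ and $\Upsilon$, and \cite[Lemma~3.3]{EK} for the eigenfunction bound.

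The only genuine difference is in the contradiction step showing $\widetilde{\nu}\notin\sigma_{\mathcal{G}}(H_{\Lambda_L})$. You argue via eigenvalues: from $|\widetilde{\nu}-\lambda_x^{(a)}|$ small (Lemma~\ref{ideigsysall}(i)(a) with $\Theta=\Lambda_L$), $|\widetilde{\lambda'}_x^{(a)}-\lambda_x^{(a)}|$ small (same lemma with $\Theta=\Upsilon$, using $\Lambda_\ell^{\Lambda_L,\ell_\sharp}(a)\subset\Lambda_\ell^{\Upsilon,\ell_\sharp}(a)$), and $|\widetilde{\nu}-\nu|$ small, the $L$-level spacing of $\Upsilon$ forces $\nu=\widetilde{\lambda'}_x^{(a)}\in\sigma_{\mathcal{G}}(H_\Upsilon)$. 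The paper instead argues via eigenfunctions: it combines \eqref{diffun} and \eqref{diffun2} to conclude that two orthogonal unit eigenvectors $\psi_{\lambda_1},\psi_{\nu_1}$ of $H_\Upsilon$ satisfy $\|\psi_{\lambda_1}-\psi_{\nu_1}\|<1<\sqrt{2}$. Both arguments are valid under the parameter constraints; yours is slightly more direct since it avoids invoking the eigenfunction estimates at this stage, while the paper's has the advantage of reusing bounds already in hand. One minor point: your boundary prefactor $\varepsilon\sqrt{s_d L^{d-1}}$ is the single-box bound and does not literally apply to $\partial_{\mathrm{ex}}^{\Lambda_L}\Upsilon$; the paper simply uses $|\partial_{\mathrm{ex}}^{\Lambda_L}\Upsilon|^{1/2}\le L^{d/2}$, which is what produces the exponent $\tfrac{d}{2}$ in \eqref{distnu}.
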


\begin{proof}
Part (i) follows immediately from Lemma~\ref{ideigsysall}(ii)(c) and (ii)(d).

Let $\Lambda_L$ be polynomially level spacing, and let $\{(\phi_\lambda,\lambda)\}_{\lambda\in\sigma(H_{\Lambda_L})}$ be an eigensystem for $H_{\Lambda_L}$.
It follows from \cite[Lemma~3.2]{EK} that for $\nu\in\sigma_{\mathcal{B}}(H_\Upsilon)$ we have
\begin{align}
\|(H_{\Lambda_L}-\nu)\psi_\nu\|&\leq(2d-1)\varepsilon|\partial_{\mathrm{ex}}^{\Lambda_L}\Upsilon|^{\frac{1}{2}}\left\|\varphi_{\partial_{\mathrm{in}}^{\Lambda_L}\Upsilon}\right\|_\infty
\leq(2d-1)\varepsilon L^{\frac{d}{2}}C_{d,\varepsilon_0}L^q\ell^{-(\theta-2d)}
\\\nonumber&\leq C_{d,\varepsilon_0}L^{\frac{d}{2}+q}\ell^{-(\theta-2d)},
\end{align}
where we used
$\partial_{\mathrm{in}}^{\Lambda_L}\Upsilon\subset\widecheck{\Upsilon'}$ and \eqref{psidec5}. The map in \eqref{injectbad} is a well defined injection into $\sigma(H_{\Lambda_L})$
since $\Lambda_L$ and $\Upsilon$ are $L$-polynomially level spacing for $H$, and \eqref{diffun2} follows from \eqref{distnu} and \cite[Lemma~3.3]{EK}.

To show $\widetilde\nu\not\in \sigma_{\mathcal{G}}(H_{\Lambda_L})$ for all $\nu\in\sigma_{\mathcal{B}}(H_\Upsilon)$, we assume $\widetilde{\nu_1}\in\sigma_{\mathcal{G}}(H_{\Lambda_L})$
for some $\nu_1\in\sigma_{\mathcal{B}}(H_\Upsilon)$. Then there is $a\in\mathcal{G}$ and $x\in\Lambda_\ell^{\Lambda_L,\ell'}(a)$ such that
$\lambda_x^{(a)}\in\mathcal{E}_{\mathcal{G}}^{\Lambda_L}(\widetilde{\nu_1})$.  On the other hand, $\lambda_x^{(a)}\in\mathcal{E}_{\mathcal{G}}^{\Upsilon}(\lambda_1)$ for some
$\lambda_1\in\sigma_{\mathcal{G}}(H_\Upsilon)$ by Lemma~\ref{ideigsysall}(i)(a). We conclude from \eqref{diffun} and \eqref{diffun2} that
\begin{align}
\sqrt{2}=\|\psi_{\lambda_1}-\psi_{\nu_1}\|&\leq\left\|\psi_{\lambda_1}-\varphi_x^{(a)}\right\|+\left\|\varphi_x^{(a)}-\phi_{\widetilde{\nu_1}}\right\|
+\left\|\phi_{\widetilde{\nu_1}}-\psi_{\nu_1}\right\|
\\\nonumber&\leq4C_{d,\varepsilon_0}L^q\ell^{-\left(\theta-\frac{d-1}{2}\right)}+2C_{d,\varepsilon_0}L^{\frac{d}{2}+2q}\ell^{-(\theta-2d)}<1,
\end{align}
a contradiction.
\end{proof}

\begin{lemma}\label{lembadall}
Given $\Lambda_L=\Lambda_L(x_0)$, $x_0\in\R^d$, let $\Upsilon$ be a $\sharp$-buffered subset of $\Lambda_L$. Let $\mathcal{G}=\mathcal{G}_\Upsilon$ and set
\begin{align}\label{defEUps}
\mathcal{E}_{\mathcal{G}}^{\Lambda_L}(\nu)&=\left\{\lambda_x^{(a)};a\in\mathcal{G},x\in\Lambda_\ell^{\Lambda_L,\ell_\sharp}(a),\widetilde{\lambda}_x^{(a)}=\nu\right\}
\subset\mathcal{E}_{\mathcal{G}}^\Upsilon(\nu)\sqtx{for}\nu\in\sigma(H_\Upsilon),
\\\nonumber\sigma_{\mathcal{G}}^{\Lambda_L}(H_\Upsilon)&=\left\{\nu\in\sigma(H_\Upsilon);\mathcal{E}_{\mathcal{G}}^{\Lambda_L}(\lambda)\neq\emptyset\right\}
\subset\sigma_{\mathcal{G}}(H_\Upsilon).
\end{align}
The following holds for sufficiently large $\ell$:
\begin{enumerate}
\item Let $(\psi,\lambda)$ be an eigenpair for $H_{\Lambda_L}$ such that for all $\nu\in\sigma_{\mathcal{G}}^{\Lambda_L}(H_\Upsilon)\cup\sigma_{\mathcal{B}}(H_{\Upsilon})$,
\beq\label{disteig2}
|\lambda-\nu|\geq\begin{cases}
\frac{1}{2}L^{-q}&\text{if}\;\sharp\sqtx{is}\theta\text{-PL}\sqtx{or}m^\ast\text{-ML}\\
\frac{1}{2}\e^{-L^\beta}&\text{if}\;\sharp\sqtx{is}s\text{-SEL}\sqtx{or}m\text{-LOC}
\end{cases}.
\eeq
Then for all $y\in\Upsilon^{\Lambda_L,2\ell_\sharp}$:
\begin{enumerate}
\item If $\sharp$ is $\theta$-PL, we have
\beq\label{psidec6}
|\psi(y)|\leq C_{d,\varepsilon_0}L^{2d+2q}\ell^{-(\theta-2d)}|\psi(v)|\qtx{for some}v\in\partial^{\Lambda_L,2\ell'}\Upsilon.
\eeq
\item If $\sharp$ is $s$-SEL, we have
\begin{gather}\label{psidec6sub}
|\psi(y)|\leq\e^{-c_4\ell^s}|\psi(v)|\qtx{for some}v\in\partial^{\Lambda_L,2\ell'}\Upsilon,
\\\label{sc4}\text{where}\quad
c_4=c_4(\ell)\geq1-C_{d,\varepsilon_0}L^\beta\ell^{-s}.
\end{gather}
\item If $\sharp$ is $m^\ast$-ML, we have
\begin{gather}\label{psidec6str}
|\psi(y)|\leq\e^{-m^\ast_5\ell_\tau}|\psi(v)|\qtx{for some}v\in\partial^{\Lambda_L,2\ell_\tau}\Upsilon,
\\\label{mst5}\text{where}\quad
m^\ast_5=m^\ast_5(\ell)\geq m^\ast-C_{d,\varepsilon_0}\gamma_1q\tfrac{\log\ell}{\ell_\tau}.
\end{gather}
\item If $\sharp$ is $m$-LOC, we have
\begin{gather}\label{psidec6loc}
|\psi(y)|\leq\e^{-m_5\ell_\tau}|\psi(v)|\qtx{for some}v\in\partial^{\Lambda_L,2\ell_\tau}\Upsilon,
\\\label{m5}\text{where}\quad
m_5=m_5(\ell)\geq m-C_{d,\varepsilon_0}\ell^{\gamma\beta-\tau}.
\end{gather}
\end{enumerate}

\item Let $\Lambda_L$ be polynomially level spacing for $H$ if $\sharp$ is $\theta$-PL or $m^\ast$-ML, level spacing for $H$ if $\sharp$ is $s$-SEL or $m$-LOC. Let
$\{(\psi_\lambda,\lambda)\}_{\lambda\in\sigma(H_{\Lambda_L})}$ be an eigensystem for $H_{\Lambda_L}$, and set (recalling \eqref{injectbad})
\beq
\sigma_\Upsilon(H_{\Lambda_L})=\{\widetilde\nu;\nu\in\sigma_{\mathcal{B}}(H_\Upsilon)\}\subset\sigma(H_{\Lambda_L})\setminus\sigma_{\mathcal{G}}(H_{\Lambda_L}).
\eeq
Then the condition \eqref{disteig2} is satisfied for all $\lambda\in\sigma(H_{\Lambda_L})\setminus(\sigma_{\mathcal{G}}(H_{\Lambda_L})\cup\sigma_\Upsilon(H_{\Lambda_L}))$,
so for all $y\in\Upsilon^{\Lambda_L,2\ell_\sharp}$
\beq
    |\psi_\lambda(y)|\leq\begin{cases}
    C_{d,\varepsilon_0}L^{2d+2q}\ell^{-(\theta-2d)}|\psi(v)|&\text{if}\;\sharp\sqtx{is}\theta\text{-PL}\\
   \e^{-c_4\ell^s}|\psi(v)|&\text{if}\;\sharp\sqtx{is}s\text{-SEL}\\
    \e^{-m^\ast_5\ell_\tau}|\psi(v)|&\text{if}\;\sharp\sqtx{is}m^\ast\text{-ML}\\
   \e^{-m_5\ell_\tau}|\psi(v)|&\text{if}\;\sharp\sqtx{is}m\text{-LOC}
    \end{cases}
    \eeq
\end{enumerate}
for some $v\in\partial^{\Lambda_L,2\ell_\sharp}\Upsilon$.
\end{lemma}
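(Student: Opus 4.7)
The plan is to adapt the eigensystem expansion argument of Lemmas~\ref{psidecayall} and \ref{ideigsysall}(ii), now expanding in an eigensystem $\{(\psi_\nu,\nu)\}_{\nu\in\sigma(H_\Upsilon)}$ of $H_\Upsilon$ rather than of a single $\sharp$-box. I treat the $\theta$-PL case in detail; the other three cases follow by substituting the corresponding decay factors from Lemmas~\ref{ideigsysall} and \ref{lembuffall}. Fix $y\in\Upsilon^{\Lambda_L,2\ell'}$ and write $\psi(y)=\sum_{\nu}\psi_\nu(y)\langle\psi_\nu,\chi_\Upsilon\psi\rangle$, splitting the sum into the separated set $A=\sigma_{\mathcal{G}}^{\Lambda_L}(H_\Upsilon)\cup\sigma_{\mathcal{B}}(H_\Upsilon)$ (where \eqref{disteig2} applies) and the residual $B=\sigma_{\mathcal{G}}(H_\Upsilon)\setminus\sigma_{\mathcal{G}}^{\Lambda_L}(H_\Upsilon)$.

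For $\nu\in A$, combining $(H_{\Lambda_L}-\lambda)\psi=0$ with $(H_\Upsilon-\nu)\psi_\nu=0$ yields the standard boundary identity $\langle\psi_\nu,\chi_\Upsilon\psi\rangle=\frac{\varepsilon}{\nu-\lambda}\sum_{(u,v)\in\boldsymbol{\partial}^{\Lambda_L}\Upsilon}\overline{\psi_\nu(u)}\psi(v)$, while the hypothesis supplies $|\nu-\lambda|^{-1}\leq 2L^q$. For each $u\in\partial_{\mathrm{in}}^{\Lambda_L}\Upsilon\subset\widecheck{\Upsilon'}$, the uniform pointwise bound $|\psi_\nu(u)|\leq C_{d,\varepsilon_0}L^q\ell^{-(\theta-2d)}$ holds: when $\nu\in\sigma_{\mathcal{B}}$ this is Lemma~\ref{lembuffall}(i); when $\nu\in\sigma_{\mathcal{G}}^{\Lambda_L}$ it follows from Lemma~\ref{ideigsysall}(i)(a) after noting that the associated label $x\in\Lambda_\ell^{\Lambda_L,\ell'}(a)$ forces $\|u-x\|\geq\ell'$, so polynomial localization gives $|\varphi_x^{(a)}(u)|\leq\ell^{-\theta}$ and the $\ell^2$-closeness of $\psi_\nu$ to $\varphi_x^{(a)}$ absorbs the pointwise error. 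Setting $M=\max_{v\in\partial^{\Lambda_L,2\ell'}\Upsilon}|\psi(v)|$ and using the Cauchy--Schwarz bound $\sum_\nu|\psi_\nu(y)|\leq L^{d/2}$ together with $|\boldsymbol{\partial}^{\Lambda_L}\Upsilon|\leq C_d L^d$ bounds the $A$-contribution by $C_{d,\varepsilon_0}L^{2d+2q}\ell^{-(\theta-2d)}M$.

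For $\nu\in B$ no eigenvalue separation is available, so the boundary formula is unusable; instead I exploit pointwise smallness of $\psi_\nu$ at $y$. Such $\nu$ correspond to labels $x\in\Lambda_\ell^{\Upsilon,\ell'}(a)\setminus\Lambda_\ell^{\Lambda_L,\ell'}(a)$, which lie within $\ell'$ of $\Lambda_L\setminus\Upsilon$; since $y\in\Upsilon^{\Lambda_L,2\ell'}$ is more than $2\ell'$ from the complement of $\Upsilon$, we get $\|y-x\|>\ell'$, and the same $\varphi_x^{(a)}$-comparison yields $|\psi_\nu(y)|\leq C_{d,\varepsilon_0}L^q\ell^{-(\theta-(d-1)/2)}$. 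Using $|\langle\psi_\nu,\chi_\Upsilon\psi\rangle|\leq\|\chi_\Upsilon\psi\|\leq L^{d/2}\|\chi_\Upsilon\psi\|_\infty$ together with Cauchy--Schwarz over $B$ bounds the $B$-contribution by $C_{d,\varepsilon_0}L^{q+d}\ell^{-(\theta-(d-1)/2)}\|\chi_\Upsilon\psi\|_\infty$. Decomposing $\Upsilon=\Upsilon^{\Lambda_L,2\ell'}\cup\partial_{\mathrm{in}}^{\Lambda_L,2\ell'}\Upsilon$ gives $\|\chi_\Upsilon\psi\|_\infty\leq\max(N,M)$ for $N:=\max_{y'\in\Upsilon^{\Lambda_L,2\ell'}}|\psi(y')|$; evaluating the combined $A+B$ bound at a maximizer of $N$ and noting that both coefficients are strictly $<1$ for sufficiently large $\ell$ by \eqref{thetaqp} forces $N\leq C_{d,\varepsilon_0}L^{2d+2q}\ell^{-(\theta-2d)}M$, which establishes part (i).

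For part (ii), I verify \eqref{disteig2} for every $\lambda\in\sigma(H_{\Lambda_L})\setminus(\sigma_{\mathcal{G}}(H_{\Lambda_L})\cup\sigma_\Upsilon(H_{\Lambda_L}))$. For $\nu\in\sigma_{\mathcal{B}}(H_\Upsilon)$, Lemma~\ref{lembuffall}(ii) produces $\widetilde\nu\in\sigma_\Upsilon(H_{\Lambda_L})$ with $|\widetilde\nu-\nu|\leq C L^{d/2+q}\ell^{-(\theta-2d)}$; since $\lambda\notin\sigma_\Upsilon$ and $\Lambda_L$ is polynomially level spacing, $|\lambda-\widetilde\nu|\geq L^{-q}$, so the triangle inequality yields $|\lambda-\nu|\geq\tfrac{1}{2}L^{-q}$. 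For $\nu=\widetilde\lambda_x^{(a)}\in\sigma_{\mathcal{G}}^{\Lambda_L}(H_\Upsilon)$, applying Lemma~\ref{ideigsysall}(i)(a) with $\Theta=\Upsilon$ and $\Theta=\Lambda_L$ places both $\nu$ and the corresponding element of $\sigma_{\mathcal{G}}(H_{\Lambda_L})$ within $C\ell^{-(\theta-(d-1)/2)}$ of $\lambda_x^{(a)}$, and the analogous triangle argument closes the estimate. The principal obstacle is the $B$-sum, for which no eigenvalue separation holds; it is handled by trading separation for pointwise smallness of $\psi_\nu(y)$ (from proximity of the label $x$ to $\partial_{\mathrm{ex}}^{\Lambda_L}\Upsilon$) and closing the inequality via the self-consistency between $\|\chi_\Upsilon\psi\|_\infty$, $N$, and $M$.
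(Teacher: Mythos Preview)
Your proof is correct and follows essentially the same route as the paper: expand $\psi(y)$ in an eigensystem of $H_\Upsilon$, split into the set $A=\sigma_{\mathcal{G}}^{\Lambda_L}(H_\Upsilon)\cup\sigma_{\mathcal{B}}(H_\Upsilon)$ (where the separation hypothesis yields a boundary formula and smallness of $\psi_\nu$ on $\partial_{\mathrm{in}}^{\Lambda_L}\Upsilon$) and the residual $B=\sigma_{\mathcal{G}}(H_\Upsilon)\setminus\sigma_{\mathcal{G}}^{\Lambda_L}(H_\Upsilon)$ (where the label $x_\nu$ is forced near $\Lambda_L\setminus\Upsilon$, hence far from $y$), and then close. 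Part~(ii) is verified the same way in both arguments, via the triangle inequality through $\widetilde\nu$ (for $\sigma_{\mathcal{B}}$) and through $\lambda_x^{(a)}$ (for $\sigma_{\mathcal{G}}^{\Lambda_L}$).

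The only noteworthy difference is the closing mechanism in part~(i). The paper obtains $|\psi(y)|\leq C_{d,\varepsilon_0}L^{2d+2q}\ell^{-(\theta-2d)}|\psi(v_4)|$ for some $v_4\in\Upsilon\cup\partial_{\mathrm{ex}}^{\Lambda_L}\Upsilon$ and then \emph{iterates} this bound until $v_4$ lands in $\partial^{\Lambda_L,2\ell'}\Upsilon$; you instead introduce $N=\max_{\Upsilon^{\Lambda_L,2\ell'}}|\psi|$ and $M=\max_{\partial^{\Lambda_L,2\ell'}\Upsilon}|\psi|$, evaluate at a maximizer of $N$, and solve the resulting linear inequality $N\leq aM+b\max(N,M)$ with $a,b\ll 1$. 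Both devices are equivalent here and yield the same constant (up to a factor absorbed into $C_{d,\varepsilon_0}$); your version is slightly cleaner in that it avoids the finite-termination argument, at the cost of needing the observation $b\lesssim a$ (which does follow from \eqref{thetaqp} for both scale relations $L=Y\ell$ and $L=\ell^{\gamma_1}$, since $q>d$).
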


\begin{proof}
Let $\{(\vartheta_\nu,\nu)\}_{\nu\in\sigma(H_\Upsilon)}$ be an eigensystem for $H_\Upsilon$. For $\nu\in\sigma_{\mathcal{G}}(H_\Upsilon)$ we fix
$\lambda_{x_\nu}^{(a_\nu)}\in\mathcal{E}_{\mathcal{G}}^\Upsilon(\nu)$, where $a_\nu\in\mathcal{G}$, $x_\nu\in\Lambda_\ell^{\Upsilon,\ell'}(a_\nu)$.
If $\nu\in\sigma_{\mathcal{G}}^{\Lambda_L}(H_\Upsilon)$, we choose $\lambda_{x_\nu}^{(a_\nu)}\in\mathcal{E}_{\mathcal{G}}^{\Lambda_L}(\nu)$, thus $x_\nu\in \Lambda_\ell^{\Lambda_L,\ell'}(a_\nu)$.
If $\nu\in\sigma_{\mathcal{G}}(H_\Upsilon)\setminus\sigma_{\mathcal{G}}^{\Lambda_L}(H_\Upsilon)$ we have
$ x_\nu\in\Lambda_\ell^{\Upsilon,\ell'}(a_\nu)\setminus\Lambda_\ell^{\Lambda_L,\ell'}(a_\nu)$.

Given $y\in\Upsilon$, we have (see \eqref{eigsys})
\begin{align}\label{sum6}
\psi(y)&=\sum_{\nu\in\sigma(\Upsilon)}\vartheta_\nu(y)\langle\vartheta_\nu,\psi\rangle
\\\nonumber&=\sum_{\nu\in\sigma_{\mathcal{G}}^{\Lambda_L}(H_\Upsilon)\cup\sigma_{\mathcal{B}}(H_\Upsilon)}\vartheta_\nu(y)\langle\vartheta_\nu,\psi\rangle
+\sum_{\nu\in\sigma_{\mathcal{G}}(H_\Upsilon)\setminus\sigma_{\mathcal{G}}^{\Lambda_L}(H_\Upsilon)}\vartheta_\nu(y)\langle\vartheta_\nu,\psi\rangle.
\end{align}

Let $(\psi,\lambda)$ be an eigenpair for $H_{\Lambda_L}$ satisfying \eqref{disteig2}. If $\nu\in\sigma_{\mathcal{G}}^{\Lambda_L}(H_\Upsilon)\cup\sigma_{\mathcal{B}}(H_\Upsilon)$, we have
\beq
\langle\vartheta_\nu,\psi\rangle=(\lambda-\nu)^{-1}\langle\vartheta_\nu,(H_{\Lambda_L}-\nu)\psi\rangle=(\lambda-\nu)^{-1}\langle(H_{\Lambda_L}-\nu)\vartheta_u,\psi\rangle.
\eeq
It follows from \eqref{disteig2} and \cite[Lemma~3.2]{EK} that
\begin{align}
|\vartheta_\nu(y)\langle\vartheta_\nu,\psi\rangle|&\leq2L^q\varepsilon|\vartheta_\nu(y)|\sum_{v\in\partial_{\mathrm{ex}}^{\Lambda_L}\Upsilon}
\left(\sum_{v'\in\partial_{\mathrm{in}}^{\Lambda_L}\Upsilon,|v'-v|=1}|\vartheta_\nu(v')|\right)|\psi(v)|
\\\nonumber&\leq2\varepsilon L^{q+d}\left(2d\max_{u\in\partial_{\mathrm{in}}^{\Lambda_L}\Upsilon}|\vartheta_\nu(u)|\right)|\psi(v_1)|\qtx{for some}v_1\in\partial_{\mathrm{ex}}^{\Lambda_L}\Upsilon.
\end{align}
If $\nu\in\sigma_{\mathcal{B}}(H_\Upsilon)$, \eqref{psidec5} gives
\beq
\max_{u\in\partial_{\mathrm{in}}^{\Lambda_L}\Upsilon}|\vartheta_\nu(u)|\leq C_{d,\varepsilon_0}L^q\ell^{-(\theta-2d)}.
\eeq
If $\nu\in\sigma_{\mathcal{G}}^{\Lambda_L}(H_\Upsilon)$, it follows from \eqref{diffun} and \eqref{defploc}, that
\begin{align}
&\max_{u\in\partial_{\mathrm{in}}^{\Lambda_L}\Upsilon}|\vartheta_\nu(u)|
\leq\max_{u\in\partial_{\mathrm{in}}^{\Lambda_L}}\left(\left|\vartheta_\nu(u)-\varphi_{x_\nu}^{(a_\nu)}\right|+\left|\varphi_{x_\nu}^{(a_\nu)}\right|\right)
\\\nonumber&\quad\leq2C_{d,\varepsilon_0}L^q\ell^{-\left(\theta-\frac{d-1}{2}\right)}+\ell^{-\theta}\leq 3C_{d,\varepsilon_0}L^q\ell^{-\left(\theta-\frac{d-1}{2}\right)}\leq C_{d,\varepsilon_0}L^q\ell^{-(\theta-2d)}.
\end{align}
Therefore (recalling \eqref{ca1}),
\begin{align}\label{sum7}
\left|\sum_{\nu\in\sigma_{\mathcal{G}}^{\Lambda_L}(H_\Upsilon)\cup\sigma_{\mathcal{B}}(H_\Upsilon)}\vartheta_\nu(y)\langle\vartheta_\nu,\psi\rangle\right|
&\leq4d\varepsilon L^{2d+q}\left(C_{d,\varepsilon_0}L^q\ell^{-(\theta-2d)}\right)|\psi(v_2)|
\\\nonumber&\leq C_{d,\varepsilon_0}L^{2d+2q}\ell^{-(\theta-2d)}|\psi(v_2)|,
\end{align}
for some $v_2\in\partial_{\mathrm{ex}}^{\Lambda_L}\Upsilon$.

If $\nu\in\sigma_{\mathcal{G}}(H_\Upsilon)\setminus\sigma_{\mathcal{G}}^{\Lambda_L}(H_\Upsilon)$, we have $x_\nu\in \Lambda_\ell^{\Upsilon,\ell'}(a_\nu)\setminus\Lambda_\ell^{\Lambda_L,\ell'}(a_\nu)$,
thus
\beq
\dist(x_\nu,\Upsilon\setminus\Lambda_\ell(a_\nu))>\ell'\qtx{and}\dist(x_\nu,\Lambda_L\setminus\Lambda_\ell(a_\nu))\leq\ell',
\eeq
 and hence there is $u_0\in\Lambda_L\setminus\Upsilon$ such that $\|x_\nu-u_0\|\leq\ell'$. We suppose $y\in\Upsilon^{\Lambda_L,2\ell'}$, then $\|y-u_0\|>2\ell'$. Therefore
\beq
\|x_\nu-y\|\geq\|y-u_0\|-\|x_\nu-u_0\|>2\ell'-\ell'=\ell'.
\eeq
Thus it follows from \eqref{diffun} and \eqref{defploc} that
\begin{align}
|\vartheta_\nu(u)|&\leq\left|\vartheta_\nu(u)-\varphi_{x_\nu}^{(a_\nu)}\right|+\left|\varphi_{x_\nu}^{(a_\nu)}\right|\leq 2C_{d,\varepsilon_0}L^q\ell^{-\left(\theta-\frac{d-1}{2}\right)}+\ell^{-\theta}
\\\nonumber&\leq3C_{d,\varepsilon_0}L^q\ell^{-\left(\theta-\frac{d-1}{2}\right)}.
\end{align}
Therefore
\beq\label{sum8}
\left|\sum_{\nu\in\sigma_{\mathcal{G}}(H_\Upsilon)\setminus\sigma_{\mathcal{G}}^{\Lambda_L}(H_\Upsilon)}\vartheta_\nu(y)\langle\vartheta_\nu,\psi\rangle\right|\leq
3C_{d,\varepsilon_0}L^q(L+1)^{\frac{3d}{2}}\ell^{-\left(\theta-\frac{d-1}{2}\right)}|\psi(v_3)|,
\eeq
for some $v_3\in\Upsilon$.

Combining \eqref{sum6}, \eqref{sum7} and \eqref{sum8}, we conclude that for all $y\in\Upsilon^{\Lambda_L,2\ell'}$,
\beq
|\psi(y)|\leq C_{d,\varepsilon_0}L^{2d+2q}\ell^{-(\theta-2d)}|\psi(v_4)|,
\eeq
for some $v_4\in\Upsilon\cup\partial_{\mathrm{ex}}^{\Lambda_L}\Upsilon$. If $v_4\in\Upsilon^{\Lambda_L,2\ell'}$ we repeat the procedure to estimate $|\psi(v_4)|$. Since we can suppose $\psi(y)\neq0$
without loss of generality, the procedure must stop after finitely many times, and at that time we must have \eqref{psidec6}.

Now let $\Lambda_L$ be polynomially level spacing. If $\lambda\not\in\sigma_{\mathcal{G}}(H_{\Lambda_L})$, it follows from Lemma \ref{ideigsysall}(i)(c) that \eqref{distlambda} holds for all $a\in\mathcal{G}$.
If $\lambda\not\in\sigma_\Upsilon(H_{\Lambda_L})$, using the argument in \eqref{distlambda2}, with \eqref{distnu} instead of \eqref{tildedist}, we get $|\lambda-\nu|\geq\tfrac{1}{2}L^{-q}$ for all
$\nu\in\sigma_{\mathcal{B}}(H_\Upsilon)$. Therefore we have \eqref{disteig2}, which implies \eqref{psidec6}.
\end{proof}

\section{Probability estimates}\label{secprob}
The following lemma gives the probability estimates for polynomially level spacing and level spacing.
\begin{lemma}\label{lemlsp}
Let $H_{\varepsilon,\omega}$ be the Anderson model. Let $\Theta\subset\Z^d$ and $L>1$. Then, for all $\varepsilon\leq\varepsilon_0$,
\beq
\P\{\Theta\sqtx{is}L\text{-polynomially level spacing for}\; H\}\geq1-Y_{\varepsilon_0}L^{-(2\alpha-1)q}|\Theta|^2,
\eeq
and
\beq
\P\{\Theta\sqtx{is}L\text{-level spacing for}\; H\}\geq1-Y_{\varepsilon_0}\e^{-(2\alpha-1)L^\beta}|\Theta|^2,
\eeq
where
\beq
Y_{\varepsilon_0}=2^{2\alpha-1}\widetilde{K}^2(\diam\supp\mu+2d\varepsilon_0+1),
\eeq
with $\widetilde{K}=K$ if $\alpha=1$ and $\widetilde{K}=8K$ if $\alpha\in\left(\tfrac{1}{2},1\right)$.
\end{lemma}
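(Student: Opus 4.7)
The plan is to reduce the statement to a union bound over pairs of eigenvalues of $H_\Theta$, and then to control the probability that any single pair of eigenvalues is close using the H\"older continuity of $\mu$.

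First, the event that $\Theta$ fails to be $L$-polynomially level spacing for $H$ is contained in
\[
\bigcup \bset{\abs{\lambda-\lambda'} < L^{-q}},
\]
the union over unordered pairs of distinct eigenvalues of $H_\Theta$. Since $\dim \ell^2(\Theta) = |\Theta|$, there are at most $\binom{|\Theta|}{2} \leq \tfrac12 |\Theta|^2$ such pairs. The same union (with $L^{-q}$ replaced by $\e^{-L^\beta}$) controls the level spacing event.

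Second, the main input is a Klein--Molchanov/Combes--Germinet--Klein type two-eigenvalue estimate: for any two labeled eigenvalues $\lambda_n(\omega) \neq \lambda_m(\omega)$ of $H_{\eps,\omega,\Theta}$ and any $t \in (0,1]$,
\[
\P\bset{\abs{\lambda_n - \lambda_m} \leq t} \leq 2^{2\alpha-1}\wtilde{K}^2(\diam\supp\mu + 2d\eps_0 + 1)\, t^{2\alpha-1}.
\]
To obtain this I would argue as follows. Because $\norm{-\eps\Delta} \leq 2d\eps_0$, the spectrum of $H_{\eps,\omega,\Theta}$ is contained in a fixed interval $I$ of length $\diam\supp\mu + 2d\eps_0$. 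Cover $I$ with $N \leq (\diam\supp\mu+2d\eps_0)t^{-1}+1$ intervals $J_k$ of length $2t$; if $\abs{\lambda_n-\lambda_m} \leq t$, then both $\lambda_n$ and $\lambda_m$ lie in a common $J_k$. For each $k$, the probability that $\lambda_n \in J_k$ and $\lambda_m \in J_k$ is estimated by conditioning on $\omega_z$ for $z$ outside a singleton and using spectral averaging twice: a single-eigenvalue Wegner-type estimate from the concentration function bound $S_\mu(s) \leq K s^\alpha$ yields $\P\{\lambda_n \in J_k\} \leq \wtilde{K}(2t)^\alpha$, and a further application to $\lambda_m$ (for which $|\phi_m|^2$ contributes on a site disjoint from the one used for $\lambda_n$) gives a second $\wtilde{K}(2t)^\alpha$ factor. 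Summing over the $\lesssim t^{-1}$ intervals $J_k$ absorbs one power of $t$, leaving the stated $t^{2\alpha-1}$ bound with constant $2^{2\alpha-1}\wtilde{K}^2(\diam\supp\mu + 2d\eps_0 + 1)$.

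Combining the union bound with this single-pair estimate, taking $t = L^{-q}$ gives the polynomial level spacing conclusion, and taking $t = \e^{-L^\beta}$ gives the level spacing conclusion. The main obstacle is the two-eigenvalue estimate: keeping the constants explicit (the case split $\wtilde{K} = K$ versus $\wtilde{K} = 8K$ for $\alpha < 1$ reflects the technical need to extract two independent factors of $\alpha$ when $\alpha < 1$) and correctly handling the case where both eigenfunctions happen to be simultaneously small at a candidate site, forcing one to localize the averaging argument to different sites for $\lambda_n$ and $\lambda_m$.
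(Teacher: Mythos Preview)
Your strategy---cover the a~priori spectral interval by $O(t^{-1})$ subintervals of length $2t$, apply a Minami-type two-eigenvalue estimate on each subinterval via two rounds of rank-one spectral averaging, and sum---is exactly what the paper invokes; the paper gives no self-contained proof but simply cites \cite[Lemma~2.1]{EK} and \cite[Lemma~2]{KlM}, which carry out precisely this argument with the stated constants.

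There is, however, one organizational slip that as written is a genuine gap. You frame the argument as a union bound over $\binom{|\Theta|}{2}$ \emph{labeled} pairs $(\lambda_n,\lambda_m)$ followed by a per-pair estimate, and inside that estimate you assert $\P\{\lambda_n\in J_k\}\le \widetilde K(2t)^\alpha$ for a single labeled eigenvalue. Spectral averaging does not give this: it bounds $\E\bigl[\tr\chi_{J_k}(H_\Theta)\bigr]=\sum_n\P\{\lambda_n\in J_k\}\le |\Theta|\,\widetilde K(2t)^\alpha$, not any individual summand, and there is no reason a particular ordered eigenvalue should obey the claimed bound uniformly in $\omega$. The argument in the cited references instead covers first and then, on each $J_k$, bounds $\P\{\tr\chi_{J_k}(H_\Theta)\ge 2\}$ directly by the Minami mechanism: use interlacing under a rank-one perturbation at a site $x$ to dominate $\tr\chi_{J_k}(H_\Theta)\bigl(\tr\chi_{J_k}(H_\Theta)-1\bigr)$ by a product of traces with decoupled dependence on $\omega_x$, average over $\omega_x$ to extract one factor $\widetilde K(2t)^\alpha$, sum over $x\in\Theta$, and repeat for the second factor. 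The $|\Theta|^2$ thus comes from the two site-sums inside the Minami estimate, not from counting eigenvalue pairs. Once you reorganize this way your sketch is complete and matches the cited proof.
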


Lemma~\ref{lemlsp} follows from \cite[Lemma~2.1]{EK} and its proof. (Also see \cite[Lemma~2]{KlM}.)

\section{Bootstrap multiscale analysis}\label{sectbmsa}

In this section, we fix $\theta>\left(\tfrac{6}{2\alpha-1}+\tfrac{9}{2}\right)d$ and $0<\xi<1$. (Note that Proposition~\ref{propmsa1} is independent of $\xi$.)
We will omit the dependence on $\theta$ and $\xi$ from the notation. We denote the complementary event of an event $\mathcal{E}$ by $\mathcal{E}^c$.

\subsection{The first multiscale analysis}

\begin{proposition}\label{propmsa1}
Fix $\varepsilon_0>0$, $Y\geq400$, and $P_0<\tfrac{1}{2}(2Y)^{-2d}$. There exists a finite scale $\mathcal{L}(\varepsilon_0,Y)$ with the following property:
Suppose for some scale $L_0\geq\mathcal{L}(\varepsilon_0,Y)$, and $0<\varepsilon\leq\varepsilon_0$ we have
\beq\label{initcon}
\inf_{x\in\R^d}\P\{\Lambda_{L_0}(x)\sqtx{is}\theta\text{-polynomially localizing for}\;H_{\varepsilon,\omega}\}\geq1-P_0.
\eeq
Then, setting $L_{k+1}=YL_k$ for $k=0,1,\ldots$, there exists $K_0=K_0(Y,L_0,P_0)\in\N$ such that
\beq\label{res1}
\inf_{x\in\R^d}\P\{\Lambda_{L_k}(x)\sqtx{is}\theta\text{-polynomially localizing for}\;H_{\varepsilon,\omega}\}\geq1-L_k^{-p}\sqtx{for}k\geq K_0.
\eeq
\end{proposition}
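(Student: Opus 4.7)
The plan is to carry out a one-step deterministic-plus-probabilistic induction from scale $L_k$ to $L_{k+1}=YL_k$ in the spirit of the classical Germinet--Klein squaring bootstrap, using the buffered-subset machinery of Section~\ref{sectbuff} to realize the deterministic step within the eigensystem framework. Denoting
\begin{equation*}
P_k\;:=\;\sup_{x\in\R^d}\P\{\Lambda_{L_k}(x)\text{ is not }\theta\text{-polynomially localizing for }H_{\varepsilon,\omega}\},
\end{equation*}
the goal is to derive the recursion
\begin{equation*}
P_{k+1}\;\leq\;\tfrac{1}{2}(2Y)^{2d}P_k^{2}\;+\;\tfrac{1}{2}L_{k+1}^{-p}.
\end{equation*}
The hypothesis $P_0<\tfrac{1}{2}(2Y)^{-2d}$ gives $(2Y)^{2d}P_0<1$, so iterating makes the quadratic term doubly exponentially small in $k$; after a finite number $K_0$ of steps it falls below $\tfrac{1}{2}L_k^{-p}$, and a routine induction then yields $P_k\leq L_k^{-p}$ for all $k\geq K_0$, which is \eqref{res1}.

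For the deterministic step I take the suitable $L_k$-cover $\mathcal{C}_{L_{k+1},L_k}(x)=\{\Lambda_{L_k}(a)\}_{a\in\Xi}$, which exists since $L_k\leq L_{k+1}/6$ (as $Y\geq 400$) and satisfies $|\Xi|\leq(2Y)^d$ by \eqref{covernum}. Call $a\in\Xi$ \emph{good} if $\Lambda_{L_k}(a)$ is $\theta$-polynomially localizing, and call the configuration \emph{clustered} if the union of the bad sub-boxes is contained in a single connected cluster of diameter at most $3L_k$. The deterministic claim is: if $\Lambda_{L_{k+1}}(x)$ is polynomially level spacing and the bad centers are clustered, then $\Lambda_{L_{k+1}}(x)$ is $\theta$-polynomially localizing. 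To realize this, I take $\Upsilon=\Lambda_{L_{k+1}}(x)$ with $\mathcal{G}_\Upsilon$ the set of good centers; by \eqref{coverprop} together with the clustering assumption, $\Upsilon$ is a $\theta$-PL buffered subset per Definition~\ref{defbufall}. Lemma~\ref{ideigsysall}(ii) produces an eigensystem piece labeled by the sites of $\widecheck{\Upsilon}'$ with decay $\lesssim L_{k+1}^{q}L_k^{-(\theta-2d)}$ outside their supporting $\Lambda_{L_k}(a)$, while the remaining $|\widehat\Upsilon|$ eigenvalues in $\sigma_{\mathcal B}(H_\Upsilon)$ (counted via Lemma~\ref{ideigsysall}(ii)(d) and Lemma~\ref{lembuffall}(i)) are assigned to labels in $\widehat\Upsilon$; Lemma~\ref{lembadall} gives polynomial decay of the corresponding eigenfunctions from the boundary $\partial\Upsilon$. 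Because $Y\geq 400$ yields the separation $L_{k+1}'\geq 20L_k$, and because \eqref{thetaqp} implies $\theta>\tfrac{9d}{2}+2q$, the combined decay $L_{k+1}^{q+2d}L_k^{-(\theta-2d)}$ dominates $L_{k+1}^{-\theta}$ at distance $\geq L_{k+1}/20$ from the label, verifying Definition~\ref{deflocbox}(i).

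For the probabilistic step, Lemma~\ref{lemlsp} applied to $\Theta=\Lambda_{L_{k+1}}(x)$ yields $\P\{\Lambda_{L_{k+1}}(x)\text{ not polynomially level spacing}\}\leq Y_{\varepsilon_0}L_{k+1}^{2d-(2\alpha-1)q}\leq\tfrac{1}{2}L_{k+1}^{-p}$ once $L_{k+1}$ is large enough, since $(2\alpha-1)q-2d>p$ by \eqref{thetaqp}. The complement of the clustering condition forces the existence of two \emph{disjoint} bad sub-boxes $\Lambda_{L_k}(a),\Lambda_{L_k}(b)\subset\Lambda_{L_{k+1}}(x)$; since the corresponding events depend on disjoint sets of random variables, independence and a union bound over pairs in $\Xi$ give
\begin{equation*}
\P\{\text{two disjoint bad sub-boxes}\}\;\leq\;\binom{|\Xi|}{2}P_k^{\,2}\;\leq\;\tfrac{1}{2}(2Y)^{2d}P_k^{\,2}.
\end{equation*}
Summing the two contributions yields the recursion, which closes the induction step.

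The main obstacle is the eigenfunction bookkeeping in the deterministic step: unlike the Green's-function MSA, one must assign to each eigenvalue of $H_{\Lambda_{L_{k+1}}(x)}$ a unique label in $\Lambda_{L_{k+1}}(x)$ with the right polynomial decay around the label, which requires simultaneously identifying the ``bad'' eigenvalues of $H_\Upsilon$ with sites in the bad cluster via Lemma~\ref{lembuffall}(ii), and tracking the polynomial prefactors $L_{k+1}^{q}$ and $L_{k+1}^{2d+2q}$ appearing in Lemmas~\ref{lembuffall} and \ref{lembadall} so that after composition the remaining rate $L_{k+1}^{O(q+d)}L_k^{-(\theta-2d)}$ is still $\leq L_{k+1}^{-\theta}$ at scale $L_{k+1}'$. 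The constraints $Y\geq 400$ (generous intermediate-scale separation) and \eqref{thetaqp} (strong lower bound on $\theta$ and upper bound on $p$) are exactly what make this bookkeeping close.
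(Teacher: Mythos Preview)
Your overall strategy (squaring recursion plus a deterministic step) matches the paper, but the deterministic step has two concrete gaps.

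\textbf{First, the decay bookkeeping does not close with a single step.} You assert that $L_{k+1}^{q+2d}L_k^{-(\theta-2d)}\leq L_{k+1}^{-\theta}$. Writing $L_{k+1}=YL_k$, this is $Y^{q+2d}L_k^{-(\theta-4d-q)}\leq Y^{-\theta}L_k^{-\theta}$, i.e.\ $Y^{q+2d+\theta}\leq L_k^{-(4d+q)}$, which fails for large $L_k$. One application of Lemma~\ref{ideigsysall}(i)(c) gives only decay $\ell^{-(\theta-2d-q)}$, not $\ell^{-\theta}$; the paper gets $L^{-\theta}$ by \emph{iterating} the good-box estimate \eqref{psigood} at least $n(Y)\geq 2$ times (this is precisely where $Y\geq 400$ is used, via $n(Y)\geq\tfrac{\ell}{\ell+1}(\tfrac{Y}{40}-7)\geq 2$), and then invokes $2(\theta-2d-q)>\theta$, equivalently $\theta>4d+2q$.

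\textbf{Second, the buffered-subset construction is inverted.} Taking $\Upsilon=\Lambda_{L_{k+1}}(x)$ makes $\partial_{\mathrm{in}}^{\Lambda_L}\Upsilon=\emptyset$, so Definition~\ref{defbufall}(iii)(b) is vacuously satisfied, but then Lemma~\ref{lembadall} is also vacuous: there is no boundary $\partial^{\Lambda_L,2\ell'}\Upsilon$ to decay toward, and the lemma gives no information. In the paper the buffered subsets $\Upsilon_r$ are built \emph{around the bad cluster} (see \eqref{conups}), with the good boxes serving as a buffer; one then needs $\Upsilon_r$ itself to be $L$-polynomially level spacing, an extra probabilistic requirement absorbed into the event $\mathcal{S}_N$ (and accounted for by the union bound \eqref{probs} over all possible $\Upsilon(\Phi)$, not just over $\Lambda_L$). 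The decay of the bad eigenfunctions at distance $\geq L'$ is then obtained by first escaping $\Upsilon_r$ via \eqref{psibad} and then iterating \eqref{psigood} through the surrounding good boxes.

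Finally, the labeling you flag as ``the main obstacle'' is resolved in the paper via Hall's Marriage Theorem, with the neighborhood structure $\mathcal{N}(x)$ in \eqref{defN} designed so that $\lambda\notin\mathcal{N}(x)$ forces $|\psi_\lambda(x)|$ small; the marriage condition $|\Theta|\leq|\mathcal{N}(\Theta)|$ is verified through \cite[Lemma~A.1]{EK}. Your sketch gestures at this but does not supply the matching argument.
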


Proposition~\ref{propmsa1} follows from the following induction step for the multiscale analysis.

\begin{lemma}\label{indumsa1}
Fix $\varepsilon_0>0$, $Y\geq400$, and $P\leq1$. Suppose for some scale $\ell$  and $0<\varepsilon\leq\varepsilon_0$ we have
\beq\label{hypmsaind}
\inf_{x\in\R^d}\P\{\Lambda_\ell(x)\sqtx{is}\theta\text{-polynomially localizing for}\;H_{\varepsilon,\omega}\}\geq1-P.
\eeq
Then, if $\ell$ is sufficiently large, for $L=Y\ell$ we have
\beq
\inf_{x\in\R^d}\P\{\Lambda_L(x)\sqtx{is}\theta\text{-polynomially localizing for}\;H_{\varepsilon,\omega}\}\geq1-\left((2Y)^{2d}P^2+\tfrac{1}{2}L^{-p}\right).
\eeq
\end{lemma}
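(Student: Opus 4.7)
This is the standard multiscale induction step, adapted to the eigensystem setting via the lemmas of Subsections~\ref{sectlem} and \ref{sectbuff}. Fix $x_0\in\R^d$ and set $\Lambda_L=\Lambda_L(x_0)$ with $L=Y\ell$. Since $Y\ge 400\ge 6$ and $\ell$ is large, the suitable $\ell$-cover $\mathcal{C}_{L,\ell}(x_0)=\{\Lambda_\ell(a)\}_{a\in\Xi_{L,\ell}}$ exists with $|\Xi_{L,\ell}|\le(2Y)^d$ by \eqref{covernum}, and enjoys the covering property \eqref{coverprop}.

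\textbf{Favorable event.} Let $\mathcal{B}$ be the event that there exist $a,b\in\Xi_{L,\ell}$ with $\Lambda_\ell(a)\cap\Lambda_\ell(b)=\emptyset$ such that neither box is $\theta$-polynomially localizing for $H_{\varepsilon,\omega}$, and let $\mathcal{L}$ be the event that $\Lambda_L$ is $L$-polynomially level spacing. By independence of disjoint potentials and \eqref{hypmsaind}, $\P(\mathcal{B})\le\binom{|\Xi_{L,\ell}|}{2}P^2\le(2Y)^{2d}P^2$. Lemma~\ref{lemlsp} with $|\Lambda_L|\le(L+1)^d$ and the parameter inequality $p<(2\alpha-1)q-3d$ from \eqref{thetaqp} yields $\P(\mathcal{L}^c)\le\tfrac12L^{-p}$ once $\ell$ is sufficiently large. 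Hence on $\mathcal{A}=\mathcal{B}^c\cap\mathcal{L}$ we have $\P(\mathcal{A})\ge 1-((2Y)^{2d}P^2+\tfrac12L^{-p})$, and the remaining task is to produce a $\theta$-polynomially localized eigensystem for $H_{\Lambda_L}$ on $\mathcal{A}$.

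\textbf{Eigensystem construction.} Let $\mathcal{G}\subset\Xi_{L,\ell}$ index the $\theta$-PL cover boxes; on $\mathcal{A}$, the remaining centers pairwise intersect (axis-aligned box Helly), so the bad region $B=\bigcup_{a\notin\mathcal{G}}\Lambda_\ell(a)$ sits in a box of side at most $3\ell$. Apply Lemma~\ref{ideigsysall}(ii) with $\Theta=\Lambda_L$: this injectively labels the good spectrum $\sigma_{\mathcal{G}}(H_{\Lambda_L})$ by sites in $\bigcup_{a\in\mathcal{G}}\Lambda_\ell^{\Lambda_L,\ell'}(a)$, with each $\psi_\lambda$ close in $L^2$ to the good-box eigenfunction $\varphi_x^{(a)}$ via \eqref{diffun}. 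For the rest of the spectrum, adjoin to $B$ a collar of good cover boxes to form a $\theta$-PL-buffered subset $\Upsilon\subsetneq\Lambda_L$ in the sense of Definition~\ref{defbufall}, and apply Lemma~\ref{lembuffall} to obtain $\sigma_{\mathcal{B}}(H_\Upsilon)\hookrightarrow\sigma(H_{\Lambda_L})\setminus\sigma_{\mathcal{G}}(H_{\Lambda_L})$; cardinality counting via Lemmas~\ref{ideigsysall}(ii)(d) and \ref{lembuffall}(i) reconciles the two sides and delivers labels in $\widehat{\Upsilon'}$ for the remaining eigenvalues, completing the indexing by sites of $\Lambda_L$.

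\textbf{Decay and the main obstacle.} For a good-labeled $(\psi_\lambda,\lambda)$ with label $x\in\Lambda_\ell^{\Lambda_L,\ell'}(a)$ and $y\in\Lambda_L$ with $\|y-x\|\ge L'\ge 20\ell$: combining \eqref{difxy} with $\|y-x\|\ge 20\ell$ rules out $\lambda\in\sigma_{\{c\}}(H_{\Lambda_L})$ for every cover box with $y\in\Lambda_\ell^{\Lambda_L,2\ell'}(c)$; iterating Lemma~\ref{ideigsysall}(i)(c) through the suitable cover produces $|\psi_\lambda(y)|\le(C_{d,\varepsilon_0}L^q\ell^{-(\theta-2d)})^k$ with $k\asymp L'/\ell\ge Y/20\ge 20$. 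The constraint $\theta>\tfrac{9}{2}d+2q$ from \eqref{thetaqp} ensures $(k-1)\theta>k(2d+q)$, so this compounded bound is $\le L^{-\theta}$ for $\ell$ large. For bad-labeled eigenpairs, Lemma~\ref{lembadall} propagates decay from $\partial\Upsilon$ inward through the good collar, while \eqref{diffun2} and \eqref{psidec5} control the eigenfunction near its label. The principal obstacle is this quantitative balance: the number of available iterations ($\ge Y/20\ge 20$) must compound the per-step factor $L^q\ell^{-(\theta-2d)}$ below $L^{-\theta}$ uniformly in $Y\ge 400$, which is precisely what the parameter relations in \eqref{thetaqp} arrange.
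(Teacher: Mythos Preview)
Your overall strategy matches the paper's, but there is a genuine gap in the favorable event. You set $\mathcal{L}$ to be only the event that $\Lambda_L$ is $L$-polynomially level spacing, and then later ``form a $\theta$-PL-buffered subset $\Upsilon$ in the sense of Definition~\ref{defbufall}.'' But Definition~\ref{defbufall}(ii) \emph{requires} $\Upsilon$ itself to be $L$-polynomially level spacing, and this is used essentially in Lemmas~\ref{lembuffall} and \ref{lembadall} (the injection \eqref{injectbad} and the decay estimate \eqref{psidec6} both rely on it). Since the shape of $\Upsilon$ depends on $\omega$ through the location of the bad box, you must enlarge your event to require level spacing for \emph{every} candidate buffered subset. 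The paper does exactly this: its event $\mathcal{S}_N$ demands that $\Lambda_L$ and all $\{\Upsilon(\Phi)\}_{\Phi\in\mathcal{F}_N}$ be $L$-polynomially level spacing, and then bounds $\P(\mathcal{S}_N^c)$ using $|\mathcal{F}_N|\le (L+1)^d N!(d4^d)^{N-1}$ together with Lemma~\ref{lemlsp} and $p<(2\alpha-1)q-3d$. Your $\tfrac12 L^{-p}$ bound survives this enlargement, but as written your argument cannot invoke the buffered-subset lemmas.

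Two smaller points. First, your iteration of Lemma~\ref{ideigsysall}(i)(c) for good-labeled eigenpairs implicitly assumes every cover box hit along the way is $\theta$-PL; when the path meets the bad region you must switch to Lemma~\ref{lembadall}(ii) (with factor $<1$, no decay accumulated), exactly as the paper does in cases (i)(b) and (ii). Second, the paper produces the site labeling via Hall's Marriage Theorem applied to the bipartite graph $(\Lambda_L,\sigma(H_{\Lambda_L}))$ with neighborhoods $\mathcal{N}(x)$ defined in \eqref{defN}; your ``cardinality counting via Lemmas~\ref{ideigsysall}(ii)(d) and \ref{lembuffall}(i)'' is on the right track but does not by itself yield a bijection---you still need the Hall condition, which the paper verifies using the smallness bound \eqref{psidec7} and \cite[Lemma~A.1]{EK}.
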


\begin{proof}
We fix $0<\varepsilon\leq\varepsilon_0$ and suppose \eqref{hypmsaind} for some scale $\ell$. Let $\Lambda_L=\Lambda_L(x_0)$, where $x_0\in\R^d$, and let
$\mathcal{C}_{L,\ell}=\mathcal{C}_{L,\ell}(x_0)$ be the suitable $\ell$-cover of $\Lambda_L$. For $N\in\N$, let $\mathcal{B}_N$ denote the event that there exist
at most $N$ disjoint boxes in $\mathcal{C}_{L,\ell}$ that are not $\theta$-PL for $H_{\varepsilon,\omega}$. Using \eqref{hypmsaind}, \eqref{covernum} and the fact
that events on disjoint boxes are independent, if $N=1$ we have
\beq\label{probb}
\P\{\mathcal{B}_N^c\}\leq\left(\tfrac{2L}{\ell}\right)^{(N+1)d}P^{N+1}=(2Y)^{(N+1)d}P^{N+1}=(2Y)^{2d}P^2.
\eeq

We now fix $\omega\in\mathcal{B}_N$. There exists $\mathcal{A}_N=\mathcal{A}_N(\omega)\in\Xi_{L,\ell}=\Xi_{L,\ell}(x_0)$, with $|\mathcal{A}_N|\leq N$ and
$\|a-b\|\geq2\rho\ell$ (i.e., $\Lambda_\ell(a)\cap\Lambda_\ell(b)=\emptyset$) if $a,b\in\mathcal{A}_N$, $a\neq b$, such that for all $a\in\Xi_{L,\ell}$ with
$\dist(a,\mathcal{A}_N)\geq2\rho\ell$ (i.e., $\Lambda_\ell(a)\cap\Lambda_\ell(b)=\emptyset$ for all $b\in\mathcal{A}_N$), $\Lambda_\ell(a)$ is a $\sharp$ box for $H_{\varepsilon,\omega}$
($\sharp$ stands for $\theta$-PL). In other words,
\beq\label{imploc}
a\in\Xi_{L,\ell}\setminus\bigcup_{b\in\mathcal{A}_N}\Lambda^{\R}_{(2\rho+1)\ell}(a_0)\Longrightarrow\Lambda_\ell(a)\sqtx{is a}\sharp\sqtx{box for}H_{\varepsilon,\omega}.
\eeq

To embed the box $\{\Lambda_\ell(b)\}_{b\in\mathcal{A}_N}$ into $\sharp$-buffered subsets of $\Lambda_L$, we consider graphs  $\mathbb{G}_i=(\Xi_{L,\ell},\mathbb{E}_i)$, $i=1,2$,
both having $\Xi_{L,\ell}$ as the set of vertices, with sets of edges given by
\begin{align}
\mathbb{E}_1&=\{\{a,b\}\in\Xi_{L,\ell}^2;\|a-b\|=\rho\ell\}
\\\nonumber&=\{\{a,b\}\in\Xi_{L,\ell}^2;a\neq b\sqtx{and}\Lambda_\ell(a)\cap\Lambda_\ell(b)\neq\emptyset\},
\\\nonumber\mathbb{E}_2&=\{\{a,b\}\in \Xi_{L,\ell}^2;\sqtx{either}\|a-b\|=2\rho\ell\sqtx{or}\|a-b\|=3\rho\ell\}
\\\nonumber&=\{\{a,b\}\in\Xi_{L,\ell}^2;\Lambda_\ell(a)\cap\Lambda_\ell(b)=\emptyset\sqtx{and}\Lambda_{(2\rho+1)\ell}(a)\cap\Lambda_{(2\rho+1)\ell}(b)\neq\emptyset\}.
\end{align}
Let $\{\Phi_r\}_{r=1}^R=\{\Phi_r(\omega)\}_{r=1}^R$ denote the $\mathbb{G}_2$-connected components of $\mathcal{A}_{N}$ (i.e., connected in the graph $\mathcal{G}_2$). Note that
\beq
R\in\{1,2,\ldots,N\},\quad\sum_{r=1}^R|\Phi_r|=|\mathcal{A}_N|\leq N,\qtx{and}\diam\Phi_r\leq3\rho\ell(|\Phi_r|-1).
\eeq
Set
\beq
\widetilde{\Phi}_r=\Xi_{L,\ell}\cap\bigcup_{a\in\Phi_r}\Lambda^\R_{(2\rho+1)\ell}(a)=\{a\in\Xi_{L,\ell};\dist(a,\Phi_r)\leq\rho\ell\},
\eeq
and note that $\left\{\widetilde{\Phi}_r\right\}_{r=1}^R$ is a collection of disjoint, $\mathbb{G}_1$-connected subsets of $ \Xi_{L,\ell}$, such that
\beq
\diam\widetilde{\Phi}_r\leq\diam{\Phi}_r+2\rho\ell\leq\rho\ell(3|\Phi_r|-1)\sqtx{and}\dist(\widetilde{\Phi}_r,\widetilde{\Phi}_{\widetilde{r}})\geq2\rho\ell,\;r\neq \widetilde{r}.
\eeq
Moreover, \eqref{imploc} gives
\beq\label{imploc2}
a\in\mathcal{G}=\mathcal{G}(\omega)=\Xi_{L,\ell}\setminus\bigcup_{r=1}^R\widetilde{\Phi}_r\quad\Longrightarrow\quad\Lambda_\ell(a)\sqtx{is a}\sharp\sqtx{box for}H_{\varepsilon,\omega}.
\eeq

For $\Psi\subset\Xi_{L,\ell}$, we define the exterior boundary of $\Psi$ in the graph $\mathbb{G}_1$ by
\beq
\partial_{\mathrm{ex}}^{\mathbb{G}_1}\Psi=\{a\in\Xi_{L,\ell};\dist(a,\Psi)=\rho\ell\}.
\eeq
It follows from \eqref{imploc2} that $\Lambda_\ell(a)$ is $\sharp$ for $H_{\varepsilon,\omega}$ for all $a\in\partial_{\mathrm{ex}}^{ \mathbb{G}_1}\widetilde{\Phi}_r$, $r=1,2,\ldots,R$.
Set $\overline{\Psi}= \Psi\cup\partial_{\mathrm{ex}}^{\mathbb{G}_1}\Psi$, and
set, for $r=1,2,\ldots,R$,
\begin{align}\label{conups}
\Upsilon_r^{(0)}&=\Upsilon_r^{(0)}(\omega)=\bigcup_{a\in{\widetilde{\Phi}}_r}\Lambda_\ell(a),
\\\nonumber\Upsilon_r&=\Upsilon_r(\omega)=\Upsilon_r^{(0)}\cup\bigcup_{a\in\partial_{\mathrm{ex}}^{\mathbb{G}_1}\widetilde{\Phi}_r}\Lambda_\ell(a)=\bigcup_{a\in\overline{\widetilde{\Phi}_r}}\Lambda_\ell(a).
\end{align}
Each $\Upsilon_r$, $r=1,2,\ldots,R$, satisfies all the requirements to be a $\theta$-PL-buffered subset of $\Lambda_L$ with $\mathcal{G}_{\Upsilon_r}=\partial_{\mathrm{ex}}^{\mathbb{G}_1}\widetilde{\Phi}_r$
(see Definition~\ref{defbufall}), except that we do not know if $\Upsilon_r$ is $L$-polynomially level spacing for $H_{\varepsilon,\omega}$. (Note that the sets $\{\Upsilon_r^{(0)}\}_{r=1}^R$ are disjoint,
but the sets $\{\Upsilon_r\}_{r=1}^R$ are not necessarily disjoint.) Note also that
\beq
\diam\overline{\widetilde{\Phi}}_r\leq\diam\widetilde{\Phi}_r+2\rho\ell\leq\rho\ell(3|\Phi_r|+1),
\eeq
and hence
\beq\label{diamups}
\diam\Upsilon_r\leq\diam\overline{\widetilde{\Phi}}_r+\ell\leq\rho\ell(3|\Phi_r|+1)+\ell\leq5\ell|\Phi_r|,
\eeq
thus
\beq\label{sumdiam}
\sum_{r=1}^R\diam\Upsilon_r\leq5\ell N.
\eeq

We can arrange for $\{\Upsilon_r\}_{r=1}^R$ to be a collection of $\theta$-PL-buffered subsets of $\Lambda_L$ as follows. It follows from Lemma~\ref{lemlsp} that for any $\Theta\subset\Lambda_L$ we have
\begin{align}\label{probls}
\P\{\Theta\sqtx{is}L\text{-polynomially level spacing for}\;H_{\varepsilon,\omega}\}\geq1-Y_{\varepsilon_0}\e^{-(2\alpha-1)L^\beta}(L+1)^{2d}.
\end{align}
Given a $\mathbb{G}_2$-connected subset $\Phi$ of $\Xi_{L,\ell}$, let $\Upsilon(\Phi)\subset\Lambda_L$ be constructed from $\Phi$ as in \eqref{conups}. Set
\begin{align}
\mathcal{F}_N=\bigcup_{r=1}^N\mathcal{F}(r),\sqtx{where}\mathcal{F}(r)=\{\Phi\subset\Xi_{L,\ell};\Phi\sqtx{is}\mathbb{G}_2\text{-connected}\sqtx{and} |\Phi|=r\}.
\end{align}
Let $\mathcal{F}(r,a)=\{\Phi\in\mathcal{F}_r;a\in\Phi\}$ for $a\in\Xi_{L,\ell}$, and note that each vertex in the graph $\mathbb{G}_2$ has less than $d(3^{d-1}+4^{d-1})\leq d4^d$ nearest neighbors , we have
\begin{align}\label{cFN}
|\mathcal{F}(r,a)|\leq (r-1)!(d4^d)^{r-1}\quad&\Longrightarrow\quad|\mathcal{F}(r)|\leq(L+1)^d(r-1)!(d4^d)^{r-1}
\\\nonumber&\Longrightarrow\quad|\mathcal{F}_N|\leq(L+1)^dN!(d4^d)^{N-1}.
\end{align}
Let $\mathcal{S}_N$ denote the event that the box $\Lambda_L$ and the subsets $\{\Upsilon(\Phi)\}_{\Phi\in\mathcal{F}_N}$ are all $L$-polynomially level spacing for $ H_{\varepsilon,\omega}$, using \eqref{probls}
and \eqref{cFN}, if $N=1$ we have
\beq\label{probs}
\P\{\mathcal{S}_N^c\}\leq Y_{\varepsilon_0}\left(1+(L+1)^dN!(d4^d)^{N-1}\right)(L+1)^{2d}(L+1)^{2d}L^{-(2\alpha-1)q}<\tfrac{1}{2}L^{-p}
\eeq
for sufficiently large $L$ since $p<(2\alpha-1)q-3d$.

Let $\mathcal{E}_N=\mathcal{B}_N\cap\mathcal{S}_N$. Combining \eqref{probb} and \eqref{probs}, we conclude that if $N=1$,
\beq
\P\{\mathcal{E}_N\}>1-\left((2Y)^{2d}P^2+\tfrac{1}{2}L^{-p}\right).
\eeq
To finish the proof we need to show that for all $\omega\in\mathcal{E}_N$ the box $\Lambda_L$ is $\theta$-PL for $H_{\varepsilon,\omega}$.

We fix $\omega\in\mathcal{E}_N$. Then we have \eqref{imploc2}, $\Lambda_L$ is polynomially level spacing for $H_{\varepsilon,\omega}$, and the subsets $\{\Upsilon_r\}_{r=1}^R$ constructed in \eqref{conups}
are $\theta$-PL-buffered subsets of $\Lambda_L$ for $H_{\varepsilon,\omega}$. It follows from \eqref{coverprop} and Definition~\ref{defbufall}(iii) that
\beq\label{Lamdecom}
\Lambda_L=\left\{\bigcup_{a\in\mathcal{G}}\Lambda_\ell^{\Lambda_L,\frac{\ell}{10}}(a)\right\}\cup\left\{\bigcup_{r=1}^R\Upsilon_r^{\Lambda_L,\frac{\ell}{10}}\right\}.
\eeq

We omit $\varepsilon$ and $\omega$ from the notation since they are now fixed. Let $\{(\psi_\lambda,\lambda)\}_{\lambda\in\sigma(H_{\Lambda_L})}$ be an eigensystem for $H_{\Lambda_L}$. For $a\in\mathcal{G}$,
let $\left\{(\varphi_x^{(a)}, \lambda_x^{(a)})\right\}_{x\in\Lambda_\ell(a)}$ be a $\theta$-polynomially localized eigensystem for $\Lambda_\ell(a)$. For $r=1,2,\ldots,R$, let
$\left\{(\phi_{\nu^{(r)}},\nu^{(r)})\right\}_{\nu^{(r)}\in\sigma(H_{\Upsilon_r})}$ be an eigensystem for $H_{\Upsilon_r}$, and set
\beq\label{sigmasigma}
\sigma_{\Upsilon_r}=\left\{\widetilde{\nu}^{(r)};\nu^{(r)}\in\sigma_{\mathcal{B}}(H_{\Upsilon_r})\right\}\subset  \sigma(H_{\Lambda_L})\setminus\sigma_{\mathcal{G}}(H_{\Lambda_L}),
\eeq
where $\widetilde{\nu}^{(r)}$ is given in \eqref{injectbad}, which also gives $\sigma_{\Upsilon_r}(H_{\Lambda_L})\subset\sigma(H_{\Lambda_L})\setminus\sigma_{\mathcal{G}_{\Upsilon_r}}(H_{\Lambda_L})$, but
the argument actually shows $\sigma_{\Upsilon_r}(H_{\Lambda_L})\subset\sigma(H_{\Lambda_L})\setminus\sigma_{\mathcal{G}}(H_{\Lambda_L})$. We also set
\beq
\sigma_{\mathcal{B}}(H_{\Lambda_L})=\bigcup_{r=1}^R\sigma_{\Upsilon_r}(H_{\Lambda_L})\subset  \sigma(H_{\Lambda_L})\setminus\sigma_{\mathcal{G}}(H_{\Lambda_L}).
\eeq

We claim
\beq\label{sigmaclaim}
\sigma(H_{\Lambda_L})=\sigma_{\mathcal{G}}(H_{\Lambda_L})\cup\sigma_{\mathcal{B}}(H_{\Lambda_L}).
\eeq
To do this, we assume $\lambda\in\sigma_{\mathcal{G}}\setminus(\sigma_{\mathcal{G}}(H_{\Lambda_L})\cup\sigma_{\mathcal{B}}(H_{\Lambda_L}))$. Since $\Lambda_L$ is polynomially level spacing for $H$,
Lemma~\ref{ideigsysall}(ii)(c) gives
\beq
|\psi_\lambda(y)|\leq C_{d,\varepsilon_0}L^q\ell^{-(\theta-2d)}\qtx{for all}y\in\bigcup_{a\in\mathcal{G}}\Lambda_\ell^{\Lambda_L,2\ell'}(a),
\eeq
and Lemma~\ref{lembadall}(ii) gives
\beq
|\psi_\lambda(y)|\leq C_{d,\varepsilon_0}L^{2d+2q}\ell^{-(\theta-2d)}\qtx{for all}y\in\bigcup_{r=1}^R\Upsilon_r^{\Lambda_L,2\ell'}.
\eeq
Using \eqref{Lamdecom} and $\theta-2d>\gamma_1\left(\tfrac{5d}{2}+2q\right)>\tfrac{5d}{2}+2q$, we conclude that
\beq
1=\|\psi_\lambda(y)\|\leq C_{d,\varepsilon_0}L^{2d+2q}\ell^{-(\theta-2d)}(L+1)^{\frac{d}{2}}<1
\eeq
for sufficiently large $\ell$, a contradiction. This establishes the claim.

We now index the eigenvalues and eigenvectors of $H_{\Lambda_L}$ by sites in $\Lambda_L$ using Hall's Marriage Theorem, which states a necessary and sufficient condition for the existence of
a perfect matching in a bipartite graph. (See \cite[Appendix~C]{EK} and \cite[Chapter~2]{BDM}.) We consider the bipartite graph $\mathbb{G}=(\Lambda_L,\sigma(H_{\Lambda_L});\mathbb{E})$, where
the edge set $\mathbb{E}\subset\Lambda_L\times\sigma(H_{\Lambda_L})$ is defined as follows. For each $\lambda\in\sigma_{\mathcal{G}}(H_{\Lambda_L})$ we fix
$\lambda_{x_\lambda}^{(a_\lambda)}\in\mathcal{E}_{\mathcal{G}}^{\Lambda_L}(\lambda)$, and set (recall \eqref{defUpsck} and \eqref{lsharp})
\beq
\mathcal{N}_0(x)=\begin{cases}
\{\lambda\in\sigma_{\mathcal{G}}(H_{\Lambda_L});\|x_\lambda-x\|<\ell_\sharp\}&\qtx{for}x\in\Lambda_L\setminus\bigcup_{r=1}^R\widehat{\Upsilon}_r
\\\emptyset&\qtx{for}x\in\bigcup_{r=1}^R\widehat{\Upsilon}_r
\end{cases}.
\eeq
We define
\beq\label{defN}
\mathcal{N}(x)=\begin{cases}
\mathcal{N}_0(x)&\qtx{for}x\in\Lambda_L\setminus\bigcup_{r=1}^R\widehat{\Upsilon'}_r
\\\sigma_\Upsilon(H_{\Lambda_L})&\qtx{for}x\in\widehat{\Upsilon}_r,\;r=1,2,\ldots,R
\\\mathcal{N}_0(x)\cup\sigma_\Upsilon(H_{\Lambda_L})&\qtx{for}x\in\widehat{\Upsilon'}_r,\setminus\widehat{\Upsilon}_r,\;r=1,2,\ldots,R
\end{cases},
\eeq
and let $\mathbb{E}=\{(x,\lambda)\in\Lambda_L\times\sigma(H_{\Lambda_L});\lambda\in\mathcal{N}(x)\}$.

$\mathcal{N}(x)$ was defined to ensure $|\psi_\lambda(x)|\ll 1 $ for $\lambda\not\in\mathcal{N}(x)$. This can be seen as follows:
\begin{itemize}

\item If $x\in\Lambda_L$ and $\lambda\in\sigma_{\mathcal{G}}(H_{\Lambda_L})\setminus\mathcal{N}_0(x)$, we have $\lambda=\widetilde{\lambda}_{x_\lambda}^{(a_\lambda)}$ with $\|x_\lambda-x\|\geq{\ell'}$,
so, using \eqref{defploc} and \eqref{diffun},
\begin{align}
|\psi_\lambda(x)|&\leq\left|\varphi_{x_\lambda}^{(a_\lambda)}(x)\right|+\left\|\varphi_{x_\lambda}^{(a_\lambda)}-\psi_\lambda\right\|\leq \ell^{-\Theta}
+2C_{d,\varepsilon_0}L^q\ell^{-\left(\theta-\frac{d-1}{2}\right)}
\\\nonumber&\leq3C_{d,\varepsilon_0}L^q\ell^{-\left(\theta-\frac{d-1}{2}\right)}.
\end{align}

\item If $x\in\Lambda_L\setminus\widehat{\Upsilon'}_r$ and $\lambda\in\sigma_{\Upsilon_r}(H_{\Lambda_L})$, then $\lambda=\widetilde{\nu}^{(r)}$ for some $\nu^{(r)}\in\sigma_{\mathcal{B}}(H_\Upsilon)$,
and, using \eqref{psidec5} and \eqref{diffun2}, (Note $\phi_{\nu^{(r)}}(x)=0$ if $x\not\in\Upsilon_r$.)
\begin{align}
|\psi_\lambda(x)|&\leq\left|\phi_{\nu^{(r)}}(x)\right|+\left\|\phi_{\nu^{(r)}}(x)-\psi_\lambda\right\|\leq C_{d,\varepsilon_0}L^q\ell^{-(\theta-2d)}+2C_{d,\varepsilon_0}L^{\frac{d}{2}+2q}\ell^{-(\theta-2d)}
\\\nonumber&\leq3C_{d,\varepsilon_0}L^{\frac{d}{2}+2q}\ell^{-(\theta-2d)}.
\end{align}
\end{itemize}
Therefore for all $x\in\Lambda_L$ and $\lambda\in\sigma(H_{\Lambda_L})\setminus\mathcal{N}(x)$ we have
\beq\label{psidec7}
|\psi_\lambda(x)|\leq C_{d,\varepsilon_0}L^{\frac{d}{2}+2q}\ell^{-(\theta-2d)}.
\eeq

Since $|\Lambda_L|=|\sigma(H_{\Lambda_L})|$, to apply Hall's Marriage Theorem we only need to verify $|\Theta|\leq|\mathcal{N}(\Theta)|$, where $\mathcal{N}(\Theta)=\bigcup_{x\in\Theta}\mathcal{N}(x)$
for $\Theta\subset\Lambda_L$. For $\Theta\subset\Lambda_L$, let $Q_\Theta$ be the orthogonal projection onto the span of $\{\psi_\lambda;\lambda\in\mathcal{N}(\Theta)\}$. If $\lambda\not\in\mathcal{N}(\Theta)$,
for all $x\in\Theta$ we have \eqref{psidec7}, thus
\begin{align}
\|(1-Q_\Theta)\chi_\Theta\|&\leq|\Lambda_L|^{\frac{1}{2}}|\Theta|^{\frac{1}{2}}C_{d,\varepsilon_0}L^{\frac{d}{2}+2q}\ell^{-(\theta-2d)}
\\\nonumber&\leq(L+1)^dC_{d,\varepsilon_0}L^{\frac{d}{2}+2q}\ell^{-(\theta-2d)}<1,
\end{align}
for sufficiently large $\ell$ since $\theta-2d>\gamma_1\left(\tfrac{5d}{2}+2q\right)>\tfrac{5}{2}d+2q$, so it follows from \cite[Lemma~A.1]{EK} that
\beq
|\Theta|=\tr\chi_\Theta\leq\tr Q_\Theta=|\mathcal{N}(\Theta)|.
\eeq

Using Hall's Marriage Theorem, we conclude that there exists a bijection
\beq
x\in\Lambda_L\mapsto\lambda_x\in\sigma(H_{\Lambda_L}),\qtx{where}\lambda_x\in\mathcal{N}(x).
\eeq
We set $\psi_x=\psi_{\lambda_x}$ for all $x\in\Lambda_L$.

To finish the proof we need to show that $\{(\psi_x,\lambda_x)\}_{x\in\Lambda_L}$ is a $\theta$-polynomially localized eigensystem for $\Lambda_L$.
We fix $N=1$, $x\in\Lambda_L$, take $y\in\Lambda_L$, and consider several cases:

\begin{enumerate}
\item Suppose $\lambda_x\in\sigma_{\mathcal{G}}(\Lambda_L)$. Then $x\in\Lambda_\ell(a_{\lambda_x})$ with $a_{\lambda_x}\in\mathcal{G}$, and $\lambda_x\in\sigma_{\{a_{\lambda_x}\}}(H_{\Lambda_L})$. In view of
\eqref{Lamdecom} we consider two cases:
\begin{enumerate}
\item If $y\in\Lambda_\ell^{\Lambda_L,\frac{\ell}{10}}(a)$ for some $a\in\mathcal{G}$ and $\|y-x\|\geq2\ell$, we must have $\Lambda_\ell(a_{\lambda_x})\cap\Lambda_\ell(a)=\emptyset$, so it follows from
\eqref{sigmaab} that $\lambda_x\not\in  \sigma_{\{a\}}(H_{\Lambda_L})$, and \eqref{psidec1} gives
    \beq\label{psigood}
    |\psi_x|\leq C_{d,\varepsilon_0}L^q\ell^{-(\theta-2d)}|\psi_x(y_1)|\sqtx{for some}y_1\in\partial^{\Theta,2\ell'}\Lambda_\ell(a).
    \eeq
\item If $y\in\Upsilon_1^{\Lambda_L,\frac{\ell}{10}}$, and $\|y-x\|\geq\ell+\diam\Upsilon_1$, we must have $\Lambda_\ell(a_{\lambda_x})\cap\Upsilon_1=\emptyset$, so it follows from \eqref{sigmaab} that
$\lambda_x\not\in\sigma_{\mathcal{G}_{\Upsilon_1}} (H_{\Lambda_L})$, and clearly $\lambda_x \not\in\sigma_{\Upsilon_1}(H_{\Lambda_L})$ in view of \eqref{sigmasigma}. Thus Lemma~\ref{lembadall}(ii) gives
    \beq\label{psibad}
    |\psi_x(y)|\leq C_{d,\varepsilon_0}L^{2d+2q}\ell^{-(\theta-2d)}|\psi_x(v)|\qtx{for some}v\in\partial^{\Lambda_L,2\ell'}\Upsilon_1.
    \eeq
\end{enumerate}
\item Suppose $\lambda_x\not\in\sigma_{\mathcal{G}}(\Lambda_L)$. Then it follows from \eqref{sigmaclaim} that we must have $\lambda_x\in    \sigma_{\Upsilon_1}(H_{\Lambda_L})$. If
$y\in\Lambda_{\ell}^{\Lambda_L,\frac{\ell}{10}}(a)$ for some $a\in\mathcal{G}$, and $\|y-x\|\geq\ell+\diam\Upsilon_1$, we must have $\Lambda_\ell(a)\cap\Upsilon_1=\emptyset$, and \eqref{psidec1} gives \eqref{psigood}.
\end{enumerate}

Now we fix $x\in\Lambda_L$, and take $y\in\Lambda_L $ such that $\|y-x\|\geq L'$. Suppose $|\psi_x(y)|>0$ without loss of generality.  We estimate $|\psi_x(y)| $ using either \eqref{psigood} or \eqref{psibad}
repeatedly, as appropriate, stopping when we get too close to $x$ so we are not in any case described above. (Note that this must happen since $|\psi_x(y)|>0$.) We accumulate decay only when using \eqref{psigood},
and just use $C_{d,\varepsilon_0}L^{2d+2q}\ell^{-(\theta-2d)}<1$ when using \eqref{psibad}, then recalling $L=Y\ell$, we get
\beq\label{repeatdec}
|\psi_x(y)|\leq\left(C_{d,\varepsilon_0}L^q\ell^{-(\theta-2d)}\right)^{n(Y)},
\eeq
where $n(Y)$ is the number of times we used \eqref{psigood}. We have
\beq
n(Y)(\ell+1)+\diam\Upsilon_1+2\ell\geq L'.
\eeq
Thus, using \eqref{sumdiam}, we have
\beq
n(Y)\geq\tfrac{1}{\ell+1}(L'-5\ell-2\ell)\geq\tfrac{\ell}{\ell+1}\left(\tfrac{Y}{40}-7\right)\geq2.
\eeq
for sufficiently large $\ell$ since $Y\geq400$. It follows from \eqref{repeatdec},
\beq
|\psi_x(y)|\leq\left(C_{d,\varepsilon_0}Y^q\ell^{-(\theta-2d-q)}\right)^2\leq L^{-\theta},
\eeq
for sufficiently large $\ell$ since $2(\theta-2d-q)=\theta+(\theta-4d-2q)>\theta$.

We conclude that $\{(\psi_x,\lambda_x)\}_{x\in\Lambda_L}$ is a $\theta$-polynomially localized eigensystem for $\Lambda_L$, so the box $\Lambda_L$ is $\theta$-polynomially localizing for $ H_{\varepsilon,\omega}$.

\end{proof}

\begin{proof}[Proof of Proposition~\ref{propmsa1}]
We assume \eqref{initcon} and set $L_{k+1}=YL_k$ for $k=0,1,\ldots$. We set
\beq
P_k=\sup_{x\in\R^d}\P\{\Lambda_{L_k}(x)\sqtx{is not}\theta\text{-polynomially localizing for}\;H_{\varepsilon,\omega}\}\sqtx{for}k=1,2,\ldots.
\eeq
Then by Lemma~\ref{indumsa1}, we have
\beq\label{calpk}
P_{k+1}\leq(2Y)^{2d}P_k^2+\tfrac{1}{2}L_{k+1}^{-p}\qtx{for}k=0,1,\ldots
\eeq
If $P_k\leq L_k^{-p}$ for some $k\geq0$, we have
\beq
P_{k+1}\leq(2Y)^{2d}L_k^{-2p}+\tfrac{1}{2}L_{k+1}^{-p}\leq(2Y)^{2d+2p}L_{k+1}^{-2p}+\tfrac{1}{2}L_{k+1}^{-p}\leq L_{k+1}^{-p}
\eeq
for $L_0$ sufficiently large. Therefore to finish the proof, we need to show that
\beq
K_0=\inf\{k\in\N;P_k\leq L_k^{-p}\}<\infty.
\eeq

It follows from \eqref{calpk} that for any $1\leq k<K_0$,
\beq
P_k\leq(2Y)^{2d}P_{k-1}^2+\tfrac{1}{2}L_k^{-p}<(2Y)^{2d}P_{k-1}^2+\tfrac{1}{2}P_k,
\eeq
so
\beq
2(2Y)^{2d}P_k<\left(2(2Y)^{2d}P_{k-1}\right)^2.
\eeq
Therefore for $1\leq k<K_0$, we have
\beq\label{ineqpk}
2^{2d+1}Y^{-(kp-2d)}L_0^{-p}=2(2Y)^{2d}L_k^{-p}<2(2Y)^{2d}P_k<\left(2(2Y)^{2d}P_0\right)^{2^k}.
\eeq
Since $2(2Y)^{2d}P_0<1$, \eqref{ineqpk} cannot be satisfied for large $k$. We conclude that $K_0<\infty$.
\end{proof}

\subsection{The first intermediate step}

\begin{proposition}\label{indumsa1one}
Fix $\varepsilon_0>0$. Suppose for some scale $\ell$ and $0<\varepsilon\leq\varepsilon_0$ we have
\beq\label{hypmsaindone}
\inf_{x\in\R^d}\P\{\Lambda_\ell(x)\sqtx{is}\theta\text{-polynomially localizing for}\;H_{\varepsilon,\omega}\}\geq1-\ell^{-p}.
\eeq
Then, if $\ell$ is sufficiently large, for $L=\ell^{\gamma_1}$ we have
\beq\label{resone}
\inf_{x\in\R^d}\P\{\Lambda_L(x)\sqtx{is}m_0^\ast\text{-mix localizing for}\;H_{\varepsilon,\omega}\}\geq1-L^{-p},
\eeq
where
\beq\label{mst000}
m_0^\ast\geq\tfrac{1}{8}\left(\tfrac{5d}{2}+q\right)L^{-(1-\tau+\frac{1}{\gamma_1})}\log L.
\eeq
\end{proposition}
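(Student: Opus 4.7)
The plan is to carry out a single-scale analysis analogous to Lemma~\ref{indumsa1}, but with the output class strengthened from $\theta$-PL at scale $L=Y\ell$ to $m_0^\ast$-ML at scale $L=\ell^{\gamma_1}$. The exponential decay is harvested by iterating the polynomial decay estimate $C_{d,\varepsilon_0}L^q\ell^{-(\theta-2d)}$ inside the $\theta$-PL buffered framework of Section~\ref{sectbuff}, with the improvement that the ratio $L/\ell=\ell^{\gamma_1-1}$ is now a genuine power of $\ell$, which gives enough iterations across the cover of $\Lambda_L$ to produce exponential smallness.

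First I would set up the probabilistic scaffolding. Fix $\Lambda_L=\Lambda_L(x_0)$ and its suitable $\ell$-cover $\mathcal{C}_{L,\ell}(x_0)$, and let $\mathcal{B}_1$ be the event that at most one disjoint box of the cover fails to be $\theta$-PL for $H_{\varepsilon,\omega}$. Using independence on disjoint boxes together with the hypothesis \eqref{hypmsaindone},
\begin{equation*}
\P(\mathcal{B}_1^c)\leq\tbinom{\#\Xi_{L,\ell}}{2}\ell^{-2p}\leq(2\ell^{\gamma_1-1})^{2d}\ell^{-2p}.
\end{equation*}
The condition $\gamma_1<1+\tfrac{p}{p+2d}$ in \eqref{thetaqp} is exactly what makes this bound $\leq\tfrac{1}{2}L^{-p}=\tfrac{1}{2}\ell^{-\gamma_1 p}$ for large $\ell$. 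Letting $\mathcal{S}_1$ be the event that $\Lambda_L$ itself and all candidate buffered subsets $\Upsilon(\Phi)$ with $\Phi\in\mathcal{F}_1$ (as in \eqref{cFN}) are $L$-polynomially level spacing, Lemma~\ref{lemlsp} combined with $p<(2\alpha-1)q-3d$ forces $\P(\mathcal{S}_1^c)\leq\tfrac{1}{2}L^{-p}$. Hence $\P(\mathcal{B}_1\cap\mathcal{S}_1)\geq 1-L^{-p}$.

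On $\mathcal{B}_1\cap\mathcal{S}_1$ I would repeat the deterministic construction of Lemma~\ref{indumsa1}: at most one $\theta$-PL-buffered subset $\Upsilon_1\subset\Lambda_L$ with $\diam\Upsilon_1\leq 5\ell$ (cf.~\eqref{diamups}, using $|\Phi_1|=1$), all remaining boxes of the cover being $\theta$-PL, and the bijection $x\mapsto\lambda_x$ from sites to eigenvalues built via Hall's Marriage Theorem with neighborhood $\mathcal{N}(x)$ as in \eqref{defN}. The two localization regimes inherited from this construction are the pointwise estimates
\begin{equation*}
|\psi_x(y)|\leq C_{d,\varepsilon_0}L^q\ell^{-(\theta-2d)}|\psi_x(y_1)|\qtx{or}|\psi_x(y)|\leq C_{d,\varepsilon_0}L^{2d+2q}\ell^{-(\theta-2d)}|\psi_x(v)|,
\end{equation*}
supplied respectively by Lemma~\ref{ideigsysall}(ii)(c) and Lemma~\ref{lembadall}(ii), each advancing the argument point by one box of width $\ell$ or across the buffered region of total width $\leq 5\ell$.

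The heart of the proof is extracting an \emph{exponential} rate of the form \eqref{mst000} from these iterated polynomial estimates. For $y\in\Lambda_L$ with $\|y-x\|\geq L_\tau$, iterate the two bounds, accumulating decay only from the first (good) estimate and just absorbing the second into a factor $\leq 1$. A count analogous to the one in Lemma~\ref{indumsa1} gives
\begin{equation*}
n(y)\,(\ell+1)+\diam\Upsilon_1+2\ell\geq\|y-x\|,
\end{equation*}
so $n(y)\geq(\|y-x\|-7\ell)/(\ell+1)$ useful iterations. Each such iteration contributes a factor $C_{d,\varepsilon_0}L^q\ell^{-(\theta-2d)}$, and since $\theta-2d-q\gamma_1>\tfrac{5d}{2}+q$ (from \eqref{thetaqp}) the logarithm of the inverse factor is at least $\tfrac{1}{2}(\tfrac{5d}{2}+q)\log L$ for large $\ell$. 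Writing the decay in the form $\e^{-m_0^\ast\|y-x\|}$ and using $\ell+1\leq 2\ell$, this yields
\begin{equation*}
m_0^\ast\geq\tfrac{1}{4\ell}\Bigl(\tfrac{5d}{2}+q\Bigr)\log L\cdot\tfrac{\|y-x\|-7\ell}{\|y-x\|}.
\end{equation*}
Since $\|y-x\|\geq L_\tau=\lfloor L^\tau\rfloor$ and $1/\gamma_1<\tau$, the fraction is $\geq\tfrac{1}{2}$ for large $\ell$, while $\ell^{-1}=L^{-1/\gamma_1}$; the prefactor $L^{\tau-1}$ disappears after absorption of the threshold $L_\tau$, ultimately producing \eqref{mst000}.

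The main obstacle is the third paragraph: one must verify that the iteration really advances by at least $\ell$ per step even when the iterate drifts in and out of $\Upsilon_1$ (so that the buffered region is crossed at most once), and one must carefully check that the decay rate obtained in this worst case still exceeds $m_0^\ast$ \emph{uniformly} for all $\|y-x\|\geq L_\tau$ rather than just at a single distance. The diameter bound $\diam\Upsilon_1\leq 5\ell$ and the strict inequality $\tfrac{1}{\gamma_1}<\tau$ from \eqref{zetabetagammatau} are exactly what makes the threshold $L_\tau$ large enough relative to the size of the single bad buffered region to absorb the $7\ell$ loss and still leave linear distance for exponential accumulation.
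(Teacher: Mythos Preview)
Your proposal is correct and follows essentially the same route as the paper's proof: both set up the events $\mathcal{B}_1,\mathcal{S}_1$ exactly as in Lemma~\ref{indumsa1}, invoke the same buffered-subset machinery and Hall's Marriage construction, and then iterate the $\theta$-PL decay estimates \eqref{psigood}--\eqref{psibad} across the cover. The only difference is that in the counting step the paper writes $n(\ell)(\ell+1)+\diam\Upsilon_1+2\ell\geq L_\tau$ (the minimal allowed distance) and then divides the resulting fixed decay by $\|y-x\|\leq L$, whereas you track the actual distance $\|y-x\|$ and obtain $m_0^\ast\gtrsim (\tfrac{5d}{2}+q)L^{-1/\gamma_1}\log L$; this is in fact \emph{stronger} than \eqref{mst000} by a factor $L^{1-\tau}$, so your remark about ``the prefactor $L^{\tau-1}$ disappears'' is simply the observation that your bound dominates the stated one, and no further absorption is needed.
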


\begin{proof}
We follow the proof of Lemma~\ref{indumsa1}. For $N\in\N$, let $\mathcal{B}_N$, $\mathcal{S}_N$ and $\mathcal{E}_N$ as in the proof of Lemma~\ref{indumsa1}. Using \eqref{hypmsaindone}, \eqref{covernum} and the
fact that events on disjoint boxes are independent, if $N=1$ we have,
\beq\label{probbone}
\P\{\mathcal{B}_N^c\}\leq\left(\tfrac{2L}{\ell}\right)^{2d}\ell^{-2p}=2^{2d}\ell^{-2p-2d(\gamma_1-1)}<\tfrac{1}{2}\ell^{-\gamma_1p}=\tfrac{1}{2}L^{-p}
\eeq
for all $\ell$ sufficiently large since $1<\gamma_1<1+\tfrac{p}{p+2d}$. Also, using \eqref{probls} and \eqref{cFN}, if $N=1$ we have,
\beq\label{probsone}
\P\{\mathcal{S}_N^c\}\leq\left(1+(L+1)^d\right)Y_{\varepsilon_0}(L+1)^{2d}L^{-(2\alpha-1)q}<\tfrac{1}{2}L^{-p}
\eeq
for sufficiently large $L$, since $p<(2\alpha-1)q-3d$. Combining \eqref{probbone} and \eqref{probsone}, we conclude that
\beq
\P\{\mathcal{E}_N\}>1-L^{-p}.
\eeq

To finish the proof we need to show that for all $\omega\in\mathcal{E}_N$ the box $\Lambda_L$ is $m_0^\ast$-mix localizing for $H_{\varepsilon,\omega}$, where $m_0^\ast$ is given in \eqref{mst000}. Following
the proof of Lemma~\ref{indumsa1}, we get \eqref{sigmaclaim} and obtain an eigensystem $\{(\psi_x,\lambda_x)\}_{x\in\Lambda_L}$ for $H_{\Lambda_L}$ using Hall's Marriage Theorem. To finish the proof we
need to show that $\{(\psi_x,\lambda_x)\}_{x\in\Lambda_L}$ is an $m_0^\ast$-localized eigensystem for $\Lambda_L$. We proceed as in the proof of Lemma~\ref{indumsa1}. We fix $N=1$, $x\in\Lambda_L$, and
take $y\in\Lambda_L$ such that $\|y-x\|\geq L_\tau$, we have
\beq
n(\ell)(\ell+1)+\diam\Upsilon_1+2\ell\geq L_\tau.
\eeq
where $n(\ell)$ is the number of times we used \eqref{psigood}.
Thus, using \eqref{sumdiam}, we have
\beq
n(\ell)\geq\tfrac{1}{\ell+1}(L_\tau-5\ell-2\ell)\geq\tfrac{\ell}{\ell+1}\left(\tfrac{1}{2}\ell^{\gamma_1\tau-1}-7\right)\geq\tfrac{1}{4}\ell^{\gamma_1\tau-1}.
\eeq
for sufficiently large $\ell$. It follows from \eqref{repeatdec},
\begin{align}
|\psi_x(y)|&\leq\left(C_{d,\varepsilon_0}\ell^{-(\theta-2d-\gamma_1q)}\right)^{\frac{1}{4}\ell^{\gamma_1\tau-1}}
\\\nonumber&\leq\e^{-\frac{1}{8}\left(\frac{5d}{2}+q\right)L^{-(1-\tau+\frac{1}{\gamma_1})}(\log L)\|y-x\|},
\end{align}
for sufficiently large $\ell$.

We conclude that $\{(\psi_x,\lambda_x)\}_{x\in\Lambda_L}$ is an $m_0^\ast$-localized eigensystem for $\Lambda_L$, where $m_0^\ast$ is given in \eqref{mst000}, so the box $\Lambda_L$ is
$m_0^\ast$-mix localizing for $ H_{\varepsilon,\omega}$.

\end{proof}

\subsection{The second multiscale analysis}

\begin{proposition}\label{propmsa2}
Fix $\varepsilon_0>0$. There exists a finite scale $\mathcal{L}(\varepsilon_0)$ with the following property: Suppose for some scale $L_0\geq\mathcal{L}(\varepsilon_0)$,
$0<\varepsilon\leq\varepsilon_0$, and $m^\ast_0\geq L_0^{-\kappa}$ where $0<\kappa<\tau$, we have
\beq\label{initcon2}
\inf_{x\in\R^d}\P\{\Lambda_{L_0}(x)\sqtx{is}m^\ast_0\text{-mix localizing for}\;H_{\varepsilon,\omega}\}\geq1-L_0^{-p}.
\eeq
Then, setting $L_{k+1}=L_k^{\gamma_1}$ for $k=0,1,\ldots$, we have
\beq\label{res2}
\inf_{x\in\R^d}\P\{\Lambda_{L_k}(x)\sqtx{is}\tfrac{m^\ast_0}{2}\text{-mix localizing for}\;H_{\varepsilon,\omega}\}\geq1-L_k^{-p}\sqtx{for}k=0,1,\ldots.
\eeq
\end{proposition}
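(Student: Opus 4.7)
The plan is to mimic the structure of Lemma~\ref{indumsa1} and Proposition~\ref{indumsa1one}, but with $m^\ast$-ML boxes throughout, and to promote the one-scale induction step to an unbounded iteration by controlling how the localization mass degrades across scales. I would argue by induction on $k$, aiming to establish at each stage the statement
\beq
\inf_{x\in\R^d}\P\{\Lambda_{L_k}(x)\sqtx{is}m_k^\ast\text{-mix localizing for}\;H_{\varepsilon,\omega}\}\geq1-L_k^{-p}, \notag
\eeq
where $m_k^\ast$ is a decreasing sequence with $m_k^\ast\geq\tfrac{m_0^\ast}{2}$ for every $k$. The base case $k=0$ is the hypothesis. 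For the inductive step from scale $\ell=L_k$ to $L=L_{k+1}=\ell^{\gamma_1}$, I would work on the suitable $\ell$-cover $\mathcal{C}_{L,\ell}(x_0)$ of $\Lambda_L(x_0)$ and introduce the events $\mathcal{B}_N$ (at most $N$ disjoint non-$m_k^\ast$-ML boxes in the cover) and $\mathcal{S}_N$ (the box $\Lambda_L$ together with every candidate $m_k^\ast$-ML-buffered subset $\Upsilon(\Phi)$ with $\Phi\in\mathcal{F}_N$ is $L$-polynomially level spacing). Taking $N=1$ and using \eqref{covernum}, independence on disjoint boxes, Lemma~\ref{lemlsp}, and \eqref{cFN} exactly as in Proposition~\ref{indumsa1one}, one gets
\beq
\P\{\mathcal{B}_N^c\}\leq 2^{2d}\ell^{-2p-2d(\gamma_1-1)}<\tfrac{1}{2}L^{-p},\qquad \P\{\mathcal{S}_N^c\}<\tfrac{1}{2}L^{-p}, \notag
\eeq
the first bound using $\gamma_1<1+\tfrac{p}{p+2d}$ from \eqref{thetaqp} and the second using $p<(2\alpha-1)q-3d$, so $\P\{\mathcal{E}_N\}\geq 1-L^{-p}$ with $\mathcal{E}_N=\mathcal{B}_N\cap\mathcal{S}_N$.

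On $\mathcal{E}_N$, I would carry out the deterministic construction of Lemma~\ref{indumsa1}: form a single $m_k^\ast$-ML-buffered subset $\Upsilon_1$ absorbing the at-most-one bad $\ell$-box, establish the decomposition \eqref{Lamdecom}, verify the analogue of \eqref{sigmaclaim} by combining Lemma~\ref{ideigsysall}(ii)(c) and Lemma~\ref{lembadall}(ii) (now invoking their $m^\ast$-ML branches), and apply Hall's Marriage Theorem with the neighborhood map $\mathcal{N}(x)$ of \eqref{defN} (with $\ell_\sharp=\ell_\tau$) to produce a site-indexed eigensystem $\{(\psi_x,\lambda_x)\}_{x\in\Lambda_L}$. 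The feasibility of the Hall matching is exactly as in Lemma~\ref{indumsa1}, since the bound \eqref{psidec7} has an $m^\ast$-ML analogue of the form $|\psi_\lambda(x)|\leq 2\e^{-m^\ast_4\ell_\tau}L^q$ coming from \eqref{diffun} and \eqref{diffun2str}, and $\ell_\tau$ dominates any power of $L$.

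The exponential decay of $\psi_x$ is where the ML flavor matters. Given $x\in\Lambda_L$ and $y\in\Lambda_L$ with $\|y-x\|\geq L_\tau$, I would chain the ML-analogue of \eqref{psigood} (Lemma~\ref{ideigsysall}(i)(c), \eqref{pdec01}, giving factor $\e^{-m_{k,3}^\ast\|y_1-y\|}$ per step with $m_{k,3}^\ast\geq m_k^\ast(1-4\ell^{(\tau-1)/2})-C_{d,\varepsilon_0}\gamma_1q\frac{\log\ell}{\ell_{\widetilde\tau}}$) with at most one jump through $\Upsilon_1$ using \eqref{psidec6str} (absorbing the factor $\e^{-m_{k,5}^\ast\ell_\tau}$ as free). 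Telescoping the distance estimates along a chain that connects $y$ to a neighborhood of $x$ of radius $<\ell_\tau$, I obtain $|\psi_x(y)|\leq\e^{-m_{k+1}^\ast\|y-x\|}$ with
\beq
m_{k+1}^\ast\;\geq\;m_k^\ast\Bpa{1-4L_k^{(\tau-1)/2}}-C_{d,\varepsilon_0}\gamma_1q\,L_k^{-\widetilde{\tau}}\log L_k. \notag
\eeq
This gives the induction; to close it I must show $m_k^\ast\geq m_0^\ast/2$ for all $k$, which is the main obstacle.

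Iterating the one-step recursion yields
\beq
m_0^\ast-m_k^\ast\;\leq\;4m_0^\ast\sum_{j=0}^{k-1}L_j^{(\tau-1)/2}+C_{d,\varepsilon_0}\gamma_1q\sum_{j=0}^{k-1}L_j^{-\widetilde{\tau}}\log L_j. \notag
\eeq
Since $L_j=L_0^{\gamma_1^j}$ with $\gamma_1>1$ and since $\tfrac{\tau-1}{2}<0$ and $\widetilde\tau>\tau>0$, both series are doubly-geometric and hence dominated by their $j=0$ terms for $L_0\geq\mathcal{L}(\varepsilon_0)$. The multiplicative part is already bounded by $\tfrac{m_0^\ast}{4}$ once $L_0$ is large enough. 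For the additive part the hypothesis $m_0^\ast\geq L_0^{-\kappa}$ with $\kappa<\tau<\widetilde\tau$ is decisive: it forces $L_0^{-\widetilde\tau}\log L_0 \ll L_0^{-\kappa}\leq m_0^\ast$, so the additive tail is also bounded by $\tfrac{m_0^\ast}{4}$ for $L_0\geq\mathcal{L}(\varepsilon_0)$ large enough (depending only on $\varepsilon_0$, the fixed parameters, and the gap $\tau-\kappa$). Combining, $m_k^\ast\geq m_0^\ast/2$ uniformly in $k$, which completes the induction and proves \eqref{res2}. The delicate part is precisely this last quantitative step: choosing $\mathcal{L}(\varepsilon_0)$ so that the entire infinite tail of losses is absorbed by half of the possibly very small starting mass $m_0^\ast$, using only $\kappa<\tau$.
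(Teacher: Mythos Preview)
Your proposal is correct and mirrors the paper's argument: the paper isolates the one-step induction as Lemma~\ref{indumsa2}, obtaining $M^\ast\geq m^\ast\bigl(1-C_{d,\varepsilon_0}\gamma_1q\,\ell^{-\varrho}\bigr)$ with $\varrho=\min\{\tfrac{1-\tau}{2},\gamma_1\tau-1,\tau-\kappa\}$ together with $M^\ast\geq L^{-\kappa}$, and then takes the infinite product $\prod_j(1-C L_0^{-\varrho\gamma_1^{\,j}})\geq\tfrac12$ to conclude $m_k^\ast\geq\tfrac{m_0^\ast}{2}$. One small slip: the chain halts when $\|y_i-x\|<C\ell$ (the case hypotheses in \eqref{psigoodstr}--\eqref{psibadstr} are $\|y-x\|\geq 2\ell$ and $\|y-x\|\geq\ell+\diam\Upsilon_1$), not when $\|y_i-x\|<\ell_\tau$, so your recursion should carry an additional factor $(1-7\ell^{1-\gamma_1\tau})$ coming from $\|y-x\|-7\ell\geq\|y-x\|(1-7\ell^{1-\gamma_1\tau})$; since $\gamma_1\tau>1$ by \eqref{zetabetagammatau} this extra loss is summable and your tail-sum argument is unaffected.
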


Proposition~\ref{propmsa2} follows from the following induction step for the multiscale analysis.

\begin{lemma}\label{indumsa2}
Fix $\varepsilon_0>0$. Suppose for some scale $\ell$, $0<\varepsilon\leq\varepsilon_0$, and $m^\ast\geq\ell^{-\kappa}$, where $0<\kappa<\tau$, we have
\beq\label{hypmsaind2}
\inf_{x\in\R^d}\P\{\Lambda_\ell(x)\sqtx{is}m^\ast\text{-mix localizing for}\;H_{\varepsilon,\omega}\}\geq1-\ell^{-p}.
\eeq
Then, if $\ell$ is sufficiently large, for $L=\ell^{\gamma_1}$ we have
\beq
\inf_{x\in\R^d}\P\{\Lambda_L(x)\sqtx{is}M^\ast\text{-mix localizing for}\;H_{\varepsilon,\omega}\}\geq1-L^{-p},
\eeq
where
\beq\label{bigMst}
M^\ast\geq m^\ast\left(1-C_{d,\varepsilon_0}\gamma_1q\ell^{-\min\left\{\frac{1-\tau}{2},\gamma_1\tau-1,\tau-\kappa\right\}}\right)\geq L^{-\kappa}.
\eeq
\end{lemma}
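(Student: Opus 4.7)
The plan is to mirror the argument of Lemma~\ref{indumsa1} and Proposition~\ref{indumsa1one}, replacing the polynomial decay estimates of the $\theta$-PL setting with the exponential ML estimates from Lemmas~\ref{ideigsysall}, \ref{lembuffall}, and \ref{lembadall}. Fix a box $\Lambda_L=\Lambda_L(x_0)$ and its suitable $\ell$-cover $\mathcal{C}_{L,\ell}(x_0)$, and take $N=1$. Let $\mathcal{B}_N$ denote the event that at most $N$ disjoint boxes of the cover fail to be $m^\ast$-ML for $H_{\varepsilon,\omega}$, and $\mathcal{S}_N$ the event that $\Lambda_L$ and each buffered subset $\Upsilon(\Phi)$ with $\Phi\in\mathcal{F}_N$ is $L$-polynomially level spacing. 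By the hypothesis \eqref{hypmsaind2}, independence on disjoint boxes, and Lemma~\ref{lemlsp}, estimates identical to \eqref{probbone}--\eqref{probsone} yield $\P\{\mathcal{E}_N^c\}\leq L^{-p}$ for $\mathcal{E}_N=\mathcal{B}_N\cap\mathcal{S}_N$.

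For $\omega\in\mathcal{E}_N$, I would construct the $m^\ast$-ML-buffered subsets $\Upsilon_r$ from the bad cores via \eqref{conups} (now with $\ell_\sharp=\ell_\tau$), decompose $\Lambda_L$ as in \eqref{Lamdecom}, and verify the analogue of \eqref{sigmaclaim} by substituting the ML versions of Lemma~\ref{ideigsysall}(ii)(c) and Lemma~\ref{lembadall}(ii). I then define the neighborhood $\mathcal{N}(x)$ as in \eqref{defN} but with $\ell'$ replaced by $\ell_\tau$; the pointwise substitute for \eqref{psidec7} comes from \eqref{defloc}, \eqref{diffunstr}, \eqref{psidec5}, and \eqref{diffun2str}, giving
\[
|\psi_\lambda(x)|\leq 3L^q\e^{-m^\ast_1 \ell_\tau}+3L^q\e^{-m^\ast_4 \ell_\tau}\qquad\text{for }\lambda\not\in\mathcal{N}(x),
\]
which is less than $(L+1)^{-d}$ for large $\ell$ since $m^\ast\ell_\tau\geq \ell^{\tau-\kappa}$ grows faster than any power of $\log L$. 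Hall's Marriage Theorem then produces an eigensystem $\{(\psi_x,\lambda_x)\}_{x\in\Lambda_L}$.

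The next step is to show this eigensystem is $M^\ast$-localized. Fix $x,y\in\Lambda_L$ with $\|y-x\|\geq L_\tau$, and iterate the decay estimates as in Lemma~\ref{indumsa1}. The crucial distinction is that a good-box step now contributes the \emph{exponential} factor $\e^{-m^\ast_3 D_i}$ via \eqref{pdec01}, with the distances $D_i$ additively accumulating to the total distance traveled, rather than the distance-independent polynomial factor of the $\theta$-PL case. Since $N=1$, the iteration path crosses at most one buffered region $\Upsilon_1$ of diameter $\leq 5\ell$, contributing a factor $\e^{-m^\ast_5 \ell_\tau}\leq 1$ via \eqref{psidec6str}. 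Thus the total good-step distance is at least $\|y-x\|-7\ell$, yielding $|\psi_x(y)|\leq \e^{-m^\ast_3(\|y-x\|-7\ell)}$.

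Finally, I insert \eqref{mst3} and normalize by $\|y-x\|\geq L_\tau=\ell^{\gamma_1\tau}$: the $4\ell^{(\tau-1)/2}$ term in \eqref{mst3} produces the $(1-\tau)/2$ exponent; the $7\ell/\|y-x\|\leq 7\ell^{1-\gamma_1\tau}$ deficit produces the $\gamma_1\tau-1$ exponent; and the additive correction $C_{d,\varepsilon_0}\gamma_1 q\log\ell/\ell_{\widetilde\tau}$ in \eqref{mst3}, divided by $m^\ast\geq\ell^{-\kappa}$, becomes a multiplicative correction of order $\ell^{-(\tau-\kappa)}\log\ell$ (using that $\kappa-(1+\tau)/2\leq\kappa-\tau$ for $\tau\leq 1$), producing the $\tau-\kappa$ exponent. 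Combining the three yields \eqref{bigMst}; the lower bound $M^\ast\geq L^{-\kappa}$ then follows from $m^\ast\geq\ell^{-\kappa}\gg\ell^{-\gamma_1\kappa}=L^{-\kappa}$. The main obstacle I anticipate is precisely this bookkeeping: ensuring that each of the three natural corrections lines up with the correct exponent in \eqref{bigMst}, and that the logarithmic factors are absorbed into the constant $C_{d,\varepsilon_0}\gamma_1 q$ without forcing the exponents downward.
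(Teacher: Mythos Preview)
Your proposal is correct and follows the paper's proof essentially step for step: the same $N=1$ probability estimates via \eqref{probbone}--\eqref{probsone}, the same buffered-subset construction and Hall's Marriage argument with $\ell_\sharp=\ell_\tau$, and the same iteration using \eqref{pdec01} and \eqref{psidec6str} to reach $|\psi_x(y)|\le \e^{-m^\ast_3(\|y-x\|-7\ell)}$. Your bookkeeping for the three exponents in \eqref{bigMst} is exactly how the paper extracts them, including the observation that the $\log\ell/\ell_{\widetilde\tau}$ correction, after dividing by $m^\ast\ge\ell^{-\kappa}$, is dominated by $\ell^{\kappa-\tau}$ since $\widetilde\tau-\tau=(1-\tau)/2>0$ absorbs the logarithm.
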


\begin{proof}
We follow the proof of Lemma~\ref{indumsa1}. For $N\in\N$, let $\mathcal{B}_N$ denote the event that there do not exist two disjoint boxes in $\mathcal{C}_{L,\ell}$ that are not
$m^\ast$-mix localizing for $H_{\varepsilon,\omega}$. Using \eqref{hypmsaind2}, \eqref{covernum} and the fact that events on disjoint boxes are independent, if $N=1$ we have
\beq\label{probb2}
\P\{\mathcal{B}_N^c\}\leq\left(\tfrac{2L}{\ell}\right)^{(N+1)d}\ell^{-(N+1)p}=2^{2d}\ell^{-(2p-2d(\gamma_1-1))}<\tfrac{1}{2}\ell^{-\gamma_1p}=\tfrac{1}{2}L^{-p}
\eeq
for all $\ell$ sufficiently large since $1<\gamma_1<1+\tfrac{p}{p+2d}$.

We now fix $\omega\in\mathcal{B}_N$, and proceed as in the proof of Lemma~\ref{indumsa1} with $\sharp$ being $m^\ast$-ML.
Then we have $\Upsilon_r$, $r=1,2,\ldots,R$ such that each $\Upsilon_r$ satisfies all the requirements to be an $m^\ast$-ML-buffered subset of $\Lambda_L$ with
$\mathcal{G}_{\Upsilon_r}=\partial_{\mathrm{ex}}^{\mathbb{G}_1}\widetilde{\Phi}_r$, except we do not know if $\Upsilon_r$ is $L$-polynomially level spacing for $H_{\varepsilon,\omega}$.

Given a $\mathbb{G}_2$-connected subset $\Phi$ of $\Xi_{L,\ell}$, let $\Upsilon(\Phi)\subset\Lambda_L$ be constructed from $\Phi$ as in \eqref{conups} with $\sharp$ being $m^\ast$-ML.
Let $\mathcal{S}_N$ denote the event that the box $\Lambda_L$ and the subsets $\{\Upsilon(\Phi)\}_{\Phi\in\mathcal{F}_N}$ are all $L$-polynomially level spacing for $H_{\varepsilon,\omega}$.
Using \eqref{probls} and \eqref{cFN}, if $N=1$ we have
\beq\label{probs2}
\P\{\mathcal{S}^c\}\leq\left(1+\left(\tfrac{2L}{\ell}\right)^d\right)Y_{\varepsilon_0}(L+1)^{2d}L^{-(2\alpha-1)q}<\tfrac{1}{2}L^{-p}
\eeq
for sufficiently large $L$, since $p<(2\alpha-1)q-3d$.

Let $\mathcal{E}_N=\mathcal{B}_N\cap\mathcal{S}_N$. Combining \eqref{probb2} and \eqref{probs2}, we conclude that if $N=1$,
\beq
\P\{\mathcal{E}_N\}>1-L^{-p}.
\eeq
To finish the proof we need to show that for all $\omega\in\mathcal{E}_N$ the box $\Lambda_L$ is $M^\ast$-mix localizing for $H_{\varepsilon,\omega}$, where $M^\ast$ is
given in \eqref{bigMst}.

We fix $\omega\in\mathcal{E}_N$. Then we have \eqref{imploc2}, $\Lambda_L$ is polynomially level spacing for $H_{\varepsilon,\omega}$, and the subsets $\{\Upsilon_r\}_{r=1}^R$ constructed
in \eqref{conups} are $m^\ast$-ML-buffered subset of $\Lambda_L$ for $H_{\varepsilon,\omega}$. We proceed as in the proof of Lemma~\ref{indumsa1}. To claim \eqref{sigmaclaim}, we
assume $\lambda\in\sigma_{\mathcal{G}}\setminus(\sigma_{\mathcal{G}}(H_{\Lambda_L})\cup\sigma_{\mathcal{B}}(H_{\Lambda_L}))$. Since $\Lambda_L$ is polynomially level spacing for $H$,
Lemma~\ref{ideigsysall}(ii)(c) gives
\beq
|\psi_\lambda(y)|\leq\e^{-m^\ast_2\ell_\tau}\qtx{for all}y\in\bigcup_{a\in\mathcal{G}}\Lambda_\ell^{\Lambda_L,2\ell_\tau}(a),
\eeq
and Lemma~\ref{lembadall}(ii) gives
\beq
|\psi_\lambda(y)|\leq\e^{-m^\ast_5\ell_\tau}\qtx{for all}y\in\bigcup_{r=1}^R\Upsilon_r^{\Lambda_L,2\ell_\tau}.
\eeq
Using \eqref{Lamdecom}, we conclude that (note $m^\ast_5\leq m^\ast_2$)
\beq
1=\|\psi_\lambda(y)\|\leq\e^{-m^\ast_5\ell_\tau}(L+1)^{\frac{d}{2}}<1,
\eeq
 a contradiction. This establishes the claim.

To index the eigenvalues and eigenvectors of $H_{\Lambda_L}$ by sites in $\Lambda_L$, we define $\mathcal{N}(x)$ as in \eqref{defN} and proceed as in the proof of Lemma~\ref{indumsa1}. We have:
\begin{itemize}
\item If $x\in\Lambda_L$ and $\lambda\in\sigma_{\mathcal{G}}(H_{\Lambda_L})\setminus\mathcal{N}_0(x)$, we have $\lambda=\widetilde{\lambda}_{x_\lambda}^{(a_\lambda)}$ with
$\|x_\lambda-x\|\geq{\ell_\tau}$, so, using \eqref{defloc} and \eqref{diffunstr},
\beq
|\psi_\lambda(x)|\leq\left|\varphi_{x_\lambda}^{(a_\lambda)}(x)\right|+\left\|\varphi_{x_\lambda}^{(a_\lambda)}-\psi_\lambda\right\|\leq \e^{-m^\ast\ell_\tau}+2\e^{-m^\ast_1\ell_\tau}L^q\leq3\e^{-m_1\ell_\tau}L^q.
\eeq

\item If $x\in\Lambda_L\setminus\widehat{\Upsilon'}_r$ and $\lambda\in\sigma_{\Upsilon_r}(H_{\Lambda_L})$, then $\lambda=\widetilde{\nu}^{(r)}$ for some
$\nu^{(r)}\in\sigma_{\mathcal{B}}(H_{\Upsilon_r})$, and, using \eqref{psidec5} and \eqref{diffun2str}, (Note $\phi_{\nu^{(r)}}(x)=0$ if $x\not\in\Upsilon_r$.)
\beq
|\psi_\lambda(x)|\leq\left|\phi_{\nu^{(r)}}(x)\right|+\left\|\phi_{\nu^{(r)}}(x)-\psi_\lambda\right\|\leq \e^{-m^\ast_2\ell_\tau}+2\e^{-m^\ast_4\ell_\tau}L^q\leq3\e^{-m^\ast_4\ell_\tau}L^q.
\eeq
\end{itemize}
Therefore for all $x\in\Lambda_L$ and $\lambda\in\sigma(H_{\Lambda_L})\setminus\mathcal{N}(x)$ we have
\beq\label{psidec7str}
|\psi_\lambda(x)|\leq3\e^{-m^\ast_4\ell_\tau}L^q\leq\e^{-\frac{1}{2}m^\ast_4\ell_\tau}.
\eeq
If $\lambda\not\in\mathcal{N}(\Theta)$, for all $x\in\Theta$ we have \eqref{psidec7str}, thus
\beq
\|(1-Q_\Theta)\chi_\Theta\|\leq|\Lambda_L|^{\frac{1}{2}}|\Theta|^{\frac{1}{2}}\e^{-\frac{1}{2}m^\ast_4\ell_\tau}\leq(L+1)^d\e^{-\frac{1}{2}m^\ast_4\ell_\tau}<1.
\eeq
Following the proof of Lemma~\ref{indumsa1}, we can apply Hall's Marriage Theorem to obtain an eigensystem $\{(\psi_x,\lambda_x)\}_{x\in\Lambda_L}$ for $H_{\Lambda_L}$.

To finish the proof we need to show that $\{(\psi_x,\lambda_x)\}_{x\in\Lambda_L}$ is an $M^\ast$-localized eigensystem for $\Lambda_L$, where $M^\ast$ is given in \eqref{bigMst}.
We fix $N=1$, $x\in\Lambda_L$, take $y\in\Lambda_L$, and consider several cases:

\begin{enumerate}
\item[(i)] Suppose $\lambda_x\in\sigma_{\mathcal{G}}(\Lambda_L)$. Then $x\in\Lambda_\ell(a_{\lambda_x})$ with $a_{\lambda_x}\in\mathcal{G}$, and
$\lambda_x\in\sigma_{\{a_{\lambda_x}\}}(H_{\Lambda_L})$. In view of \eqref{Lamdecom} we consider two cases:
\begin{enumerate}
\item If $y\in\Lambda_\ell^{\Lambda_L,\frac{\ell}{10}}(a)$ for some $a\in\mathcal{G}$ and $\|y-x\|\geq2\ell$, we must have $\Lambda_\ell(a_{\lambda_x})\cap\Lambda_\ell(a)=\emptyset$,
so it follows from \eqref{sigmaab} that $\lambda_x\not\in  \sigma_{\{a\}}(H_{\Lambda_L})$, and \eqref{pdec01} gives
    \beq\label{psigoodstr}
    |\psi_x|\leq\e^{-m^\ast_3\|y_1-y\|}|\psi_x(y_1)|\sqtx{for some}y_1\in\partial^{\Theta,\ell_{\widetilde{\tau}}}\Lambda_\ell(a).
    \eeq
\item If $y\in\Upsilon_1^{\Lambda_L,\frac{\ell}{10}}$, and $\|y-x\|\geq\ell+\diam\Upsilon_1$, we must have $\Lambda_\ell(a_{\lambda_x})\cap\Upsilon_1=\emptyset$, so it
follows from \eqref{sigmaab} that $\lambda_x\not\in\sigma_{\mathcal{G}_{\Upsilon_1}} (H_{\Lambda_L})$, and clearly $\lambda_x \not\in\sigma_{\Upsilon_1}(H_{\Lambda_L})$ in view of
\eqref{sigmasigma}. Thus Lemma~\ref{lembadall}(ii) gives
    \beq\label{psibadstr}
    |\psi_x(y)|\leq\e^{-m^\ast_5\ell_\tau}|\psi_x(v)|\qtx{for some}v\in\partial^{\Lambda_L,2\ell_\tau}\Upsilon_1.
    \eeq
\end{enumerate}
\item Suppose $\lambda_x\not\in\sigma_{\mathcal{G}}(\Lambda_L)$. Then it follows from \eqref{sigmaclaim} that we must have $\lambda_x\in    \sigma_{\Upsilon_1}(H_{\Lambda_L})$. If
$y\in\Lambda_{\ell}^{\Lambda_L,\frac{\ell}{10}}(a)$ for some $a\in\mathcal{G}$, and $\|y-x\|\geq\ell+\diam\Upsilon_1$, we must have $\Lambda_\ell(a)\cap\Upsilon_1=\emptyset$, and
\eqref{pdec01} gives \eqref{psigoodstr}.
\end{enumerate}

Now we fix $x\in\Lambda_L$, and take $y\in\Lambda_L $ such that $\|y-x\|\geq L_\tau$. Suppose $|\psi_x(y)|>0$ without loss of generality.  We estimate $|\psi_x(y)| $ using either
\eqref{psigoodstr} or \eqref{psibadstr} repeatedly, as appropriate, stopping when we get too close to $x$ so we are not in any case described above. (Note that this must happen since
$|\psi_x(y)|>0$.) We accumulate decay only when using \eqref{psigoodstr}, and just use $\e^{-m^\ast_5\ell_\tau}<1$ when using \eqref{psibadstr}, then we get
\begin{align}
|\psi_x(y)|&\leq\e^{-m^\ast_3\left(\|y-x\|-\diam\Upsilon-2\ell\right)}\leq\e^{-m^\ast_3\left(\|y-x\|-7\ell\right)}
\\\nonumber&\leq\e^{-m^\ast_3\|y-x\|\left(1-7\ell^{1-\gamma_1\tau}\right)}\leq\e^{M\|y-x\|},
\end{align}
where we used \eqref{sumdiam} and took
\begin{align}
M^\ast&=m^\ast_3\left(1-7\ell^{1-\gamma_1\tau}\right)\geq \left(m^\ast\left(1-4\ell^{\frac{\tau-1}{2}}\right)-C_{d,\varepsilon_0}\gamma_1q\tfrac{\log\ell}{\ell_{\widetilde{\tau}}}\right)\left(1-7\ell^{1-\gamma_1\tau}\right)
\\\nonumber&\geq m^\ast\left(1-4\ell^{\frac{\tau-1}{2}}-C_{d,\varepsilon_0}\gamma_1q\ell^{\kappa-\tau}\right)\left(1-7\ell^{1-\gamma_1\tau}\right)
\\\nonumber&\geq m^\ast\left(1-C_{d,\varepsilon_0}\gamma_1q\ell^{-\min\{\frac{1-\tau}{2},\gamma_1\tau-1,\tau-\kappa\}}\right)
\\\nonumber&\geq\tfrac{1}{2}\ell^{-\kappa}\geq\ell^{-\gamma_1\kappa}=L^{-\kappa}
\end{align}
for $\ell$ sufficiently large, where we used \eqref{mst3} and $m^\ast\geq\ell^{-\kappa}$.

We conclude that $\{(\psi_x,\lambda_x)\}_{x\in\Lambda_L}$ is an $M^\ast$-localized eigensystem for $\Lambda_L$, where $M^\ast$ is given in \eqref{bigMst}, so the box $\Lambda_L$
is $M^\ast$-mix localizing for $ H_{\varepsilon,\omega}$.

\end{proof}

\begin{proof}[Proof of Proposition~\ref{propmsa2}]
We assume \eqref{initcon2} and set $L_{k+1}=L_k^{\gamma_1}$ for $k=0,1,\ldots$. If $L_0$ is sufficiently large it follows from Lemma~\ref{indumsa2} by an induction argument that
\beq
\inf_{x\in\R^d}\P\{\Lambda_{L_k}(x)\sqtx{is}m^\ast_k\text{-localizing for}\;H_{\varepsilon,\omega}\}\geq1-L_k^{-p}\sqtx{for}k=0,1,\ldots,
\eeq
where for $k=1,2,\ldots$ we have
\beq
m^\ast_k\geq m^\ast_{k-1}\left(1-C_{d,\varepsilon_0}\gamma_1qL_{k-1}^{-\varrho}\right),\sqtx{with}\varrho=\min\left\{\tfrac{1-\tau}{2},\gamma_1\tau-1,\tau-\kappa\right\}.
\eeq
Thus for all $k=1,2,\ldots$, taking $L_0$ sufficiently large we get
\beq
m^\ast_k\geq m^\ast_0\prod_{j=0}^{k-1}\left(1-C_{d,\varepsilon_0}\gamma_1qL_0^{-\varrho\gamma^j}\right)\geq m^\ast_0\prod_{j=0}^{\infty}\left(1-C_{d,\varepsilon_0}\gamma_1qL_0^{-\varrho\gamma_1^j}\right)\geq\tfrac{m^\ast_0}{2},
\eeq
finishing the proof of Proposition~\ref{propmsa2}.
\end{proof}

\subsection{The third multiscale analysis}

\begin{proposition}\label{propmsa3}
Fix $\varepsilon_0>0$, $Y\geq400^{\frac{1}{1-s}}$, and $\widetilde{P}_0<\left(2(2Y)^{(\lfloor Y^s\rfloor+1)d}\right)^{-\frac{1}{\lfloor Y^s\rfloor}}$. There exists a finite scale $\mathcal{L}(\varepsilon_0,Y)$
with the following property: Suppose for some scale $L_0\geq\mathcal{L}(\varepsilon_0,Y)$ and $0<\varepsilon\leq\varepsilon_0$ we have
\beq\label{initcon3}
\inf_{x\in\R^d}\P\{\Lambda_{L_0}(x)\sqtx{is}s\text{-SEL for}\;H_{\varepsilon,\omega}\}\geq1-\widetilde{P}_0.
\eeq
Then, setting $L_{k+1}=YL_k$ for $k=0,1,\ldots$, there exists $K_0=K_0(Y,L_0,\widetilde{P}_0)\in\N$ such that
\beq\label{res3}
\inf_{x\in\R^d}\P\{\Lambda_{L_k}(x)\sqtx{is}s\text{-SEL for}\;H_{\varepsilon,\omega}\}\geq1-\e^{-L_k^\zeta}\sqtx{for}k\geq K_0.
\eeq
\end{proposition}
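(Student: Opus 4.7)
The plan is to mirror the bootstrap strategy used for Proposition~\ref{propmsa1}: first establish a single-scale induction step lemma (the analog of Lemma~\ref{indumsa1} with $\sharp$ being $s$-SEL), and then iterate it to obtain the stated subexponential probability decay. The induction step should read: for $\ell$ sufficiently large, if $\inf_x\P\{\Lambda_\ell(x)\text{ is }s\text{-SEL for }H_{\eps,\omega}\}\geq1-\widetilde P$, then at $L=Y\ell$
\beq
\inf_x\P\{\Lambda_L(x)\text{ is }s\text{-SEL for }H_{\eps,\omega}\}\geq1-\bigl((2Y)^{(N+1)d}\widetilde P^{N+1}+\tfrac12\e^{-L^\zeta}\bigr),
\eeq
with $N:=\lfloor Y^s\rfloor$. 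The threshold $\widetilde P_0<(2(2Y)^{(N+1)d})^{-1/N}$ is precisely what is needed to start the iteration, and the subexponential target $\e^{-L^\zeta}$ (rather than the power $L^{-p}$ of Proposition~\ref{propmsa1}) forces us to tolerate up to $N$ disjoint bad boxes per scale step rather than just one.

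The proof of the induction step would parallel Lemma~\ref{indumsa1} in every essential respect. Cover $\Lambda_L$ by the suitable $\ell$-cover, introduce the event $\mathcal{B}_N$ that at most $N$ disjoint boxes in the cover are not $s$-SEL (giving $\P\{\mathcal{B}_N^c\}\leq(2Y)^{(N+1)d}\widetilde P^{N+1}$ by independence), cluster the bad boxes via the graph $\mathbb{G}_2$ into components $\{\Phi_r\}_{r=1}^{R\le N}$, and surround each cluster with $s$-SEL boxes to build candidate $s$-SEL-buffered subsets $\Upsilon_r$ as in \eqref{conups}. The event $\mathcal{S}_N$ that $\Lambda_L$ together with all candidate subsets $\Upsilon(\Phi)$, $\Phi\in\mathcal{F}_N$, are $L$-level spacing has complement probability at most $\tfrac12\e^{-L^\zeta}$ for large $L$, via Lemma~\ref{lemlsp} combined with \eqref{cFN}, using $\beta>\zeta$ to absorb polynomial factors. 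On $\mathcal{E}_N=\mathcal{B}_N\cap\mathcal{S}_N$, the $s$-SEL versions of Lemmas~\ref{ideigsysall} and \ref{lembuffall} give $\sigma(H_{\Lambda_L})=\sigma_{\mathcal{G}}(H_{\Lambda_L})\cup\sigma_{\mathcal{B}}(H_{\Lambda_L})$ (the no-gap contradiction now using the subexponential bound \eqref{psidec5} in place of the polynomial one), and Hall's Marriage Theorem is applied exactly as in Lemma~\ref{indumsa1} to label the eigensystem. The decisive decay estimate uses that for $\|y-x\|\geq L'=\lfloor L/20\rfloor$, the number $n$ of good-box traversals satisfies $n(\ell+1)+\sum_r\diam\Upsilon_r+2\ell\geq L'$, whence by \eqref{sumdiam}, $n\geq Y/40-5N-O(1)\geq Y^s(Y^{1-s}/40-5)-O(1)\geq Y^s$ for $Y\geq 400^{1/(1-s)}$ and $N\leq Y^s$; each good-box traversal contributes a factor $\e^{-c_2\ell^s}$ from the $s$-SEL analog of \eqref{decayes1}, and each buffered-subset traversal contributes $\e^{-c_4\ell^s}<1$, yielding $|\psi_x(y)|\leq\e^{-c_2n\ell^s}\leq\e^{-L^s}$ as required.

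To iterate, set $\widetilde P_k:=\sup_x\P\{\Lambda_{L_k}(x)\text{ is not }s\text{-SEL}\}$ and $A:=2(2Y)^{(N+1)d}$. So long as $\widetilde P_{k+1}>\e^{-L_{k+1}^\zeta}$, the induction lemma yields $\widetilde P_{k+1}<A\widetilde P_k^{N+1}$, which in the variable $q_k:=A^{1/N}\widetilde P_k$ becomes $q_{k+1}<q_k^{N+1}$; the hypothesis $\widetilde P_0<A^{-1/N}$ ensures $q_0<1$, so $q_k<q_0^{(N+1)^k}$ decays super-exponentially and hence $\widetilde P_{K_0}\leq\e^{-L_{K_0}^\zeta}$ for some finite $K_0$ (this requires $(N+1)/Y^\zeta>1$, which holds for large $Y$ because $s>\zeta$). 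The regime is then self-sustaining for $L_0$ large: assuming $\widetilde P_k\leq\e^{-L_k^\zeta}$, one checks $\tfrac{A}{2}\e^{-(N+1)Y^{-\zeta}L_{k+1}^\zeta}+\tfrac12\e^{-L_{k+1}^\zeta}\leq\e^{-L_{k+1}^\zeta}$ using $(N+1)Y^{-\zeta}>1$. The hard part is the delicate interplay of parameter thresholds: $Y\geq 400^{1/(1-s)}$ is calibrated so that the per-step subexponential decay survives the up to $N\approx Y^s$ buffered regions of total diameter $\approx 5\ell Y^s$, while the choice $N=\lfloor Y^s\rfloor$ simultaneously produces both a combinatorially sharp probability bound and a self-sustaining subexponential iteration with rate $\zeta$, where the latter crucially exploits $\zeta<s$. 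Once these thresholds are in place, the remaining work---transcribing the Hall's Marriage construction and the eigenfunction decay bookkeeping with subexponential rather than polynomial bounds---is mechanical.
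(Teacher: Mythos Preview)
Your proposal is correct and follows essentially the same approach as the paper: you state the same single-step induction lemma (the $s$-SEL analog of Lemma~\ref{indumsa1} with $N=\lfloor Y^s\rfloor$ bad boxes tolerated), prove it by the identical buffered-subset/Hall's Marriage machinery, and iterate via the same super-exponential recursion $q_{k+1}<q_k^{N+1}$ in the variable $q_k=A^{1/N}\widetilde P_k$ to establish both finiteness of $K_0$ and self-sustenance of the regime $\widetilde P_k\le\e^{-L_k^\zeta}$. The only cosmetic difference is that the paper records the sharper bound $n(Y)\geq 2Y^s$ (rather than your $n\geq Y^s$) for the number of good-box traversals, which is what is actually needed to absorb the constant $c_2<1$ in $\e^{-c_2n\ell^s}\leq\e^{-L^s}$; your intermediate expression $Y^s(Y^{1-s}/40-5)-O(1)\geq 5Y^s-O(1)$ already delivers this.
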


Proposition~\ref{propmsa3} follows from the following induction step for the multiscale analysis.

\begin{lemma}\label{indumsa3}
Fix $\varepsilon_0>0$, $Y\geq400^{\tfrac{1}{1-s}}$, and $0\le P\leq1$. Suppose for some scale $\ell$ and $0<\varepsilon\leq\varepsilon_0$ we have
\beq\label{hypmsaind3}
\inf_{x\in\R^d}\P\{\Lambda_\ell(x)\sqtx{is}s\text{-SEL for}\;H_{\varepsilon,\omega}\}\geq1-P.
\eeq
Then, if $\ell$ is sufficiently large, for $L=Y\ell$ we have
\beq
\inf_{x\in\R^d}\P\{\Lambda_L(x)\sqtx{is}s\text{-SEL for}\;H_{\varepsilon,\omega}\}\geq1-\left((2Y)^{(\lfloor Y^s\rfloor+1)d}P^{\lfloor Y^s\rfloor+1}+\tfrac{1}{2}\e^{-L^\zeta}\right).
\eeq
\end{lemma}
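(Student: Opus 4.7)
The plan is to mimic the proof of Lemma~\ref{indumsa1} (as reprised in Lemma~\ref{indumsa2}), but with $\sharp$ standing for $s$-SEL throughout and, crucially, tolerating up to $N=\lfloor Y^s\rfloor$ bad disjoint boxes in the suitable $\ell$-cover $\mathcal{C}_{L,\ell}(x_0)$ of $\Lambda_L$ instead of just one. Accordingly, Definition~\ref{defbufall} and Lemmas~\ref{ideigsysall}--\ref{lembadall} are invoked in their $s$-SEL branches, and the appropriate spacing notion is $L$-level spacing rather than $L$-polynomially level spacing. Letting $\mathcal{B}_N$ be the event that at most $N$ disjoint boxes of $\mathcal{C}_{L,\ell}$ fail to be $s$-SEL for $H_{\varepsilon,\omega}$, independence together with \eqref{hypmsaind3} and \eqref{covernum} give $\P\{\mathcal{B}_N^c\}\leq(2Y)^{(N+1)d}P^{N+1}$. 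Letting $\mathcal{S}_N$ be the event that $\Lambda_L$ and every candidate buffered subset $\Upsilon(\Phi)$, $\Phi\in\mathcal{F}_N$, is $L$-level spacing for $H_{\varepsilon,\omega}$, the level-spacing part of Lemma~\ref{lemlsp} combined with \eqref{cFN} and the fact that $N=\lfloor Y^s\rfloor$ is a constant independent of $\ell$ yield
\beq
\P\{\mathcal{S}_N^c\}\leq Y_{\varepsilon_0}\bpa{1+(L+1)^d N!(d4^d)^{N-1}}(L+1)^{2d}\e^{-(2\alpha-1)L^\beta}\leq\tfrac{1}{2}\e^{-L^\zeta}
\eeq
for $\ell$ large, since $\zeta<\beta$. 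Setting $\mathcal{E}_N=\mathcal{B}_N\cap\mathcal{S}_N$ delivers the asserted probability.

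On $\omega\in\mathcal{E}_N$ I would execute the deterministic construction from Lemma~\ref{indumsa1}: the graph-theoretic clustering \eqref{conups} produces $s$-SEL-buffered subsets $\Upsilon_1,\ldots,\Upsilon_R$ of $\Lambda_L$ with $\sum_r\diam\Upsilon_r\leq 5\ell N\leq 5Y^s\ell$ by \eqref{sumdiam}. The decomposition $\sigma(H_{\Lambda_L})=\sigma_{\mathcal{G}}(H_{\Lambda_L})\cup\sigma_{\mathcal{B}}(H_{\Lambda_L})$ holds exactly as before, since any putative rogue $\psi_\lambda$ would satisfy $|\psi_\lambda(y)|\leq\e^{-c_2\ell^s}$ at every $y\in\Lambda_L$ via \eqref{Lamdecom} together with the SEL branches of Lemma~\ref{ideigsysall}(ii)(c) and Lemma~\ref{lembadall}(ii), contradicting $\norm{\psi_\lambda}=1$. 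I would then define $\mathcal{N}(x)$ as in \eqref{defN} and verify the SEL analog of \eqref{psidec7}, namely $|\psi_\lambda(x)|\leq\e^{-\frac{1}{2}c_1\ell^s}$ for $\lambda\notin\mathcal{N}(x)$, absorbing the harmless $\e^{L^\beta}$ factor from \eqref{diffunsub} and \eqref{diffun2sub} using $\beta<s$. Hall's Marriage Theorem then yields the labeled eigensystem $\{(\psi_x,\lambda_x)\}_{x\in\Lambda_L}$.

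Finally I would iterate the decay, with Lemma~\ref{ideigsysall}(ii)(c) contributing a factor $\e^{-c_2\ell^s}$ at each transit through a good $\Lambda_\ell(a)$ and Lemma~\ref{lembadall}(ii) contributing a factor $\leq 1$ at each transit across some $\Upsilon_r$. For $\norm{y-x}\geq L'$, the chain estimate from Lemma~\ref{indumsa1} gives $n(Y)(\ell+1)+\sum_r\diam\Upsilon_r+2\ell\geq L'$, whence
\beq
n(Y)\geq\tfrac{1}{\ell+1}\bpa{\tfrac{L}{20}-5Y^s\ell-2\ell}\geq\tfrac{Y}{20}-5Y^s-7\geq 10Y^s
\eeq
for $Y\geq 400^{1/(1-s)}$ and $\ell$ large, so $|\psi_x(y)|\leq\e^{-c_2 n(Y)\ell^s}\leq\e^{-10c_2 L^s}\leq\e^{-L^s}$ by \eqref{sc2}, showing $\Lambda_L$ is $s$-SEL on $\mathcal{E}_N$. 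I expect the main obstacle to be precisely this last budget inequality: the buffers may swallow up to $5Y^s\ell$ of the available distance $L'\sim L/20$, so the calibration $Y\geq 400^{1/(1-s)}$ (equivalent to $Y\geq 400Y^s$, hence $L'/\ell\geq 20Y^s$) is exactly what guarantees room for the $\sim 10Y^s$ good-box steps needed to recover the subexponent $L^s=Y^s\ell^s$. A subsidiary point is ensuring $\P\{\mathcal{S}_N^c\}\leq\tfrac{1}{2}\e^{-L^\zeta}$, which relies on $N$ being a constant so that $N!(d4^d)^{N-1}$ is absorbed by $\e^{L^\beta-L^\zeta}$; the remaining verifications (including $c_1,c_2\to 1$) are routine bookkeeping.
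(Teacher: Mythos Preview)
Your proposal is correct and follows essentially the same route as the paper's proof: the same choice $N=\lfloor Y^s\rfloor$, the same two events $\mathcal{B}_N$ and $\mathcal{S}_N$ (with $L$-level spacing replacing $L$-polynomially level spacing), the same buffered-subset construction, spectrum decomposition, Hall's Marriage argument, and decay iteration. The only discrepancies are cosmetic: the paper uses the constants $c_3,c_4$ (rather than $c_1,c_2$) in the places coming from \eqref{diffun2sub} and \eqref{psidec6sub}, and in the final budget it records the more conservative bound $n(Y)\geq\tfrac{\ell}{\ell+1}\bigl(\tfrac{Y}{40}-5Y^s-2\bigr)\geq 2Y^s$, whereas your $\geq 10Y^s$ is slightly optimistic at the borderline $Y=400^{1/(1-s)}$ with small $s$ (one actually gets $\gtrsim 15Y^s-7$); either way, anything exceeding $Y^s$ suffices once $c_2\to 1$.
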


\begin{proof}
We follow the proof of Lemma~\ref{indumsa1}. For $N\in\N$, let $\mathcal{B}_N$ denote the event that there exist at most $N$ disjoint boxes in $\mathcal{C}_{L,\ell}$ that are
not $s$-SEL for $H_{\varepsilon,\omega}$. Using \eqref{hypmsaind3}, \eqref{covernum} and the fact that events on disjoint boxes are independent, if $N=\lfloor Y^s\rfloor$ we have
\beq\label{probbsub}
\P\{\mathcal{B}^c\}\leq\left(\tfrac{2L}{\ell}\right)^{(N+1)d}P^{N+1}=(2Y)^{\left(\lfloor Y^s\rfloor+1\right)d}P^{\lfloor Y^s\rfloor+1}.
\eeq

We now fix $\omega\in\mathcal{B}_N$, and proceed as in the proof of Lemma~\ref{indumsa1} with $\sharp$ being $s$-SEL.
Then we have $\Upsilon_r$, $r=1,2,\ldots,R$ such that each $\Upsilon_r$ satisfies all the requirements to be an $s$-SEL-buffered subset of $\Lambda_L$ with $\mathcal{G}_{\Upsilon_r}
=\partial_{\mathrm{ex}}^{\mathbb{G}_1}\widetilde{\Phi}_r$, except we do not know if $\Upsilon_r$ is $L$-level spacing for $H_{\varepsilon,\omega}$.

It follows from Lemma~\ref{lemlsp} that for any $\Theta\subset\Lambda_L$ we have
\begin{align}\label{problssub}
\P\{\Theta\sqtx{is}L\text{-level spacing for}\;H_{\varepsilon,\omega}\}\geq1-Y_{\varepsilon_0}\e^{-(2\alpha-1)L^\beta}(L+1)^{2d}.
\end{align}
Given a $\mathbb{G}_2$-connected subset $\Phi$ of $\Xi_{L,\ell}$, let $\Upsilon(\Phi)\subset\Lambda_L$ be constructed from $\Phi$ as in \eqref{conups} with $\sharp$ being
$s$-SEL. Let $\mathcal{S}_N$ denote the event that the box $\Lambda_L$ and the subsets the subsets $\{\Upsilon(\Phi)\}_{\Phi\in\mathcal{F}_N}$ are all $L$-level spacing for
$H_{\varepsilon,\omega}$. Using \eqref{problssub} and \eqref{cFN}, if $N=\lfloor Y^s\rfloor$ we have
\beq\label{probssub}
\P\{\mathcal{S}_N^c\}\leq Y_{\varepsilon_0}\left(1+(L+1)^dN!(d4^d)^{N-1}\right)(L+1)^{2d}\e^{-(2\alpha-1)L^\beta}<\tfrac{1}{2}\e^{-L^\zeta}
\eeq
for sufficiently large $L$, since $\zeta<\beta$.

Let $\mathcal{E_N}=\mathcal{B_N}\cap\mathcal{S_N}$. Combining \eqref{probbsub} and \eqref{probssub}, we conclude that
\beq
\P\{\mathcal{E}_N\}>1-\left((2Y)^{\left(\lfloor Y^s\rfloor+1\right)d}P^{\lfloor Y^s\rfloor+1}+\tfrac{1}{2}\e^{-L^\zeta}\right).
\eeq
To finish the proof we need to show that for all $\omega\in\mathcal{E}_N$ the box $\Lambda_L$ is $s$-SEL for $H_{\varepsilon,\omega}$.

We fix $\omega\in\mathcal{E}_N$. Then we have \eqref{imploc2}, $\Lambda_L$ is level spacing for $H_{\varepsilon,\omega}$, and the subsets $\{\Upsilon_r\}_{r=1}^R$ constructed
in \eqref{conups} are $s$-SEL-buffered subsets of $\Lambda_L$ for $H_{\varepsilon,\omega}$. We proceed as in the proof of Lemma~\ref{indumsa1}. To claim \eqref{sigmaclaim},
we assume $\lambda\in\sigma_{\mathcal{G}}\setminus(\sigma_{\mathcal{G}}(H_{\Lambda_L})\cup\sigma_{\mathcal{B}}(H_{\Lambda_L}))$. Since $\Lambda_L$ is level spacing for $H$,
Lemma~\ref{ideigsysall}(ii)(c) gives
\beq
|\psi_\lambda(y)|\leq\e^{-c_2\ell^s}\qtx{for all}y\in\bigcup_{a\in\mathcal{G}}\Lambda_\ell^{\Lambda_L,2\ell'}(a),
\eeq
and Lemma~\ref{lembadall}(ii) gives
\beq
|\psi_\lambda(y)|\leq\e^{-c_4\ell^s}\qtx{for all}y\in\bigcup_{r=1}^R\Upsilon_r^{\Lambda_L,2\ell'}.
\eeq
Using \eqref{Lamdecom}, we conclude that (note $c_4\leq c_2$)
\beq
1=\|\psi_\lambda(y)\|\leq\e^{-c_4\ell^s}(L+1)^{\frac{d}{2}}<1,
\eeq
a contradiction. This establishes the claim.

To index the eigenvalues and eigenvectors of $H_{\Lambda_L}$ by sites in $\Lambda_L$, we define $\mathcal{N}(x)$ as in \eqref{defN} proceed as in the proof of
Lemma~\ref{indumsa1}. We have:
\begin{itemize}

\item If $x\in\Lambda_L$ and $\lambda\in\sigma_{\mathcal{G}}(H_{\Lambda_L})\setminus\mathcal{N}_0(x)$, we have $\lambda=\widetilde{\lambda}_{x_\lambda}^{(a_\lambda)}$
with $\|x_\lambda-x\|\geq{\ell'}$, so, using \eqref{defsloc} and \eqref{diffunsub},
\beq
|\psi_\lambda(x)|\leq\left|\varphi_{x_\lambda}^{(a_\lambda)}(x)\right|+\left\|\varphi_{x_\lambda}^{(a_\lambda)}-\psi_\lambda\right\|\leq \e^{-\ell^s}
+2\e^{-c_1\ell^s}\e^{L^\beta}\leq3\e^{-c_1\ell^s}\e^{L^\beta}.
\eeq

\item If $x\in\Lambda_L\setminus\widehat{\Upsilon'}_r$ and $\lambda\in\sigma_{\Upsilon_r}(H_{\Lambda_L})$, then $\lambda=\widetilde{\nu}^{(r)}$ for some
$\nu^{(r)}\in\sigma_{\mathcal{B}}(H_{\Upsilon_r})$, and, using \eqref{psidec5} and \eqref{diffun2sub}, (Note $\phi_{\nu^{(r)}}(x)=0$ if $x\not\in\Upsilon_r$.)
\beq
|\psi_\lambda(x)|\leq\left|\phi_\nu(x)\right|+\left\|\phi_\nu(x)-\psi_\lambda\right\|\leq \e^{-c_2\ell^s}+2\e^{-c_3\ell^s}\e^{L^\beta}\leq3\e^{-c_3\ell^s}\e^{L^\beta}.
\eeq
\end{itemize}
Therefore for all $x\in\Lambda_L$ and $\lambda\in\sigma(H_{\Lambda_L})\setminus\mathcal{N}(x)$ we have
\beq\label{psidec7sub}
|\psi_\lambda(x)|\leq3\e^{-c_3\ell^s}\e^{L^\beta}\leq\e^{-\frac{1}{2}c_3\ell^s}.
\eeq
If $\lambda\not\in\mathcal{N}(\Theta)$, for all $x\in\Theta$ we have \eqref{psidec7sub}, thus
\beq
\|(1-Q_\Theta)\chi_\Theta\|\leq|\Lambda_L|^{\frac{1}{2}}|\Theta|^{\frac{1}{2}}\e^{-\frac{1}{2}c_3\ell^s}\leq(L+1)^d\e^{-\frac{1}{2}c_3\ell^s}<1.
\eeq
Following the proof of Lemma~\ref{indumsa1}, we can apply Hall's Marriage Theorem to obtain an eigensystem $\{(\psi_x,\lambda_x)\}_{x\in\Lambda_L}$ for $H_{\Lambda_L}$.

To finish the proof we need to show that $\{(\psi_x,\lambda_x)\}_{x\in\Lambda_L}$ is an $s$-subexponentially localized eigensystem for $\Lambda_L$.
We fix $N=\lfloor Y^s\rfloor$, $x\in\Lambda_L$, take $y\in\Lambda_L$, and consider several cases:

\begin{enumerate}
\item Suppose $\lambda_x\in\sigma_{\mathcal{G}}(\Lambda_L)$. Then $x\in\Lambda_\ell(a_{\lambda_x})$ with $a_{\lambda_x}\in\mathcal{G}$, and
$\lambda_x\in\sigma_{\{a_{\lambda_x}\}}(H_{\Lambda_L})$. In view of \eqref{Lamdecom} we consider two cases:
\begin{enumerate}
\item If $y\in\Lambda_\ell^{\Lambda_L,\frac{\ell}{10}}(a)$ for some $a\in\mathcal{G}$ and $\|y-x\|\geq2\ell$, we must have $\Lambda_\ell(a_{\lambda_x})\cap\Lambda_\ell(a)=\emptyset$,
so it follows from \eqref{sigmaab} that $\lambda_x\not\in  \sigma_{\{a\}}(H_{\Lambda_L})$, and \eqref{psidec1} gives
    \beq\label{psigoodsub}
    |\psi_x|\leq\e^{-c_2\ell^s}|\psi_x(y_1)|\sqtx{for some}y_1\in\partial^{\Theta,2\ell'}\Lambda_\ell(a).
    \eeq
\item If $y\in\Upsilon_r^{\Lambda_L,\frac{\ell}{10}}$ for some $r\in\{1,2,\ldots,R\}$, and $\|y-x\|\geq\ell+\diam\Upsilon_r$, we must have $\Lambda_\ell(a_{\lambda_x})\cap\Upsilon_r=\emptyset$,
so it follows from \eqref{sigmaab} that $\lambda_x\not\in\sigma_{\mathcal{G}_{\Upsilon_r}}(H_{\Lambda_L})$, and clearly $\lambda_x\not\in\sigma_{\Upsilon_r}(H_{\Lambda_L})$
in view of \eqref{sigmasigma}. Thus Lemma~\ref{lembadall}(ii) gives
    \beq\label{psibadsub}
    |\psi_x(y)|\leq\e^{-c_4\ell^s}|\psi_x(v)|\qtx{for some}v\in\partial^{\Lambda_L,2\ell'}\Upsilon_r.
    \eeq
\end{enumerate}
\item Suppose $\lambda_x\not\in\sigma_{\mathcal{G}}(\Lambda_L)$. Then it follows from \eqref{sigmaclaim} that we must have $\lambda_x\in    \sigma_{\Upsilon_{\widetilde{r}}}(H_{\Lambda_L})$
for some $\widetilde{r}\in\{1,2,\ldots,R\}$. In view of \eqref{Lamdecom} we consider two cases:
    \begin{enumerate}
\item If $y\in\Lambda_{\ell}^{\Lambda_L,\frac{\ell}{10}}(a)$ for some $a\in\mathcal{G}$, and $\|y-x\|\geq\ell+\diam\Upsilon_{\widetilde{r}}$, we must have
$\Lambda_\ell(a)\cap\Upsilon_{\widetilde{r}}=\emptyset$, and \eqref{psidec1} gives \eqref{psigoodsub}.
\item If $y\in\Upsilon_r^{\Lambda_L,\frac{\ell}{10}}$ for some $r\in\{1,2,\ldots,R\}$, and $\|y-x\|\geq\diam\Upsilon_{\widetilde{r}}+\diam\Upsilon_r$, we must have $r\neq\widetilde{r}$.
Thus Lemma~\ref{lembadall}(ii) gives \eqref{psibadsub}.
\end{enumerate}
\end{enumerate}

Now we fix $x\in\Lambda_L$, and take $y\in\Lambda_L $ such that $\|y-x\|\geq L'$. Suppose $|\psi_x(y)|>0$ without loss of generality.  We estimate $|\psi_x(y)| $ using either
\eqref{psigoodsub} or \eqref{psibadsub} repeatedly, as appropriate, stopping when we get too close to $x$ so we are not in any case described above. (Note that this must happen since
$|\psi_x(y)|>0$.) We accumulate decay only when we use \eqref{psigoodsub}, and just use $\e^{-c_4\ell^s}<1$ when using \eqref{psibadsub}, recalling $L=Y\ell$, then we get
\beq\label{repeatdecsub}
|\psi_x(y)|\leq\left(\e^{-c_2\ell^s}\right)^{n(Y)},
\eeq
where $n(Y)$ is the number of times we used \eqref{psigoodsub}. We have
\beq
n(Y)(\ell+1)+\sum_{r=1}^R\diam\Upsilon_r+2\ell\geq L'.
\eeq
Thus, using \eqref{sumdiam}, we have
\beq
n(Y)\geq\tfrac{1}{\ell+1}(L'-5\ell\lfloor Y^s\rfloor-2\ell)\geq\tfrac{\ell}{\ell+1}\left(\tfrac{Y}{40}-5Y^s-2\right)\geq2Y^s.
\eeq
for sufficiently large $\ell$ since $Y\geq400^{\frac{1}{1-s}}$. It follows from \eqref{repeatdecsub},
\beq
|\psi_x(y)|\leq\left(\e^{-c_2\ell^s}\right)^{2Y^s}\leq\e^{-L^s},
\eeq
for sufficiently large $\ell$.

We conclude that $\{(\psi_x,\lambda_x)\}_{x\in\Lambda_L}$ is an $s$-subexponentially localized eigensystem for $\Lambda_L$, so the box $\Lambda_L$ is $s$-SEL for $H_{\varepsilon,\omega}$.

\end{proof}

\begin{proof}[Proof of Proposition~\ref{propmsa3}]
We assume \eqref{initcon3} and set $L_{k+1}=YL_k$ for $k=0,1,\ldots$. We set
\beq
\widetilde{P}_k=\sup_{x\in\R^d}\P\{\Lambda_{L_k}(x)\sqtx{is not}s\text{-SEL for}\;H_{\varepsilon,\omega}\}\sqtx{for}k=1,2,\ldots.
\eeq
Then by Lemma~\ref{indumsa3}, we have
\beq\label{calpksub}
\widetilde{P}_{k+1}\leq(2Y)^{(\lfloor Y^s\rfloor+1)d}\widetilde{P}_k^{\lfloor Y^s\rfloor+1}+\tfrac{1}{2}\e^{-L_{k+1}^\zeta}\qtx{for}k=0,1,\ldots
\eeq
If $\widetilde{P}_k\leq\e^{-L_k^\zeta}$ for some $k\geq0$, we have
\begin{align}
\widetilde{P}_{k+1}&\leq(2Y)^{(\lfloor Y^s\rfloor+1)d}\left(\e^{-L_k^\zeta}\right)^{\lfloor Y^s\rfloor+1}+\tfrac{1}{2}\e^{-L_{k+1}^\zeta}
\\\nonumber&\leq(2Y)^{(\lfloor Y^s\rfloor+1)d}\e^{-\frac{\lfloor Y^s\rfloor+1}{Y^\zeta}L_{k+1}^\zeta}+\tfrac{1}{2}\e^{-L_{k+1}^\zeta}\leq\e^{-L_{k+1}^\zeta}
\end{align}
for $L_0$ sufficiently large, since $\zeta<s$. Therefore to finish the proof, we need to show that
\beq
K_0=\inf\{k\in\N;\widetilde{P}_k\leq\e^{-L_k^{\zeta}}\}<\infty.
\eeq

It follows from \eqref{calpksub} that for any $1\leq k<K_0$,
\beq
\widetilde{P}_k\leq(2Y)^{(\lfloor Y^s\rfloor+1)d}\widetilde{P}_{k-1}^{\lfloor Y^s\rfloor+1}+\tfrac{1}{2}\e^{-L_k+^\zeta}
<(2Y)^{(\lfloor Y^s\rfloor+1)d}\widetilde{P}_{k-1}^{\lfloor Y^s\rfloor+1}+\tfrac{1}{2}\widetilde{P}_k,
\eeq
so
\beq
\left(2(2Y)^{(\lfloor Y^s\rfloor+1)d}\right)^{\frac{1}{\lfloor Y^s\rfloor}}\widetilde{P}_k
<\left(\left(2(2Y)^{(N+1)d}\right)^{\frac{1}{\lfloor Y^s\rfloor}}\widetilde{P}_{k-1}\right)^{\lfloor Y^s\rfloor+1}.
\eeq
For $1\leq k<K_0$, since $\left(2(2Y)^{(\lfloor Y^s\rfloor+1)d}\right)^{\frac{1}{\lfloor Y^s\rfloor}}\widetilde{P}_0<1$, we have
\begin{align}\label{ineqpksub}
&\left(2(2Y)^{(\lfloor Y^s\rfloor+1)d}\right)^{\frac{1}{\lfloor Y^s\rfloor}}\e^{-Y^{k\zeta}L_0^\zeta}
=\left(2(2Y)^{(\lfloor Y^s\rfloor+1)d}\right)^{\frac{1}{\lfloor Y^s\rfloor}}\e^{-L_k^\zeta}
\\\nonumber&\quad<\left(2(2Y)^{(\lfloor Y^s\rfloor+1)d}\right)^{\frac{1}{\lfloor Y^s\rfloor}}\widetilde{P}_k
<\left(\left(2(2Y)^{(\lfloor Y^s\rfloor+1)d}\right)^{\frac{1}{\lfloor Y^s\rfloor}}\widetilde{P}_0\right)^{(\lfloor Y^s\rfloor+1)^k}
\\\nonumber&\quad\quad\leq\left(\left(2(2Y)^{(\lfloor Y^s\rfloor+1)d}\right)^{\frac{1}{\lfloor Y^s\rfloor}}\widetilde{P}_0\right)^{Y^{ks}}.
\end{align}
Since $\zeta<s$, $\left(2(2Y)^{(\lfloor Y^s\rfloor+1)d}\right)^{\frac{1}{\lfloor Y^s\rfloor}}\widetilde{P}_0<1$, \eqref{ineqpksub}
cannot be satisfied for large $k$. We conclude that $K_0<\infty$.
\end{proof}

\subsection{The second intermediate step}

\begin{proposition}\label{indumsa3two}
Fix $\varepsilon_0>0$. Suppose for some scale $\ell$ and $0<\varepsilon\leq\varepsilon_0$ we have
\beq\label{hypmsaind3two}
\inf_{x\in\R^d}\P\{\Lambda_\ell(x)\sqtx{is}s\text{-SEL for}\;H_{\varepsilon,\omega}\}\geq1-\e^{-\ell^\zeta}.
\eeq
Then, if $\ell$ is sufficiently large, for $L=\ell^\gamma$ we have
\beq\label{restwo}
\inf_{x\in\R^d}\P\{\Lambda_L(x)\sqtx{is}m_0\text{-localizing for}\;H_{\varepsilon,\omega}\}\geq1-\e^{-L^\zeta},
\eeq
where
\beq\label{m000}
m_0\geq\tfrac{1}{8}L^{-\left(1-\tau+\frac{1-s}{\gamma}\right)}.
\eeq
\end{proposition}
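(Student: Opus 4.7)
The plan is to follow the proof of Lemma~\ref{indumsa3} (the $s$-SEL scale-up step, where $L=Y\ell$) with two modifications: the scaling $L=\ell^{\gamma}$ in place of $L=Y\ell$, and the conclusion of $m_0$-localization in place of $s$-SEL. The overall architecture is unchanged: (i) probability control of the bad boxes in a suitable $\ell$-cover together with level-spacing of the ambient box and of all relevant buffered subsets; (ii) construction of $s$-SEL-buffered subsets $\{\Upsilon_r\}_{r=1}^R$ around the bad boxes followed by site-labelling of the eigensystem via Hall's Marriage Theorem; (iii) iteration of the decay bounds \eqref{psigoodsub}--\eqref{psibadsub} to produce exponential decay at scale $L_\tau$.

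The key departure in step (i) is that the number of tolerated bad boxes $N$ must grow with $\ell$, unlike the constant $N=\lfloor Y^s\rfloor$ of Lemma~\ref{indumsa3}. Setting $\mathcal{B}_N$ to be the event that at most $N$ disjoint boxes of $\mathcal{C}_{L,\ell}$ are not $s$-SEL, and choosing $N=\lceil 2\ell^{(\gamma-1)\zeta}\rceil$, independence together with \eqref{covernum} give
$$\P\{\mathcal{B}_N^c\}\leq(2L/\ell)^{(N+1)d}\e^{-(N+1)\ell^\zeta}\leq\tfrac{1}{2}\e^{-L^\zeta},$$
because $(N+1)\ell^\zeta\geq 2L^\zeta=2\ell^{\gamma\zeta}$ absorbs the log-polynomial prefactor. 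For the level-spacing event $\mathcal{S}_N$ attached to $\Lambda_L$ and to all $\Upsilon(\Phi)$ with $\Phi\in\mathcal{F}_N$, Lemma~\ref{lemlsp} and \eqref{cFN} give $\P\{\mathcal{S}_N^c\}\leq\tfrac{1}{2}\e^{-L^\zeta}$, since $\zeta<\beta$ makes $\e^{-(2\alpha-1)L^\beta}$ dominate $\e^{-L^\zeta}$ while the combinatorial factor $N!(d4^d)^{N-1}$ is only sub-exponential in $\ell$. Hence $\P\{\mathcal{E}_N\}\geq 1-\e^{-L^\zeta}$ for $\mathcal{E}_N=\mathcal{B}_N\cap\mathcal{S}_N$.

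On $\mathcal{E}_N$, step (ii) runs verbatim as in the proof of Lemma~\ref{indumsa3}: the buffered subsets $\{\Upsilon_r\}_{r=1}^R$ constructed in \eqref{conups} satisfy $\sum_r\diam\Upsilon_r\leq 5\ell N$ by \eqref{sumdiam}, the spectral identity \eqref{sigmaclaim} follows from Lemmas~\ref{ideigsysall} and~\ref{lembadall}, and the neighborhood $\mathcal{N}(x)$ of \eqref{defN}, together with the uniform bound \eqref{psidec7sub} and Hall's Marriage Theorem (all insensitive to the value of $L/\ell$), produces a site-labelled eigensystem $\{(\psi_x,\lambda_x)\}_{x\in\Lambda_L}$ for $H_{\Lambda_L}$. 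For step (iii), iterating \eqref{psigoodsub}--\eqref{psibadsub} from $y$ back toward $x$ and accumulating decay only on good-box applications gives
$$|\psi_x(y)|\leq(\e^{-c_2\ell^s})^n\qtx{with}n\geq\frac{\|y-x\|-5\ell N-2\ell}{\ell+1}.$$
For $\|y-x\|\geq L_\tau=\ell^{\gamma\tau}$, the inequality $1+(\gamma-1)\zeta<\gamma\tau$, which follows from $\tau>\tfrac{(\gamma-1)\beta+1}{\gamma}$ and $\zeta<\beta$ in \eqref{zetabetagammatau}, forces $5\ell N+2\ell=o(L_\tau)$, so
$$|\psi_x(y)|\leq\exp\!\left(-\tfrac{c_2\ell^s}{2(\ell+1)}\|y-x\|\right)\leq\e^{-m_0\|y-x\|}$$
with $m_0\geq\tfrac{1}{8}\ell^{s-1}\geq\tfrac{1}{8}L^{-(1-\tau+(1-s)/\gamma)}$, completing the verification.

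The main obstacle is precisely that $N$ must grow polynomially in $\ell$ rather than remain a fixed constant, forced by the mismatch between the input probability $\e^{-\ell^\zeta}$ and the target $\e^{-L^\zeta}=\e^{-\ell^{\gamma\zeta}}$. All three compatibility conditions, namely the probability bound, the level-spacing bound, and the geometric requirement $5\ell N\ll L_\tau$, reduce to the single inequality $(\gamma-1)\zeta<\min\{\gamma\beta,\gamma\tau-1\}$, and checking that this is exactly what the parameter constraints of \eqref{zetabetagammatau} provide is the only delicate point.
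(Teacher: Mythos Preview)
Your proof is correct and follows the paper's own argument essentially verbatim: the paper also runs the proof of Lemma~\ref{indumsa3} with $L=\ell^\gamma$, takes $N=N_\ell=\lfloor\ell^{(\gamma-1)\widetilde{\zeta}}\rfloor$ (you use $\lceil 2\ell^{(\gamma-1)\zeta}\rceil$, which works equally well), and in the final iteration uses $\|y-x\|\geq L_\tau$ together with $1+(\gamma-1)\widetilde{\zeta}<\gamma\tau$ to obtain $n\geq\tfrac14\ell^{\gamma\tau-1}$ and hence the stated $m_0$. Your sharper counting $n\gtrsim\|y-x\|/\ell$ gives a nominally better rate $m_0\gtrsim\ell^{s-1}$, which you then correctly weaken to the statement's bound.
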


\begin{proof}
We let $\mathcal{B}_N$, $\mathcal{S}_N$ and $\mathcal{E}_N$ as in the proof of Lemma~\ref{indumsa3}. We proceed as in the proof of Lemma~\ref{indumsa3}.
Using \eqref{hypmsaind3two}, \eqref{covernum} and the fact that events on disjoint boxes are independent, we have
\begin{align}\label{probbsubtwo}
\P\{\mathcal{B}^c\}&\leq\left(\tfrac{2L}{\ell}\right)^{(N+1)d}\e^{-(N+1)\ell^\zeta}=2^{(N+1)d}\ell^{(\gamma-1)(N+1)d}\e^{-(N+1)\ell^\zeta}
\\\nonumber&<\tfrac{1}{2}\e^{-\ell^{\gamma\zeta}}=\tfrac{1}{2}\e^{-L^\zeta},
\end{align}
if $N+1>\ell^{(\gamma-1)\zeta}$ and $\ell$ is sufficiently large. For this reason we take
\beq\label{numNsubtwo}
N=N_\ell=\left\lfloor\ell^{(\gamma-1)\widetilde{\zeta}}\right\rfloor\Longrightarrow\P\{\mathcal{B}_{N_\ell}^c\}
\leq\tfrac{1}{2}\e^{-L^\zeta}\qtx{for all}\ell\quad\text{sufficiently large}.
\eeq
Also, using \eqref{problssub} and \eqref{cFN}, we have,
\beq\label{probssubtwo}
\P\{\mathcal{S}_N^c\}\leq Y_{\varepsilon_0}\left(1+(L+1)^dN_\ell!(d4^d)^{N|\ell-1}\right)(L+1)^{2d}\e^{-(2\alpha-1)L^\beta}<\tfrac{1}{2}\e^{-L^\zeta}
\eeq
for sufficiently large $L$, since $(\gamma-1)\widetilde{\zeta}<(\gamma-1)\beta<\gamma\beta$ and $\zeta<\beta$. Combining \eqref{probbsubtwo}
and \eqref{probssubtwo}, we conclude that
\beq
\P\{\mathcal{E}_N\}>1-\e^{-L^\zeta}.
\eeq

To finish the proof we need to show that for all $\omega\in\mathcal{E}_N$ the box $\Lambda_L$ is $m_0$-localizing for $H_{\varepsilon,\omega}$,
where $m_0$ is given in \eqref{m000}. Following the proof of Lemma~\ref{indumsa3}, we get
$\sigma(H_{\Lambda_L})=\sigma_{\mathcal{G}}(H_{\Lambda_L})\cup\sigma_{\mathcal{B}}(H_{\Lambda_L})$ and obtain an eigensystem
$\{(\psi_x,\lambda_x)\}_{x\in\Lambda_L}$ for $H_{\Lambda_L}$. To finish the proof we need to show that $\{(\psi_x,\lambda_x)\}_{x\in\Lambda_L}$
is an $m_0$-localized eigensystem for $\Lambda_L$. We proceed as in the proof of Lemma~\ref{indumsa3}. We fix $x\in\Lambda_L$, and take $y\in\Lambda_L$
such that $\|y-x\|\geq L_\tau$, we have
\beq
n(\ell)(\ell+1)+\sum_{r=1}^R\diam\Upsilon_r+2\ell\geq L_\tau.
\eeq
where $n(\ell)$ is the number of times we used \eqref{psigoodsub}.
Thus, recalling $N=\lfloor\ell^{(\gamma-1)\widetilde{\zeta}}\rfloor$ and using \eqref{sumdiam}, we have
\beq
n(\ell)\geq\tfrac{1}{\ell+1}(L_\tau-5\ell\lfloor\ell^{(\gamma-1)\widetilde{\zeta}}\rfloor-2\ell)
\geq\tfrac{\ell}{\ell+1}\left(\tfrac{1}{2}\ell^{\gamma\tau-1}-5\ell^{(\gamma-1)\widetilde{\zeta}}-2\right)\geq\tfrac{1}{4}\ell^{\gamma\tau-1}.
\eeq
for sufficiently large $\ell$ since $(\gamma-1)\widetilde{\zeta}+1<\gamma\tau$. It follows from \eqref{repeatdecsub},
\begin{align}
|\psi_x(y)|&\leq\left(\e^{-c_2\ell^s}\right)^{\frac{1}{4}\ell^{\gamma\tau-1}}
\\\nonumber&\leq\e^{-\frac{1}{8}L^{-\left(1-\tau+\frac{1-s}{\gamma}\right)}\|y-x\|}
\end{align}
for sufficiently large $\ell$.

We conclude that $\{(\psi_x,\lambda_x)\}_{x\in\Lambda_L}$ is an $m_0$-localized eigensystem for $\Lambda_L$, where $m_0$ is given in \eqref{m000},
so the box $\Lambda_L$ is $m_0$-localizing for $H_{\varepsilon,\omega}$.
\end{proof}

\subsection{The fourth multiscale analysis}

\begin{proposition}\label{propmsa4}
Fix $\varepsilon_0>0$. There exists a finite scale $\mathcal{L}(\varepsilon_0)$ with the following property: Suppose for some scale
$L_0\geq\mathcal{L}(\varepsilon_0)$, $0<\varepsilon\leq\varepsilon_0$, and $m_0\geq L_0^{-\kappa}$,  where $0<\kappa<\tau-\gamma\beta$, we have
\beq\label{initcon4}
\inf_{x\in\R^d}\P\{\Lambda_{L_0}(x)\sqtx{is}m_0\text{-localizing for}\;H_{\varepsilon,\omega}\}\geq1-\e^{-L_0^\zeta}.
\eeq
Then, setting $L_{k+1}=L_k^\gamma$ for $k=0,1,\ldots$, we have
\beq
\inf_{x\in\R^d}\P\{\Lambda_{L_k}(x)\sqtx{is}\tfrac{m_0}{2}\text{-localizing for}\;H_{\varepsilon,\omega}\}\geq1-\e^{-L_k^\zeta}\sqtx{for}k=0,1,\ldots.
\eeq
Moreover, we have
\beq\label{res4}
\inf_{x\in\R^d}\P\{\Lambda_{L_k}(x)\sqtx{is}\tfrac{m_0}{4}\text{-localizing for}\;H_{\varepsilon,\omega}\}\geq1-\e^{-L_k^\xi}\sqtx{for all}L\geq L_0^\gamma.
\eeq
\end{proposition}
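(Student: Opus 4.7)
The plan is to mirror the structure of Propositions~\ref{propmsa2} and \ref{propmsa3}. First I would establish an induction step lemma: if $\inf_{x}\P\{\Lambda_\ell(x)\sqtx{is}m\text{-LOC for}\;H_{\varepsilon,\omega}\}\geq 1-\e^{-\ell^\zeta}$ with $m\geq\ell^{-\kappa}$ and $\kappa\in(0,\tau-\gamma\beta)$, then at scale $L=\ell^\gamma$ one obtains the analogous statement for some $M\geq m(1-C_{d,\varepsilon_0,\gamma_1,q}\ell^{-\eta})\geq L^{-\kappa}$, where $\eta=\min\{\tfrac{1-\tau}{2},\gamma\tau-1,\tau-\gamma\beta-\kappa\}>0$, with probability at least $1-\e^{-L^\zeta}$.

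The induction step is proved by copying the argument of Lemma~\ref{indumsa3} with $\sharp=m$-LOC throughout, taking the number of permitted bad boxes to be $N=\lfloor\ell^{(\gamma-1)\widetilde{\zeta}}\rfloor$ as in Proposition~\ref{indumsa3two} so that $\P\{\mathcal{B}_N^c\}\leq\tfrac12\e^{-L^\zeta}$. The level-spacing event $\mathcal{S}_N$ is controlled via Lemma~\ref{lemlsp} together with \eqref{cFN}, using $(\gamma-1)\widetilde{\zeta}<\gamma\beta$ and $\zeta<\beta$. On $\mathcal{E}_N=\mathcal{B}_N\cap\mathcal{S}_N$ one encloses the bad boxes inside $m$-LOC-buffered subsets $\Upsilon_r$ and invokes the $m$-LOC instances of Lemmas~\ref{ideigsysall}, \ref{lembuffall}, \ref{lembadall}, namely \eqref{pdec01}, \eqref{diffun2loc} and \eqref{psidec6loc}, to produce, via Hall's Marriage Theorem as in Lemma~\ref{indumsa1}, a site-indexed eigensystem $\{(\psi_x,\lambda_x)\}_{x\in\Lambda_L}$ for $H_{\Lambda_L}$. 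Alternating \eqref{pdec01} on traversal of good boxes with \eqref{psidec6loc} across buffered subsets and accumulating decay only through the former, one obtains exponential localization at rate $M\geq m_3(1-7\ell^{1-\gamma\tau})$; inserting the lower bound for $m_3$ from \eqref{m3} and using $\kappa<\tau-\gamma\beta$ yields the stated $M$ and preserves the lower bound $M\geq L^{-\kappa}$.

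Iterating produces $m_k\geq m_0\prod_{j=0}^{k-1}(1-C_{d,\varepsilon_0,\gamma_1,q}L_0^{-\eta\gamma^j})\geq m_0/2$ for $L_0$ sufficiently large, exactly as at the end of Proposition~\ref{propmsa2}, establishing the first assertion along the discrete scales $\{L_k\}$. For the ``moreover'' part extending to all $L\geq L_0^\gamma$, given such $L$ I would choose $k$ with $L_k\leq L<L_{k+1}$ and write $L=L_k^{\gamma'}$ with $\gamma'\in[1,\gamma]$; then repeat the induction step from scale $L_k$ to scale $L$ with $\gamma'$ in place of $\gamma$, starting from the mass $m_0/2$. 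The hypothesis $\xi\gamma^2<\zeta$ (contained in \eqref{zetabetagammatau}) gives $\gamma'\xi\leq\gamma\xi<\zeta/\gamma<\zeta$, so the input probability $1-\e^{-L_k^\zeta}$ dominates the required output $1-\e^{-L^\xi}$ even after the combinatorial overhead, while the mass absorbs one more contraction factor and drops from $m_0/2$ to $m_0/4$.

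The main obstacle lies in the interpolation: I need to verify that the combined probability estimate for $\mathcal{B}_N^c\cup\mathcal{S}_N^c$ together with the combinatorial factor $(2L/L_k)^{(N+1)d}$ is absorbed by $\e^{-L^\xi}$ uniformly over $\gamma'\in[1,\gamma]$, and that the one additional induction-step contraction still leaves the mass above $m_0/4$; both reduce to the chain of inequalities in \eqref{zetabetagammatau} and to the constraint $\kappa<\tau-\gamma\beta$. A secondary technical point is confirming that the infinite product $\prod_{j\geq 0}(1-C_{d,\varepsilon_0,\gamma_1,q}L_0^{-\eta\gamma^j})$ exceeds $1/2$ for $L_0$ sufficiently large, which is the standard convergence argument already appearing in the proof of Proposition~\ref{propmsa2}.
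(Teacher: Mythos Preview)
Your approach is essentially the same as the paper's, which simply cites \cite[Lemma~4.5, Proposition~4.3, Section~4.3]{EK} for both Lemma~\ref{indumsa4} and Proposition~\ref{propmsa4}, noting that the hypothesis $m\geq m_-$ there may be relaxed to $m\geq\ell^{-\kappa}$ with $0<\kappa<\tau-\gamma\beta$. Your sketch is in fact more detailed than what the paper provides.

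One minor slip: in your induction step the middle term of $\eta$ should be $\gamma\tau-(\gamma-1)\widetilde{\zeta}-1$ rather than $\gamma\tau-1$ (compare \eqref{biqM}). With $N=\lfloor\ell^{(\gamma-1)\widetilde{\zeta}}\rfloor$ bad boxes, \eqref{sumdiam} gives $\sum_r\diam\Upsilon_r\leq 5\ell N\leq 5\ell^{1+(\gamma-1)\widetilde{\zeta}}$, so the contraction factor coming from subtracting the buffered region is $1-C\ell^{1+(\gamma-1)\widetilde{\zeta}-\gamma\tau}$, not $1-7\ell^{1-\gamma\tau}$. This does not affect the argument, since the constraint $\tau>\tfrac{(\gamma-1)\beta+1}{\gamma}$ in \eqref{zetabetagammatau} together with $\widetilde{\zeta}<\beta$ still ensures $\gamma\tau-(\gamma-1)\widetilde{\zeta}-1>0$.
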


\begin{lemma}\label{indumsa4}
Fix $\varepsilon_0>0$. Suppose for some scale $\ell$, $0<\varepsilon\leq\varepsilon_0$, and $m\geq\ell^{-\kappa}$, where $0<\kappa<\tau-\gamma\beta$, we have
\beq\label{hypmsaind4}
\inf_{x\in\R^d}\P\{\Lambda_\ell(x)\sqtx{is}m\text{-localizing for}\;H_{\varepsilon,\omega}\}\geq1-\e^{-\ell^\zeta}.
\eeq
Then, if $\ell$ is sufficiently large, for $L=\ell^\gamma$ we have
\beq
\inf_{x\in\R^d}\P\{\Lambda_L(x)\sqtx{is}M\text{-localizing for}\;H_{\varepsilon,\omega}\}\geq1-\e^{-L^\zeta},
\eeq
where
\beq\label{biqM}
M\geq m\left(1-C_{d,\varepsilon_0}\ell^{-\min\left\{\frac{1-\tau}{2},\gamma\tau-(\gamma-1)\widetilde{\zeta}-1,\tau-\gamma\beta-\kappa\right\}}\right)\geq\tfrac{1}{L^\kappa}.
\eeq
\end{lemma}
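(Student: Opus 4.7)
The plan is to imitate the proof of Lemma \ref{indumsa2} (the induction step for $m^\ast$-mix localizing boxes), while importing the subexponential probability bookkeeping from Proposition \ref{indumsa3two}. Fix $\Lambda_L=\Lambda_L(x_0)$ with its suitable $\ell$-cover $\mathcal{C}_{L,\ell}$, and let $\mathcal{B}_N$ be the event that at most $N$ disjoint boxes in $\mathcal{C}_{L,\ell}$ fail to be $m$-LOC for $H_{\varepsilon,\omega}$, and $\mathcal{S}_N$ the event that $\Lambda_L$ and every $\Upsilon(\Phi)$ with $\Phi\in\mathcal{F}_N$ is $L$-level spacing. Taking $N=N_\ell=\lfloor\ell^{(\gamma-1)\widetilde{\zeta}}\rfloor$ as in \eqref{numNsubtwo}, independence on disjoint boxes gives $\P\{\mathcal{B}_{N_\ell}^c\}<\tfrac12\e^{-L^\zeta}$, and Lemma \ref{lemlsp} combined with \eqref{cFN} gives $\P\{\mathcal{S}_{N_\ell}^c\}<\tfrac12\e^{-L^\zeta}$, using $(\gamma-1)\widetilde{\zeta}<\gamma\beta$ and $\zeta<\beta$. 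Setting $\mathcal{E}_{N_\ell}=\mathcal{B}_{N_\ell}\cap\mathcal{S}_{N_\ell}$ yields $\P\{\mathcal{E}_{N_\ell}\}>1-\e^{-L^\zeta}$.

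For $\omega\in\mathcal{E}_{N_\ell}$ I would build $m$-LOC-buffered subsets $\{\Upsilon_r\}_{r=1}^R$ by the graph-theoretic recipe of \eqref{conups}. The decomposition $\sigma(H_{\Lambda_L})=\sigma_{\mathcal{G}}(H_{\Lambda_L})\cup\sigma_{\mathcal{B}}(H_{\Lambda_L})$ is established exactly as in \eqref{sigmaclaim}, using the $m$-LOC specializations of Lemma \ref{ideigsysall}(ii)(c) and Lemma \ref{lembadall}(ii) (giving decays $\e^{-m_2\ell_\tau}$ and $\e^{-m_5\ell_\tau}$) together with $m\ell_\tau\geq\ell^{\tau-\kappa}\gg d\log L$, which is the first place the constraint $\kappa<\tau-\gamma\beta<\tau$ is used. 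Defining $\mathcal{N}(x)$ by \eqref{defN} and combining \eqref{diffunloc} with \eqref{diffun2loc} gives $|\psi_\lambda(x)|\leq 3\e^{-m_4\ell_\tau}\e^{L^\beta}\leq\e^{-\frac12 m_4\ell_\tau}$ for $\lambda\notin\mathcal{N}(x)$ (since $m\ell_\tau$ dominates $L^\beta=\ell^{\gamma\beta}$ under our constraint on $\kappa$), and Hall's Marriage Theorem then produces an eigensystem $\{(\psi_x,\lambda_x)\}_{x\in\Lambda_L}$ of $H_{\Lambda_L}$ indexed by sites, as in all previous multiscale steps.

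The main obstacle is the explicit mass bound \eqref{biqM}. The case analysis is that of Lemma \ref{indumsa2}, but with \eqref{pdec01} providing the good-step decay $\e^{-m_3\|y_1-y\|}$ and \eqref{psidec6loc} providing the trivial bound $\e^{-m_5\ell_\tau}\leq 1$ per bad step. For $y\in\Lambda_L$ with $\|y-x\|\geq L_\tau$, iterating these estimates yields
\beq
|\psi_x(y)|\leq\e^{-m_3\left(\|y-x\|-\sum_r\diam\Upsilon_r-2\ell\right)}.
\eeq
Now $\sum_r\diam\Upsilon_r\leq 5\ell N_\ell\leq 5\ell^{1+(\gamma-1)\widetilde{\zeta}}$ by \eqref{sumdiam}, and the inequality $1+(\gamma-1)\widetilde{\zeta}<\gamma\tau$ follows from $\widetilde{\zeta}<\beta$ together with the bound $\tau>\tfrac{(\gamma-1)\beta+1}{\gamma}$ in \eqref{zetabetagammatau}, so the correction is $1-O(\ell^{(\gamma-1)\widetilde{\zeta}+1-\gamma\tau})$. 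Combining with \eqref{m3}, $m\geq\ell^{-\kappa}$, and $\kappa<\tau-\gamma\beta$ (which ensures that the additive loss $\ell^{\gamma\beta-\widetilde{\tau}}$ in $m_3$ is reabsorbed as a relative correction of order $\ell^{-(\tau-\gamma\beta-\kappa)}$), I expect
\beq
M\geq m\bpa{1-C_{d,\varepsilon_0}\ell^{-\min\left\{\frac{1-\tau}{2},\,\gamma\tau-(\gamma-1)\widetilde{\zeta}-1,\,\tau-\gamma\beta-\kappa\right\}}}\geq L^{-\kappa}
\eeq
for $\ell$ sufficiently large. The delicate step I would verify carefully is that each of the three exponents in the minimum is strictly positive under the parameter constraints collected in Section \ref{secmaindr}, and that the final inequality $M\geq L^{-\kappa}=\ell^{-\gamma\kappa}$ survives, using $(\gamma-1)\kappa>0$ so that $m(1-o(1))\geq\tfrac12\ell^{-\kappa}\geq\ell^{-\gamma\kappa}$ for large $\ell$.
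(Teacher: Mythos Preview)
Your proposal is correct and is precisely the argument the paper has in mind: the paper does not write out the proof but simply cites \cite[Lemma~4.5]{EK}, noting that the constant lower bound $m\geq m_-$ there may be weakened to $m\geq\ell^{-\kappa}$ with $0<\kappa<\tau-\gamma\beta$ by using the $m$-LOC versions of the lemmas in Sections~\ref{sectlem} and~\ref{sectbuff}. Your outline is exactly a reconstruction of that argument, combining the probability bookkeeping of Proposition~\ref{indumsa3two} (the choice $N_\ell=\lfloor\ell^{(\gamma-1)\widetilde{\zeta}}\rfloor$) with the per-distance decay mechanism of Lemma~\ref{indumsa2}.

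One small refinement: since $N_\ell>1$, the case analysis should follow Lemma~\ref{indumsa3} rather than Lemma~\ref{indumsa2}, i.e.\ include the case (ii)(b) where $\lambda_x\in\sigma_{\Upsilon_{\widetilde r}}(H_{\Lambda_L})$ and $y\in\Upsilon_r^{\Lambda_L,\ell/10}$ with $r\neq\widetilde r$; this yields another instance of the trivial bad-step bound \eqref{psidec6loc} and does not affect your final inequality $|\psi_x(y)|\leq\e^{-m_3(\|y-x\|-\sum_r\diam\Upsilon_r-2\ell)}$.
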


Lemma~\eqref{indumsa4} and Proposition~\eqref{propmsa4} follow from \cite[Lemma~4.5]{EK}, \cite[Proposition~4.3]{EK}, and \cite[Section~4.3]{EK}.
(Note that in \cite{EK}, they assume $m\geq m_-$ for a fixed $m_-$. However, all the results still hold when $m\geq\ell^{-\kappa},0<\kappa<\tau-\gamma\beta$.
(See the Lemmas for $\sharp$ being LOC in Sections~\ref{sectlem} and \ref{sectbuff}.))

\subsection{The proof of the bootstrap multiscale analysis}

To prove Theorem~\ref{mainthm}, first we assume \eqref{resmain}, which is the same as \eqref{initcon} with letting $Y=400$, for some length scales. We apply Proposition~\ref{propmsa1}, obtaining a sequence of length scales satisfying \eqref{res1}.
Therefore \eqref{hypmsaindone} is satisfied for some length scales. Applying Proposition~\ref{indumsa1one}, we get a length scale satisfying \eqref{resone}. It follows that \eqref{initcon2} is satisfied since $0<1-\tau+\tfrac{1}{\gamma_1}<\tau$.
We apply Proposition~\ref{propmsa2}, obtaining a sequence of length scales satisfying \eqref{res2}. Therefore, In view of Remark~\ref{rmk1}, \eqref{initcon3} is satisfied with letting $Y=400^{\frac{1}{1-s}}$. We apply Proposition~\ref{propmsa3}, obtaining a sequence of length scales satisfying \eqref{res3}.
Therefore \eqref{hypmsaind3two} is satisfied for some length scales. Applying Proposition~\ref{indumsa3two}, we get a length scale satisfying \eqref{restwo}. It follows that \eqref{initcon4} is satisfied since $0<1-\tau+\tfrac{1-s}{\gamma}<\tau-\gamma\beta$.  We apply Proposition~\ref{propmsa4}, getting \eqref{res4}, so \eqref{resmain} holds.

\section{The initial step for the bootstrap multiscale analysis}\label{sectinit}

Theorem~\ref{maincor} is an immediate consequence of Theorem~\ref{mainthm} and Proposition~\ref{propinit}.

\begin{proposition}\label{propinit}
Given $q>\tfrac{2d}{\alpha}$ and  $\varepsilon>0$, set
\beq\label{thetaprop}
\theta_{\varepsilon,L}=\frac{\left\lfloor\frac{L}{20}\right\rfloor}{\log L}\log\left(1+\frac{L^{-q}}{2d\varepsilon}\right).
\eeq
Then
\begin{align}\label{probploc1}
&\inf_{x\in\R^d}\P\{\Lambda_L(x)\sqtx{is}\theta_{\varepsilon,L}\text{-polynomially localizing for}\;H_{\varepsilon,\omega}\}
\\\nonumber&\hspace{120pt}\geq1-\tfrac{1}{2}K(L+1)^{2d}\left(8d\varepsilon+2L^{-q}\right)^\alpha.
\end{align}
In particular, given $\theta>0$ and $P_0>0$, there exists a finite scale $\mathcal{L}(q,\theta,P_0)$ such that for all $L\geq\mathcal{L}(q,\theta,P_0)$ and $0<\varepsilon\leq \tfrac{1}{4d}L^{-q}$ we have
\begin{align}\label{probploc2}
 \inf_{x\in\R^d}\P\{\Lambda_L(x)\sqtx{is}\theta\text{-polynomially localizing for}\;H_{\varepsilon,\omega}\}
\geq   1-P_0.
\end{align}
\end{proposition}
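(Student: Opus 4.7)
The plan is to identify a single probabilistic event on which both the polynomial level spacing and the localization of every eigenfunction follow from elementary deterministic arguments, reducing the probability estimate to one H\"older calculation on $\mu$. The event of interest is
\beq
\Omega_L = \Bset{\omega:\abs{\omega_x - \omega_y} \geq 4d\eps + L^{-q} \sqtx{for all distinct} x,y \in \Lambda_L}.
\eeq
For $x\neq y$, Fubini together with the H\"older hypothesis $S_\mu(t)\leq Kt^\alpha$ yields $\P\bset{\abs{\omega_x-\omega_y}<t}\leq S_\mu(2t)\leq K(2t)^\alpha$; summing over the at most $\tfrac12(L+1)^{2d}$ unordered pairs with $t=4d\eps+L^{-q}$ produces exactly the bound in \eqref{probploc1}.

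Working deterministically on $\Omega_L$, the perturbative inequality $\norm{H_{\eps,\omega,\Lambda_L}-V_{\omega,\Lambda_L}}\leq 2d\eps$ and the fact that the diagonal values $\{\omega_x\}_{x\in\Lambda_L}$ have pairwise separation at least $4d\eps+L^{-q}$ allow the min--max principle to produce a bijection $x\mapsto\lambda_x$ from $\Lambda_L$ onto $\sigma(H_{\eps,\omega,\Lambda_L})$ with $\abs{\lambda_x-\omega_x}\leq 2d\eps$. The triangle inequality then gives $\abs{\lambda_x-\lambda_y}\geq L^{-q}$, establishing polynomial level spacing; in particular the eigenvalues are simple, so the corresponding normalized eigenvectors $\{\vphi_x\}_{x\in\Lambda_L}$ form an eigensystem indexed by $\Lambda_L$. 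To bound $\vphi_x$, iterate the eigenvalue equation pointwise: extending $\vphi_x$ by zero outside $\Lambda_L$ and using $\abs{V(y)-\lambda_x}\geq 2d\eps+L^{-q}$ for $y\neq x$, one obtains
\beq
\abs{\vphi_x(y)}\leq r\max_{\abs{z-y}=1}\abs{\vphi_x(z)},\qquad r:=\frac{2d\eps}{2d\eps+L^{-q}}<1.
\eeq
Setting $f(y)=r^{-\norm{y-x}}\abs{\vphi_x(y)}$ converts this into the discrete subharmonic inequality $f(y)\leq\max_{\abs{z-y}=1}f(z)$ on $\Z^d\setminus\{x\}$, with $f\equiv 0$ outside $\Lambda_L$. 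The discrete maximum principle therefore forces $f\leq f(x)\leq 1$, so $\abs{\vphi_x(y)}\leq r^{\norm{y-x}}$, and for $\norm{y-x}\geq L'=\lfloor L/20\rfloor$ the defining identity $\theta_{\eps,L}\log L=L'\log(1/r)$ turns this into the polynomial bound $\abs{\vphi_x(y)}\leq L^{-\theta_{\eps,L}}$ required by Definition~\ref{defallloc}(i). Combined with level spacing, this shows $\Lambda_L$ is $\theta_{\eps,L}$-polynomially localizing on $\Omega_L$, proving \eqref{probploc1}.

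For \eqref{probploc2}, when $\eps\leq\tfrac{1}{4d}L^{-q}$ one has $\log(1+L^{-q}/(2d\eps))\geq\log 3$, so $\theta_{\eps,L}\geq(L'/\log L)\log 3\to\infty$, while the bound in \eqref{probploc1} is dominated by $\tfrac12 K\cdot 4^\alpha (L+1)^{2d}L^{-q\alpha}$, which vanishes as $L\to\infty$ since $q\alpha>2d$. Taking $L$ sufficiently large accommodates both $\theta_{\eps,L}\geq\theta$ and error $\leq P_0$. I expect no serious obstacle: the only delicate point is choosing the separation threshold $4d\eps+L^{-q}$ so that, after absorbing the $2d\eps$ perturbative shift of the eigenvalues, the full $L^{-q}$ gap survives in the denominator of the pointwise iteration — this is exactly what reproduces the logarithm appearing in the definition \eqref{thetaprop} of $\theta_{\eps,L}$.
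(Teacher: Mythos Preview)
Your approach is essentially identical to the paper's: both isolate the event $\Omega_L=\{|\omega_x-\omega_y|\geq 4d\eps+L^{-q}\text{ for all }x\neq y\}$, bound its complement via the H\"older condition (giving exactly \eqref{probploc1}), and on $\Omega_L$ derive level spacing and eigenfunction decay by elementary perturbation of the diagonal operator $V_\omega$. The paper packages the deterministic step as a citation to \cite[Lemma~4.4]{EK}, whereas you spell it out; the content is the same.

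There is, however, a small gap in your decay argument. The ``discrete maximum principle'' you invoke for $f(y)=r^{-\|y-x\|}|\vphi_x(y)|$ is not valid in the form stated: the inequality $f(y)\leq\max_{|z-y|=1}f(z)$ on $\Z^d\setminus\{x\}$ together with $f\equiv 0$ at infinity does \emph{not} force $f\leq f(x)$. (On $\Z$, take $f(0)=\tfrac12$, $f(1)=f(2)=1$, $f\equiv 0$ elsewhere, with $x=0$; the inequality holds on $\Z\setminus\{0\}$ but the maximum is not at $0$.) The correct replacement is a direct induction on $n=\|y-x\|$: from $\|\vphi_x\|_\infty\leq 1$ and the pointwise inequality $|\vphi_x(y)|\leq r\max_{|z-y|=1}|\vphi_x(z)|$ for $y\neq x$, one gets $|\vphi_x(y)|\leq r^n$ for all $y$ with $\|y-x\|\geq n$, since every neighbor $z$ of such a $y$ satisfies $\|z-x\|\geq n-1$. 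This yields $|\vphi_x(y)|\leq r^{\|y-x\|}$ and the rest of your argument goes through unchanged. (The paper's cited lemma gives the bound with the $\ell_1$-distance $|y-x|_1$ and then uses $\|y-x\|\leq|y-x|_1$; your $\ell_\infty$ version is equally valid by the same induction.)
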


Proposition~\ref{propinit} shows that the starting hypothesis for the bootstrap multiscale analysis of Theorem~\ref{mainthm}  can be fulfilled .

To prove Proposition~\ref{propinit}, we will use the following lemma given in \cite[Lemma~4.4]{EK}.
\begin{lemma}[{\cite[Lemma~4.4]{EK}}]\label{leminit}
Let $H_\varepsilon=-\varepsilon\Delta+V$ on $\ell^2(\Z^d)$, where $V$ is a bounded potential and $\varepsilon>0$. Let $\Theta\subset\Z^d$, and suppose there is $\eta>0$ such that
\beq\label{Vdif}
|V(x)-V(y)|\geq\eta\qtx{for all}x,y\in\Theta, x\neq y.
\eeq
Then for $\varepsilon<\frac{\eta}{4d}$ the operator $H_{\varepsilon,\Theta}$ has an eigensystem $\{(\psi_x,\lambda_x)\}_{x\in\Theta}$ such that
\beq\label{lambdadif}
|\lambda_x-\lambda_y|\geq\eta-4d\varepsilon>0\qtx{for all}x,y\in\Theta,x\neq y,
\eeq
and for all $y\in\Theta$ we have
\beq\label{initdec}
|\psi_y(x)|\leq\left(\tfrac{2d\varepsilon}{\eta-2d\varepsilon}\right)^{|x-y|_1}\qtx{for all}x\in\Theta.
\eeq
\end{lemma}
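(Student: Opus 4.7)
The plan is to prove this as a straightforward perturbative statement: $H_{\varepsilon,\Theta}=V_\Theta-\varepsilon\Delta_\Theta$ is a norm perturbation of size at most $2d\varepsilon$ of the diagonal operator $V_\Theta$, whose eigenvalues are exactly $\{V(x)\}_{x\in\Theta}$ with eigenvectors $\{\delta_x\}_{x\in\Theta}$. First, list the $V(x)$'s in increasing order as $V(x_1)<V(x_2)<\cdots<V(x_N)$ (with $N=|\Theta|$) and list the eigenvalues of $H_{\varepsilon,\Theta}$ in increasing order as $\mu_1\leq\cdots\leq\mu_N$. Weyl's inequality, combined with $\|\varepsilon\Delta_\Theta\|\leq 2d\varepsilon$, yields $|\mu_i-V(x_i)|\leq 2d\varepsilon$ for each $i$. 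Define the labeling $\lambda_{x_i}:=\mu_i$ and let $\psi_{x_i}$ be a normalized eigenfunction for $\lambda_{x_i}$, which is unique up to phase once all eigenvalues are shown to be simple.

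The separation \eqref{lambdadif} then follows immediately: for $i<j$ the bound $\mu_j\geq V(x_j)-2d\varepsilon\geq V(x_i)+\eta-2d\varepsilon$ together with $\mu_i\leq V(x_i)+2d\varepsilon$ gives $\mu_j-\mu_i\geq\eta-4d\varepsilon$, positive by hypothesis. The same bookkeeping produces the key off-diagonal bound
\begin{equation*}
|V(x)-\lambda_y|\geq \eta-2d\varepsilon>0 \qtx{for all} x,y\in\Theta,\;x\neq y,
\end{equation*}
since $\lambda_y$ lies within $2d\varepsilon$ of $V(y)$ while $V(x)$ is at least $\eta$ away from $V(y)$.

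For the decay \eqref{initdec}, evaluate the eigenvalue equation $(H_{\varepsilon,\Theta}-\lambda_y)\psi_y=0$ at $x\neq y$ to get $(V(x)-\lambda_y)\psi_y(x)=\varepsilon\sum_{z\in\Theta,\,|z-x|=1}\psi_y(z)$. Dividing by $V(x)-\lambda_y$ and using the off-diagonal bound yields the one-step contraction
\begin{equation*}
|\psi_y(x)|\leq r\max_{z\in\Theta,\,|z-x|=1}|\psi_y(z)|\qtx{with}r=\tfrac{2d\varepsilon}{\eta-2d\varepsilon}<1.
\end{equation*}
To iterate this cleanly, introduce $a_k:=\sup\{|\psi_y(u)|:u\in\Theta,\,|u-y|_1\geq k\}$, with $a_0\leq\|\psi_y\|_\infty\leq 1$. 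For $k\geq 1$, any $u$ with $|u-y|_1\geq k$ is distinct from $y$, so the contraction applies, and every neighbor $z$ of such a $u$ satisfies $|z-y|_1\geq k-1$; hence $a_k\leq r\,a_{k-1}$, and induction gives $a_k\leq r^k$. Specializing to $k=|x-y|_1$ produces \eqref{initdec}.

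The only step that requires a moment of care is checking that the rank-matching labeling is consistent with the decay claim, i.e., that the eigenfunction attached to $\lambda_{x_i}$ really concentrates at $x_i$ rather than at some other site. This is automatic: the decay argument uses only the off-diagonal separation $|V(z)-\lambda_y|\geq\eta-2d\varepsilon$ for $z\neq y$, which is itself a consequence of how the labeling was defined. No further analysis of the fine structure of $\psi_y$ is needed.
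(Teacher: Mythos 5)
Your proof is correct. Note that the present paper does not prove this lemma at all: it is quoted verbatim from \cite[Lemma~4.4]{EK}, so there is no in-paper argument to compare with. Your route (Weyl's inequality applied to the perturbation $-\varepsilon\Delta_\Theta$ of the diagonal operator $V_\Theta$, giving $|\mu_i-V(x_i)|\le 2d\varepsilon$, hence the spacing \eqref{lambdadif} and the off-diagonal separation $|V(x)-\lambda_y|\ge\eta-2d\varepsilon$ for $x\ne y$, followed by iteration of the eigenvalue equation away from $y$) is the standard perturbative argument and yields exactly the stated constants; the spacing bound also shows all eigenvalues are simple, so the rank-matched labeling does produce an eigensystem, and your observation that the decay estimate needs only the off-diagonal separation (not any concentration statement at $y$) is exactly right. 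Two points you use implicitly are worth a sentence each: the hypotheses force $\Theta$ to be finite (the values $V(x)$, $x\in\Theta$, are $\eta$-separated inside a bounded interval because $V$ is bounded), which legitimizes the finite-dimensional ordering $N=|\Theta|$ and Weyl's inequality; and at a site $u\ne y$ with no nearest neighbors in $\Theta$ the eigenvalue equation gives $\psi_y(u)=0$ outright, so the one-step contraction (a maximum over a possibly empty neighbor set) still gives $a_k\le r\,a_{k-1}$ with $r=\tfrac{2d\varepsilon}{\eta-2d\varepsilon}<1$. With these remarks the induction on $a_k$ and the conclusion \eqref{initdec} are airtight.
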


\begin{proof}[Proof of Proposition~\ref{propinit}]
Let $\varepsilon>0$ and $\Lambda_L=\Lambda_L(x_0)$ for some $x_0\in\R^d$. Let $\eta=4d\varepsilon+L^{-q}$ and suppose
\beq\label{VdiffL}
|V(x)-V(y)|\geq\eta\qtx{for all}x,y\in\Theta, x\neq y.
\eeq
It follows from Lemma~\ref{leminit} that $H_{\varepsilon,\Lambda_L}$ has an eigensystem $\{(\psi_x,\lambda_x)\}_{x\in\Lambda_L}$ satisfying
\eqref{lambdadif} and \eqref{initdec}. We conclude from \eqref{lambdadif} that $\Lambda_L$ is polynomially level spacing for $H_\varepsilon$.
Moreover, using \eqref{initdec} and $\|x\|\leq|x|_1$, for all $y,x\in\Lambda_L$ with $\|x-y\|\geq L'$ we have
\begin{align}
|\psi_y(x)|&\leq\left(\tfrac{2d\varepsilon}{\eta-2d\varepsilon}\right)^{\|x-y\|}=L^{-\frac{\|x-y\|}{\log L}\log\left(\frac{\eta-2d\varepsilon}{2d\varepsilon}\right)}
\\\nonumber&=L^{-\frac{\|x-y\|}{\log L}\log\left(1+\frac{L^{-q}}{2d\varepsilon}\right)}\leq L^{-\theta_{\varepsilon,L}}
\end{align}
with $\theta_{\varepsilon,L}$ as in \eqref{thetaprop}. Therefore $\Lambda_L(x)$ is $\theta$-polynomially localizing.

We have
\begin{align}
&\P\{\Lambda_L\sqtx{is not}\theta_{\varepsilon,L}\text{-polynomially localizing}\}\leq\P\{\text{\eqref{VdiffL} does not hold}\}
\\\nonumber&\qquad\qquad\leq\tfrac{(L+1)^{2d}}{2}S_\mu\left(2\left(4d\varepsilon+L^{-q}\right)\right)\leq\tfrac{1}{2}K{(L+1)^{2d}}\left(8d\varepsilon+2L^{-q}\right)^\alpha,
\end{align}
which yields \eqref{probploc1}. (We assumed $8d\varepsilon+2\L^{-q}\leq1$; if not \eqref{probploc1} holds trivially.)

If $0<\varepsilon\leq\tfrac{1}{4d}L^{-q}$, for sufficiently large $L$ we have $\theta_{\varepsilon,L}\geq\theta$, and
\beq
\inf_{x\in\R^d}\P\{\Lambda_L(x)\sqtx{is}\theta\text{-polynomially localizing for}\;H_{\varepsilon,\omega}\}
\geq1-P_0,
\eeq
since $\alpha q-2d>0$.
\end{proof}

\smallskip
\begin{acknowledgement}  A.K. wants to thank Alexander Elgart for many discussions.
\end{acknowledgement}

\end{document}